\title{\bf Geometry on the
Wasserstein space \\
over a compact Riemannian manifold}
\author{Hao DING$^{1,2}$\footnote{Email:  dinghao16@mails.ucas.ac.cn}
\quad Shizan FANG$^1$\footnote{Email:Shizan.Fang@u-bourgogne.fr}
\vspace{3mm}\\
{\footnotesize $^1$Institut de Math\'ematiques de Bourgogne, UMR 5584 CNRS, }\\
{\footnotesize Universit\'e de Bourgogne Franche-Comt\'e, F-21000 Dijon, France}\\
{\footnotesize $^2$Institute of Applied Mathematics, Academy of Mathematics and Systems Science,}\\
{\footnotesize Chinese Academy of Sciences, Beijing 100190, China}
}
\date{March 29, 2021}
\def\N{\mathbb{N}}
\def\R{\mathbb{R}}
\def\P{\mathbb{P}}
\def\TT{\bar{\mathbf{T}}}
\def\bn{\bar{\nabla}}
\def\bD{\bar{D}}
\def\D{{\mathbb D}}
\def\C{\mathcal{C}}
\def\L{\mathcal L}
\def\div{\textup{div}}
\def\eps{\varepsilon}
\def\<{\langle}
\def\>{\rangle}
\let \dis=\displaystyle
\let\ra=\rightarrow
\newtheorem{theorem}{Theorem}[section]
\newtheorem{lemma}[theorem]{Lemma}       
\newtheorem{proposition}[theorem]{Proposition}
\newtheorem{remark}[theorem]{Remark}
\newtheorem{definition}[theorem]{Definition}
\begin{document}

\maketitle
\makeatletter 
\renewcommand\theequation{\thesection.\arabic{equation}}
\@addtoreset{equation}{section}
\makeatother 

\vspace{-8mm}
\begin{abstract} We will revisit the intrinsic differential geometry of the Wasserstein space over a Riemannian manifold, 
due to a series of papers by Otto, Otto-Villani, Lott, Ambrosio-Gigli-Savar\'e and so on. 
\end{abstract}

\vskip 3mm
\textbf{MSC 2010}: 58B20, 60J45

\textbf{Keywords}: Constant vector fields, measures having divergence, Levi-Civita connection, parallel translations,
Mckean-Vlasov equations.

\section{Introduction}\label{sect1}

\quad  
For the sake of simplicity, we will consider in this paper  a connected compact Riemannian manifold $M$ of dimension $m$.
We denote by  $d_M$ the Riemannian distance and $dx$ the Riemannian measure on $M$ such that $\int_M dx=1$.
Since the diameter of $M$ is finite, 
any probability measure $\mu$ on $M$ is such that $\int_M d_M^2(x_0, x)\, d\mu(x)<+\infty$, where $x_0$ is a fixed point 
of $M$. As usual, we denote by $\P_2(M)$ the space of probability measures on $M$, endowed with  the Wasserstein distance $W_2$ 
defined by

\begin{equation*}
W_2^2(\mu_1, \mu_2) =\inf\Bigl\{ \int_{M\times M} d_M^2(x,y)\,\pi(dx,dy),\quad \pi\in \C(\mu_1, \mu_2)\Bigr\},
\end{equation*}
where $\C(\mu_1, \mu_2)$ is the set of probability measures $\pi$ on $M\times M$, having $\mu_1, \mu_2$ as two 
marginal laws. It is well known that $\P_2(M)$ endowed with $W_2$ is a Polish space. In this compact case, the weak convergence 
for probability measures is metrized by $W_2$; therefore $(\P_2(M), W_2)$ is a compact Polish space.

\vskip 2mm
The introduction of tangent spaces of $\P_2(M)$ can go back to the early work \cite{OV}, as well as in 
\cite{Otto}. A more rigorous treatment was given in \cite{AGS}. In differential geometry, for a smooth curve 
$\{c(t);\ t\in [0,1]\}$ on a manifold $M$, the derivative $c'(t)$ with respect to the time $t$ is in the tangent space :
$\dis c'(t)\in T_{c(t)}M$. A classical result says that for an absolutely continuous curve $\{c(t);\ t\in [0,1]\}$ on  $M$, 
the derivative $c'(t)\in T_{c(t)}M$ exists for almost all $t\in [0, 1]$. Following \cite{AGS}, we say that a curve $\{c(t);\ t\in [0,1]\}$ 
on $\P_2(M)$ is absolutely continuous in $L^2$ if there exists $k\in L^2([0,1])$ such that

\begin{equation*}\label{eq1.1}
W_2(c(t_1), c(t_2))\leq \int_{t_1}^{t_2} k(s)\, ds,\quad t_1<t_2.
\end{equation*}

The following result is our starting point: 

\begin{theorem}[see \cite{AGS}, Theorem 8.3.1] \label{th1}Let $\{c_t; \ t\in [0,1]\}$ 
be an absolutely continuous curve on $\P_2(M)$ in $L^2$, then there exists a Borel vector field $Z_t$ on $M$ 
such that 
\begin{equation*}
\int_{[0,1]}\Bigl[\!\! \int_M |Z_t(x)|_{T_xM}^2\, dc_t(x)\Bigr]\, dt<+\infty
\end{equation*}

and the following continuity equation 
\begin{equation}\label{eq1.2}
\frac{dc_t}{dt}+ \nabla\cdot (Z_t\, c_t)=0,
\end{equation}
holds in the sense of distribution. Uniqueness  to \eqref{eq1.2} holds if moreover $Z_t$ is imposed to be in 
\begin{equation*}
 \overline{\bigl\{ \nabla\psi,\ \psi\in C^\infty(M) \bigr\}}^{L^2(c_t)}.
\end{equation*}
\end{theorem}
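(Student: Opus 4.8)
The plan is to follow the duality route: one realizes $Z_t$ as the Riesz representative, inside the Hilbert space $L^2(c_t)$ of square-integrable vector fields, of the linear functional ``time derivative of the curve tested against a function'', and the prescribed $L^2$-closure condition is exactly what pins this representative down uniquely. First I recall the general metric-space fact that, $\{c_t\}$ being absolutely continuous in $L^2$, the metric derivative
\[
|c'|(t):=\lim_{s\to t}\frac{W_2(c_s,c_t)}{|s-t|}
\]
exists for a.e.\ $t\in[0,1]$, belongs to $L^2([0,1])$, and controls increments via $W_2(c_{t_1},c_{t_2})\le\int_{t_1}^{t_2}|c'|(s)\,ds$; in particular $t\mapsto c_t$ is $W_2$-continuous.

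Next I fix $\psi\in C^\infty(M)$ and establish the crucial estimate. For $s<t$ pick an optimal coupling $\pi_{s,t}\in\C(c_s,c_t)$ and a Borel choice of minimizing geodesics $\gamma_{x,y}:[0,1]\to M$ from $x$ to $y$, so $|\dot\gamma_{x,y}(r)|=d_M(x,y)$ and $\mu^{s,t}_r:=(e_r)_\#\pi_{s,t}$, with $e_r(x,y)=\gamma_{x,y}(r)$, is a constant-speed $W_2$-geodesic joining $c_s$ to $c_t$, whence $W_2(\mu^{s,t}_r,c_t)=(1-r)\,W_2(c_s,c_t)$. Writing $\psi(y)-\psi(x)=\int_0^1\langle\nabla\psi(\gamma_{x,y}(r)),\dot\gamma_{x,y}(r)\rangle\,dr$, integrating against $\pi_{s,t}$ and applying Cauchy--Schwarz in the $\pi_{s,t}$-variable gives
\[
\Bigl|\int_M\psi\,dc_t-\int_M\psi\,dc_s\Bigr|\le W_2(c_s,c_t)\int_0^1\Bigl(\int_M|\nabla\psi|^2\,d\mu^{s,t}_r\Bigr)^{1/2}dr .
\]
Hence $t\mapsto\int_M\psi\,dc_t$ is absolutely continuous; letting $s\to t$, using $W_2(\mu^{s,t}_r,c_t)\le W_2(c_s,c_t)\to0$ together with the continuity and boundedness of $|\nabla\psi|^2$ on the compact $M$, one obtains for a.e.\ $t$
\[
\Bigl|\frac{d}{dt}\int_M\psi\,dc_t\Bigr|\le|c'|(t)\,\Bigl(\int_M|\nabla\psi|^2\,dc_t\Bigr)^{1/2},
\]
and choosing one common full-measure set of $t$'s valid for a countable $C^1$-dense family of $\psi$'s, this holds for all $\psi\in C^\infty(M)$ at a.e.\ $t$.

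Now fix such a $t$. By the last inequality (applied to $\psi_1-\psi_2$) the assignment $\nabla\psi\mapsto\frac{d}{dt}\int_M\psi\,dc_t$ is well defined on $V_t:=\{\nabla\psi:\psi\in C^\infty(M)\}\subset L^2(c_t)$, linear, and bounded with norm $\le|c'|(t)$; it extends uniquely to the closure $H_t:=\overline{V_t}^{\,L^2(c_t)}$, and the Riesz representation theorem yields a unique $Z_t\in H_t$ with
\[
\int_M\langle Z_t,\nabla\psi\rangle\,dc_t=\frac{d}{dt}\int_M\psi\,dc_t\quad(\psi\in C^\infty(M)),\qquad\Bigl(\int_M|Z_t|^2\,dc_t\Bigr)^{1/2}\le|c'|(t).
\]
Integrating over $[0,1]$ gives $\int_{[0,1]}\int_M|Z_t|^2\,dc_t\,dt\le\int_0^1|c'|^2(t)\,dt<\infty$, and the displayed identity is precisely \eqref{eq1.2} tested against $C^\infty(M)$, hence \eqref{eq1.2} holds in the sense of distributions. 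Uniqueness under the closure constraint is then immediate: if $\tilde Z_t\in H_t$ also solves \eqref{eq1.2}, then $Z_t-\tilde Z_t$ is $L^2(c_t)$-orthogonal to $V_t$, hence to $H_t$, while lying in $H_t$, so $Z_t=\tilde Z_t$ in $L^2(c_t)$.

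The point requiring work beyond these formal steps — and the main obstacle — is the \emph{joint} Borel measurability of $(t,x)\mapsto Z_t(x)$, needed so that $Z_t$ is a genuine Borel vector field and the double integral is meaningful: one fixes a countable family $\{\psi_j\}$ dense in $C^\infty(M)$, observes that each $t\mapsto\frac{d}{dt}\int_M\psi_j\,dc_t$ is Borel, and recovers $Z_t$ as a pointwise-in-$t$ $L^2(c_t)$-limit of the finite-dimensional least-squares combinations of $\{\nabla\psi_j\}$ determined by these data, verifying that the construction is Borel in $t$ (a measurable-selection argument); the existence of a Borel selection of minimizing geodesics used above is a similar, standard measurability issue. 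Everything else is the computation just sketched.
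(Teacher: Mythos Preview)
The paper does not give its own proof of this theorem: it is quoted verbatim from \cite{AGS}, Theorem 8.3.1, and used as the starting point for the rest of the paper. So there is no in-paper argument to compare against.

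Your sketch is essentially the standard proof from \cite{AGS}: existence of the metric derivative $|c'|(t)\in L^2$, the key estimate $\bigl|\frac{d}{dt}\int_M\psi\,dc_t\bigr|\le |c'|(t)\,\|\nabla\psi\|_{L^2(c_t)}$ obtained via an optimal coupling and geodesic interpolation, Riesz representation on the closed subspace $H_t=\overline{\{\nabla\psi\}}^{L^2(c_t)}$ to produce $Z_t$ with $\|Z_t\|_{L^2(c_t)}\le|c'|(t)$, and the orthogonality argument for uniqueness. The steps are correct, and you rightly flag the two genuine technical points (Borel selection of minimizing geodesics, and joint Borel measurability of $(t,x)\mapsto Z_t(x)$); these are precisely the issues that \cite{AGS} handles with care (via disintegration and measurable-selection arguments), and your indicated route through finite-dimensional least-squares approximations over a countable dense family is one viable way to do it. In short: the proposal is sound and matches the source the paper cites; there is simply nothing further in the present paper to set it against.
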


\vskip 2mm

In this work, we define the tangent space $\TT_\mu$ of $\P_2(M)$ at $\mu$ by 

\begin{equation}\label{eq1.2-1}
\TT_\mu = \overline{\bigl\{ \nabla\psi,\ \psi\in C^\infty(M) \bigr\}}^{L^2(\mu)},
\end{equation}
the closure of gradients of smooth functions in the space $L^2(\mu)$. Equation \eqref{eq1.2} implies that for almost all $t\in [0,1]$,
\begin{equation}\label{eq1.3}
\frac{d}{dt} \int_M f(x)\, dc_t(x)=\int_M \langle \nabla f(x), Z_t(x)\rangle_{T_xM}\, dc_t(x),\quad f\in C^1(M).
\end{equation}

We will say that $Z_t$ is the intrinsic derivative of $c_t$ and use the notation 
\begin{equation*}
\frac{{d}^I c_t}{dt}=Z_t \in \TT_{c_t}.
\end{equation*}

\vskip 2mm
In what follows, we will describe the tangent space $\TT_\mu$ with the least conditions as possible on the measure $\mu$.
Consider the quadratic form defined by 
\begin{equation*}
{\mathcal E}(\psi)=\int_ M |\nabla\psi(x)|^2\, d\mu(x),\quad \psi\in C^1(M).
\end{equation*}

We assume that there is a constant $C_\mu>0$ such that
\begin{equation}\label{eq1.3.1}
\int_M (\psi -\langle\psi\rangle )^2\, d\mu \leq C_\mu\, \int_M |\nabla\psi|^2\, d\mu,
\end{equation}
where $\dis \langle\psi\rangle=\int_M \psi(x)\, dx$.  The condition \eqref{eq1.3.1} is satisfied if $\mu$ admits a positive 
density $\rho>0$: $d\mu = \rho\, dx$. In fact, let 
\begin{equation*}
\beta_1=\inf_{x\in M} \rho(x) >0, \quad \beta_2=\sup_{x\in M} \rho(x)<+\infty.
\end{equation*}

Since $M$ is compact, the following Poincar\'e inequality holds :
\begin{equation*}
\int_M (\psi -\langle\psi\rangle )^2\, dx \leq C\, \int_M |\nabla\psi|^2\, dx,
\end{equation*}
then
\begin{equation*}
\int_M (\psi -\langle\psi\rangle )^2\, d\mu \leq \frac{C\beta_2}{\beta_1}\, \int_M |\nabla\psi|^2\, d\mu.
\end{equation*}

\vskip 2mm
Now let $Z\in \TT_\mu$; there is a sequence of functions $\psi_n\in C^\infty(M)$ such that 
$\dis Z=\lim_{n\ra +\infty} \nabla\psi_n$ in $L^2(\mu)$. By changing $\psi_n$ to $\psi_n-\langle\psi_n\rangle$
and by condition \eqref{eq1.3.1}, $\{\psi_n;\ n\geq 1\}$ is a Cauchy sequence in $L^2(\mu)$. If the quadratic form 
${\mathcal E}(\psi)$ is closable in $L^2(\mu)$, 
then there exists a function $\varphi_\mu$ in the Sobolev space  $\D_1^2(\mu)$ such that $Z=\nabla\varphi_\mu$, 
where $\D_1^2(\mu)$ is the closure of 
$C^\infty(M)$ with respect to the norm 
\begin{equation*}
||\varphi||_{\D_1^2(\mu)}^2:= \int_M |\varphi(x)|^2\, d\mu(x) + \int_M |\nabla\varphi(x)|^2\, d\mu(x).
\end{equation*}

A sufficient condition to insure the closability for ${\mathcal E}$ is that the formula of integration by parts 
holds  for $\mu$; more precisely, for any $C^1$ vector field $Z$ on $M$, there exists a function 
denoted by $\div_\mu(Z)\in L^2(\mu)$ such that
\begin{equation}\label{eq1.4}
\int_M \langle \nabla f(x), Z(x)\rangle_{T_xM}\, d\mu(x)=-\int_M f(x)\,\div_\mu(Z)(x),\quad f\in C^1(M).
\end{equation}
\vskip 2mm

\begin{definition}\label{def1.1}
We say that  the measure $\mu$ is a {\it measure having divergence}
 if $\div_\mu(Z)\in L^2(\mu)$ exists. We will use the notation
 \begin{equation*}
 \P_{\div}(M)
 \end{equation*}
 to denote the set of probability measures on $M$ having strictly positive continuous density and satisfying conditions \eqref{eq1.4}.
 \end{definition}
\vskip 2mm

\begin{proposition}\label{prop1.1} For a measure $\mu\in\P_{\div}(M)$, we have
\begin{equation*}
\TT_\mu=\bigl\{\nabla\psi;\ \psi\in \D_1^2(\mu) \bigr\}.
\end{equation*}
\end{proposition}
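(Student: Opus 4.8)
The plan is to prove the two inclusions separately, with the nontrivial work lying in showing $\TT_\mu \supseteq \{\nabla\psi : \psi\in\D_1^2(\mu)\}$. The inclusion $\TT_\mu \subseteq \{\nabla\psi : \psi\in\D_1^2(\mu)\}$ is essentially already established in the discussion preceding the proposition: given $Z\in\TT_\mu$, write $Z=\lim_n \nabla\psi_n$ in $L^2(\mu)$ with $\psi_n\in C^\infty(M)$; since $\mu\in\P_{\div}(M)$ has positive continuous density, the Poincar\'e inequality \eqref{eq1.3.1} holds, so after subtracting the means $\langle\psi_n\rangle$ we get that $\{\psi_n\}$ is Cauchy in $L^2(\mu)$; the existence of $\div_\mu$ from \eqref{eq1.4} gives closability of $\mathcal E$, hence the limit $\varphi_\mu := \lim_n \psi_n \in \D_1^2(\mu)$ satisfies $\nabla\varphi_\mu = Z$. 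I would just assemble these observations into a clean paragraph.

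For the reverse inclusion, let $\psi\in\D_1^2(\mu)$. By definition of $\D_1^2(\mu)$ there is a sequence $\psi_n\in C^\infty(M)$ with $\psi_n\to\psi$ in $\D_1^2(\mu)$, i.e. both $\psi_n\to\psi$ and $\nabla\psi_n\to\nabla\psi$ in $L^2(\mu)$. In particular $\nabla\psi_n$ is a Cauchy sequence in $L^2(\mu)$ whose limit is $\nabla\psi$; since each $\nabla\psi_n$ lies in the set $\{\nabla\varphi : \varphi\in C^\infty(M)\}$ and $\TT_\mu$ is by definition \eqref{eq1.2-1} the $L^2(\mu)$-closure of that set, we conclude $\nabla\psi\in\TT_\mu$. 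This direction is genuinely routine once one unwinds the definitions; the only subtlety is to make sure that the gradient operator $\nabla$ is well-defined on $\D_1^2(\mu)$, i.e. that the value of $\nabla\psi$ does not depend on the approximating sequence — but this is precisely what closability of $\mathcal E$ (guaranteed by \eqref{eq1.4}) delivers, since if $\psi_n\to 0$ in $L^2(\mu)$ with $\nabla\psi_n$ Cauchy, closability forces $\nabla\psi_n\to 0$.

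The main obstacle, such as it is, is bookkeeping around closability: one must be careful that $\D_1^2(\mu)$ is a genuine function space (a quotient issue does not arise here because $\mu$ has full support, so $L^2(\mu)$-classes are honest) and that the map $\psi\mapsto\nabla\psi$ extends continuously from $C^\infty(M)$ to $\D_1^2(\mu)$ with closed graph. I would state once, citing \eqref{eq1.4}, that $(\mathcal E, C^\infty(M))$ is closable in $L^2(\mu)$ with closure having domain $\D_1^2(\mu)$, and then both inclusions follow formally. Thus the proof is short: one paragraph invoking the pre-proposition discussion for ``$\subseteq$'', one paragraph unwinding \eqref{eq1.2-1} for ``$\supseteq$'', and a remark pinning down that $\nabla$ is well-defined on $\D_1^2(\mu)$.
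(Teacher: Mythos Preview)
Your proposal is correct and matches the paper's approach: the paper does not give a separate proof of Proposition~\ref{prop1.1}, the ``$\subseteq$'' direction being exactly the discussion preceding it (Poincar\'e inequality \eqref{eq1.3.1} plus closability of $\mathcal{E}$ via \eqref{eq1.4}), and the ``$\supseteq$'' direction being immediate from the definition of $\D_1^2(\mu)$ as a completion of $C^\infty(M)$. Your write-up simply makes explicit what the paper leaves implicit.
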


\vskip 2mm
The inconvenient for \eqref{eq1.3} is the existence of derivative for almost all $t\in [0,1]$. 
In what follows, we will present two typical classes of absolutely continuous curves in $\P_2(M)$.
\vskip 2mm

\subsection{\bf Constant vector fields on $\P_2(M)$} \label{section1.1}
For any gradient vector field $\nabla\psi$ on $M$ with $\psi\in C^\infty(M)$, 
consider the ordinary differential equation (ODE):
\begin{equation*}
\frac{d}{dt}U_t(x)=\nabla\psi(U_t(x)),\quad U_0(x)=x\in M.
\end{equation*}
Then $x\ra U_t(x)$ is a flow of diffeomorphisms on $M$. Let $\mu\in\P_2(M)$, consider 
$c_t=(U_t)_\#\mu$. It is easy to see that the curve $\{c_t; \ t\in [0,1]\}$ is absolutely continuous in $L^2$ and for $f\in C^1(M)$, 
\begin{equation*}
\frac{d}{dt}\int_M f(x)\, dc_t(x)=\frac{d}{dt}\int_M f(U_t(x))\, d\mu(x)
=\int_M \langle \nabla f(U_t(x)), \nabla\psi (U_t(x))\rangle\, d\mu(x),
\end{equation*}
which is equal to, for any $t\in [0,1]$, 
\begin{equation*}
\int_M \langle \nabla f, \nabla\psi\rangle\, dc_t.
\end{equation*}
In other term, $c_t$ is a solution to the following continuity equation: 
\begin{equation*}
\frac{dc_t}{dt}+ \nabla\cdot (\nabla\psi\,c_t)=0.
\end{equation*}

According to above definition, we see that for each $t\in [0,1]$, 
\begin{equation*}
\frac{{d}^I c_t}{dt}=\nabla\psi.
\end{equation*}

It is why we call $\nabla\psi$ a constant vector field on $\P_2(M)$. 
In order to make clearly different roles played by $\nabla\psi$, we will use notation  
 $$V_\psi$$
when it is seen as a constant vector field on $\P_2(M)$.

\begin{remark} {\rm In section \ref{sect3} below, we will compute Lie brackets of two constant vector fields on $\P_2(M)$ without explicitly using the existence of density of measure,  the Lie bracket of two constant vector fields is NOT a constant vector field.}
\end{remark}

\subsection{\bf Geodesics with constant speed}\label{section1.2}

It is easy to introduce geodesics with constant speed when the base space is a flat space $\R^m$. 
A probability measure $\mu$ on $\R^m$ is in $\P_2(\R^m)$ if $\int_{\R^m} |x|^2\, d\mu(x)<+\infty$. 
Let $c_0, c_1\in \P_2(\R^m)$, there is an optimal coupling plan $\gamma\in \C(c_0, c_1)$ such that 

\begin{equation*}
W_2^2(c_0, c_1)=\int_{\R^m\times\R^m} |x-y|^2\, d\gamma(x,y).
\end{equation*}

For each $t\in [0,1]$, define $c_t\in\P_2(\R^m)$ by
\begin{equation*}
\int_{\R^m} f(x)\, dc_t(x)=\int_{\R^m\times\R^m} f(u_t(x,y))\, d\gamma(x,y),
\end{equation*}
where $u_t(x,y)=(1-t)x+ty$. For $0\leq s<t\leq 1$, define $\pi_{s,t}\in \C(c_s,c_t)$ by 
\begin{equation*}
\int_{\R^m\times\R^m} g(x,y)\, d\pi_{s,t}(x,y)=\int_{\R^m\times\R^m} g(u_s(x,y), u_t(x,y))\, d\gamma(x,y).
\end{equation*}

Then
\begin{equation*}
W_2^2(c_s,c_t)\leq \int_{\R^m\times\R^m}|u_t(x,y)-u_s(x,y|^2\, d\gamma(x,y)
=(t-s)^2 W_2(c_0, c_1)^2.
\end{equation*}
It follows that $\dis W_2(c_s, c_t)\leq (t-s) W_2(c_0, c_1)$. Combing with triangulaire inequality, 
\begin{equation*}
\begin{split}
 W_2(c_0, c_1)&\leq W_2(c_0, c_s)+W_2(c_s,c_t)+W_2(c_t,c_1)\\
 &\leq s W_2(c_0, c_1) + (t-s)W_2(c_0,c_1)+ (1-t)W_2(c_0, c_1)=W_2(c_0, c_1),
 \end{split}
\end{equation*}
we get the property of geodesic with constant speed:

\begin{equation*}
W_2(c_s, c_t)=|t-s|\, W_2(c_0, c_1).
\end{equation*}

\vskip 2mm
According to Theorem \ref{th1}, there is $Z_t\in \TT_{c_t}$ such that, for $f\in C_c^1(\R^d)$, 

\begin{equation*}
\begin{split}
\frac{d}{dt} \int_{\R^m} f(x)dc_t(x)&=\int_{\R^m}\langle \nabla f(u_t(x,y)), y-x\rangle_{\R^m}\, d\gamma(x,y)\\
&=\int_{\R^d} \langle \nabla f(x), Z_t(x)\rangle_{\R^m}\, dc_t(x)
\end{split}
\end{equation*}
where $\langle\ ,\ \rangle_{\R^m}$ is the canonical inner product of $\R^m$. We heuristically look for $Z_t$ such that 
$$\dis Z_t(u_t(x,y))=y-x.$$

Taking the derivative with respect to $t$ yields 
\begin{equation*}
(\frac{d}{dt}Z_t)(u_t(x,y))+ \langle\nabla Z_t(u_t(x,y)), y-x\rangle = 0.
\end{equation*}
It follows that 
\begin{equation*}
(\frac{d}{dt}Z_t)+ \nabla Z_t(Z_t)=0.
\end{equation*}
In the case where $Z_t=\nabla \psi_t$, we have 
\begin{equation*}
(\frac{d}{dt}\nabla\psi_t)+ \nabla^2 \psi_t(\nabla\psi_t)=0.
\end{equation*}

We remark that $\{\nabla\psi_t, t\in ]0,1[\}$ satisfies heuristically the equation of Riemannian geodesic 
obtained in \cite{Lott} or heuristically obtained in \cite{OV}, in which the authors showed that the convexity of 
entropy functional along these geodesics is equivalent to Bakry-Emery's curvature condition \cite{BakryEmery}
(see also \cite{LiLi}, \cite{SV, Sturm}). 

\vskip 2mm
In the case of Riemannian manifold $M$, it is a bit complicated. We follow the exposition of \cite{Gigli}. Let $TM$ be the tangent
bundle of $M$ and $\pi : TM\ra M$ the natural projection. For each $\mu\in\P_2(M)$, we consider the set

\begin{equation*}
\Gamma_\mu=\Bigl\{ \gamma\ \hbox{\rm probability measure on } TM;\ \pi_\#\gamma=\mu,
\int_{TM} |v|_{T_xM}^2d\gamma(x,v)<+\infty \Bigr\}.
\end{equation*}
The set $\Gamma_\mu$ is obviously non empty. Let $\gamma\in \Gamma_\mu$, we consider $\nu=\exp_\#\gamma$,
that is,
\begin{equation*}
\int_M f(x)d\nu(x)=\int_{TM} f(\exp_x(v))\, d\gamma(x,v),
\end{equation*}
where $\exp_x: T_xM\ra M$ is the exponential map induced by geodesics on $M$. The map
\begin{equation*}
TM \ra M\times M,\quad (x,v)\ra (x, \exp_x(v))
\end{equation*}
sends $\gamma$ to a coupling plan $\tilde\gamma\in\C(\mu, \nu)$. We have
\begin{equation*}
W_2^2(\mu,\nu)\leq \int_{TM} d_M^2(x, \exp_x(v))\, d\gamma(x,v)\leq \int_{TM} |v|_{T_xM}^2\, d\gamma(x,v).
\end{equation*}
In order to construct geodesics $\{c_t; t\in [0,1]\}$ connecting $\mu$ and $\nu$, we need to find $\gamma_0\in \Gamma_\mu$
such that 
\begin{equation}\label{eq1.7}
W_2^2(\mu, \nu)=\int_{TM} |v|_{T_xM}^2\, d\gamma_0(x,v).
\end{equation}

\vskip 2mm
As $M$ is connected, let $x\in M$, for each $y$, there is a minimizing geodesic 
$\{\xi(t),\ t\in [0,1]\}$  connecting $x$ and $y$.
Let $v_{x,y}=\xi'(0)\in T_xM$, then 
\begin{equation*}
y=\exp_x(v_{x,y})\  \hbox{\rm and}\  d_M(x, y)=|v_{x,y}|_{T_xM}. 
\end{equation*}
Take a Borel version $\Xi$ of such a map $(x,y)\ra (x,v_{x,y})$ from $M\times M$ to $TM$. 
Let $\tilde\gamma_0\in \C(\mu,\nu)$ be an optimal coupling plan; define $\gamma_0\in\Gamma_\mu$ by

\begin{equation*}
\int_{TM} g(x,v)\,d\gamma_0(x,v)=\int_{M\times M} g\bigl(x,\Xi(x,y)\bigr)\, d\tilde\gamma_0(x,y).
\end{equation*}
Therefore
\begin{equation*}
\begin{split}
\int_{TM} |v|_{T_xM}^2\,d\gamma_0(x,v)&=\int_{M\times M} |\Xi(x,y)|^2\, d\tilde\gamma_0(x,y)\\
&=\int_{M\times M} d_M(x,y)^2\, d\tilde\gamma_0(x,y)=W_2^2(\mu, \nu).
\end{split}
\end{equation*}

\vskip 2mm
Now we define the curve $\{c_t;\ t\in [0,1]\}$ on $\P_2(M)$ by
\begin{equation*}
\int_M f(x)dc_t(x)=\int_{TM} f(\exp_x(tv))\, d\gamma_0(x,v).
\end{equation*}

Similarly we check that
\begin{equation*}
W_2(c_s, c_t)=|t-s|\, W_2(c_0, c_1).
\end{equation*}

\vskip 2mm
The organization of the paper is as follows. In Section 2, we consider ordinary equations on $\P_2(M)$, 
a Cauchy-Peano's type theorem is established, also Mckean-Vlasov equation involved. In Section 3, we emphasize 
that the suitable class of probability measures for developing  the differential geometry is one having divergence and the strictly positive density with certain regularity. The Levi-Civita connection is introduced and the formula for the covariant derivative 
of a general but smooth enough vector field is obtained. In section 4, we precise results on the derivability of the Wasserstein 
distance on $\P_2(M)$, which enable us to obtain the extension of a vector field along a quite good curve on $\P_2(M)$ in Section 5 as 
in differentiable geometry; the parallel translation along such a good curve on $\P_2(M)$ is naturally and rigorously introduced.
The existence for parallel translations is established for a curve whose intrinsic derivative gives rise a good enough 
vector field on $\P_2(M)$.

\section{Ordinary differential equations on $\P_2(M)$}\label{sect2}

Let $\varphi\in C^1(M)$, consider the function $F_\varphi$ on $\P_2(M)$ defined by
\begin{equation}\label{eq2.1}
F_\varphi(\mu)=\int_M \varphi(x)\, d\mu(x).
\end{equation}

A function $F$ on $\P_2(M)$ is said to be a polynomial if there exists a finite number of functions 
$\varphi_1, \ldots, \varphi_k$ in $C^1(M)$ such that $F=F_{\varphi_1}\cdots F_{\varphi_k}$.  Let $Z=V_\psi$ be a constant 
vector field on $\P_2(M)$ with $\psi\in C^{\infty}(M)$, and $U_t$ the flow on $M$ associated to $\nabla\psi$. For $\mu_0\in \P_2(M)$, we set 
$\mu_t=(U_t)_\#\mu_0$. Then we have seen in  section \ref{section1.1}, 
\begin{equation*}
\Bigl\{\frac{d}{dt} F_\varphi(\mu_t)\Bigr\}_{|_{t=0}}=\int_M \langle \nabla\varphi(x), \nabla\psi(x)\rangle\, d\mu_0(x)
=\langle V_\varphi, V_\psi\rangle_{\TT_{\mu_0}}.
\end{equation*}

The left hand side of above equality is the derivative of $F_\varphi$ along $V_\psi$. More generally, for a function $F$ on $\P_2(M)$,
we say that $F$ is derivable at $\mu_0$ along $V_\psi$, if
\begin{equation*}
(\bar{D}_{V_\psi}F)(\mu_0)=\Bigl\{\frac{d}{dt} F(\mu_t)\Bigr\}_{|_{t=0}}\quad\hbox{\rm exists}.
\end{equation*}

We say that the gradient $\bar{\nabla} F(\mu_0)\in \TT_{\mu_0}$ exists if for each $\psi\in C^\infty(M)$, $(\bar{D}_{V_\psi}F)(\mu_0)$
exists and
\begin{equation}\label{eq2.2}
\bar{D}_{V_\psi}F(\mu_0)=\langle \bar{\nabla}F, V_\psi\rangle_{\TT_{\mu_0}}.
\end{equation}
Note that for $\varphi\in C^1(M)$, there is a sequence of $\psi_n\in C^\infty(M)$ such that $\nabla\psi_n$ converge
uniformly to $\nabla\varphi$ so that $V_\varphi\in \TT_{\mu}$ for any $\mu\in \P_2(M)$. It is obvious that 
$\bar{\nabla}F_\varphi=V_\varphi$. For the
polynomial $F=\prod_{i=1}^kF_{\varphi_i}$, we have
\begin{equation*}
\bar{\nabla} F=\sum_{i=1}^k \Bigl(\prod_{j\neq i} F_{\varphi_j}\Bigr)\, V_{\varphi_i}.
\end{equation*}

\vskip 2mm
Note that the family $\{F_\varphi,\ \varphi\in C^1(M)\}$ separates the point of $\P_2(M)$. 
 By Stone-Weierstrauss theorem, the space of polynomials is dense in the space of continuous 
functions on $\P_2(M)$.
\vskip 2mm
{\bf Convention of notations:} We will use $\nabla$ to denote the gradient operator on the base space $M$,
and $\bar{\nabla}$ to denote the gradient operator on the Wasserstein space $(\P_2(M), W_2)$. For example, 
if $(\mu, x)\ra \Phi(\mu, x)$ is a function on $\P_2(M)\times M$, then $\nabla\Phi(\mu, x)$ is the gradient with respect to 
$x$, while $\bar{\nabla}\Phi(\mu,x)$ is the gradient with respect to $\mu$.

\begin{definition}
We will say that $Z$ is a vector field on $\P_2(M)$ if there exists a Borel map $\Phi: \P_2(M)\times M\ra\R$ 
such that for any $\mu\in\P_2(M)$, $\dis x\ra \Phi(\mu, x)$ is $C^1$ and
$\dis Z(\mu)=V_{\Phi(\mu, \cdot)}$. 
\end{definition}

\vskip 2mm
A  class of test vector fields on $\P_2(M)$ is 
\begin{equation}\label{eq2.2.1}
\chi(\P)=\Bigl\{ \sum_{finite} \alpha_i V_{\psi_i},\quad\alpha_i\ \hbox{\rm polynomial},\ \psi_i\in C^\infty(M)\Bigr\}.
\end{equation}

\vskip 2mm
Let $Z$ be a vector field on $\P_2(M)$, how to construct a solution $\mu_t\in\P_2(M)$ to the following ODE 
\begin{equation*}
\frac{{d}^I\mu_t}{dt}=Z(\mu_t) ?
\end{equation*}

\vskip 2mm
\begin{theorem}\label{th2.2} Let $Z$ be a vector field on $\P_2(M)$ given by $\Phi$. Assume that 
$\dis (\mu, x)\ra \nabla\Phi(\mu, x)$ is continuous, 
then for any $\mu_0\in\P_2(M)$, there is an absolutely curve $\{\mu_t;\ t\in [0,1]\}$ on $\P_2(M)$ such that 
\begin{equation}\label{eq2.3}
\frac{{d}^I\mu_t}{dt}=Z(\mu_t),\quad \mu_{|_{t=0}}=\mu_0.
\end{equation}
If moreover,   for any $\mu\in\P_2(M)$, $x\ra\Phi(\mu,x)$ is $C^2$ and 
\begin{equation}\label{eq2.4}
C_2:=\sup_{\mu\in\P_2(M)}\sup_{x\in M}||\nabla^2\Phi(\mu, x)|| <+\infty,
\end{equation}
then there is a flow of continuous maps $(t,x)\ra U_t(x)$ on $M$, solution to the following Mckean-Vlasov equation
\begin{equation}\label{eq2.5}
\frac{d}{dt}U_t(x) = \nabla\Phi(\mu_t, U_t(x)),\quad \mu_t=(U_t)_\#\mu_0.
\end{equation}
\end{theorem}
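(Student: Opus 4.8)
The plan is to prove the two assertions separately, using a time-discretization (Euler scheme) for the existence of the curve, and a Picard/fixed-point argument for the McKean--Vlasov flow. For the first part, fix a partition $0=t_0<t_1<\cdots<t_N=1$ of mesh $1/N$. Starting from $\mu_0$, define recursively on each subinterval a curve by freezing the measure argument: on $[t_k,t_{k+1}]$ let $V^{(k)}_t$ be the flow on $M$ generated by the (genuinely smooth in $x$) gradient vector field $\nabla\Phi(\mu^N_{t_k},\cdot)$, and set $\mu^N_t=(V^{(k)}_{t-t_k})_\#\mu^N_{t_k}$. Each piece is an absolutely continuous curve with intrinsic derivative $\nabla\Phi(\mu^N_{t_k},\cdot)$ evaluated along the flow, so by the computation in Section~\ref{section1.1} one gets, for $f\in C^1(M)$,
\begin{equation*}
\frac{d}{dt}\int_M f\,d\mu^N_t=\int_M\langle\nabla f,\nabla\Phi(\mu^N_{t_k},\cdot)\rangle\,d\mu^N_t,\qquad t\in[t_k,t_{k+1}].
\end{equation*}
The uniform-in-$N$ estimate comes from $\sup_{\mu,x}|\nabla\Phi(\mu,x)|\le C_1<\infty$ (which follows from continuity of $(\mu,x)\mapsto\nabla\Phi$ on the compact set $\P_2(M)\times M$): one obtains $W_2(\mu^N_s,\mu^N_t)\le C_1|t-s|$, so the family $\{\mu^N_\cdot\}$ is equi-Lipschitz into the compact space $(\P_2(M),W_2)$, hence by Arzel\`a--Ascoli has a subsequence converging uniformly to a Lipschitz curve $\{\mu_t\}$.

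Next I would pass to the limit in the weak formulation. Fix $\varphi\in C^1(M)$. Integrating the displayed identity over $[0,t]$ and summing over the subintervals gives
\begin{equation*}
\int_M\varphi\,d\mu^N_t-\int_M\varphi\,d\mu_0=\int_0^t\Bigl(\int_M\langle\nabla\varphi(x),\nabla\Phi(\mu^N_{\kappa_N(s)},x)\rangle\,d\mu^N_s(x)\Bigr)ds,
\end{equation*}
where $\kappa_N(s)$ is the left endpoint of the subinterval containing $s$. Since $\mu^N_s\to\mu_s$ uniformly in $W_2$ and $|\kappa_N(s)-s|\le 1/N$, uniform continuity of $(\mu,x)\mapsto\nabla\Phi(\mu,x)$ on the compact $\P_2(M)\times M$ lets me replace $\nabla\Phi(\mu^N_{\kappa_N(s)},\cdot)$ by $\nabla\Phi(\mu_s,\cdot)$ in the limit, and weak convergence $\mu^N_s\to\mu_s$ handles the integration against the (continuous, bounded) integrand $x\mapsto\langle\nabla\varphi(x),\nabla\Phi(\mu_s,x)\rangle$; a dominated-convergence argument in $s$ (everything is bounded by $C_1\|\nabla\varphi\|_\infty$) finishes it. Thus
\begin{equation*}
\int_M\varphi\,d\mu_t-\int_M\varphi\,d\mu_0=\int_0^t\int_M\langle\nabla\varphi,\nabla\Phi(\mu_s,\cdot)\rangle\,d\mu_s\,ds,
\end{equation*}
which is exactly the continuity equation \eqref{eq1.2} with $Z_s=\nabla\Phi(\mu_s,\cdot)\in\TT_{\mu_s}$, i.e. \eqref{eq2.3}.

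For the second assertion, assume in addition $C_2=\sup_{\mu,x}\|\nabla^2\Phi(\mu,x)\|<\infty$, so that for the curve $\{\mu_t\}$ already constructed the vector field $b_t(x):=\nabla\Phi(\mu_t,x)$ is, for each $t$, globally Lipschitz in $x$ on $M$ with constant $C_2$ (measuring Lipschitzianity via the Riemannian distance, using parallel transport to compare $\nabla\Phi(\mu_t,x)$ and $\nabla\Phi(\mu_t,y)$; on a compact manifold a bound on $\|\nabla^2\Phi\|$ gives such a bound). Since $t\mapsto\mu_t$ is $W_2$-continuous and $(\mu,x)\mapsto\nabla\Phi(\mu,x)$ is continuous, $(t,x)\mapsto b_t(x)$ is continuous and bounded by $C_1$. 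The classical Cauchy--Lipschitz theorem on the compact manifold $M$ (or: embed $M$ in some $\R^d$, extend $b$, solve, and note the solution stays on $M$ since $b_t$ is tangent) then yields a unique flow of continuous maps $(t,x)\mapsto U_t(x)$ with $\frac{d}{dt}U_t(x)=b_t(U_t(x))=\nabla\Phi(\mu_t,U_t(x))$ and $U_0=\mathrm{id}$. It remains to check $(U_t)_\#\mu_0=\mu_t$: setting $\nu_t:=(U_t)_\#\mu_0$, one computes for $\varphi\in C^1(M)$
\begin{equation*}
\frac{d}{dt}\int_M\varphi\,d\nu_t=\int_M\langle\nabla\varphi(U_t(x)),\nabla\Phi(\mu_t,U_t(x))\rangle\,d\mu_0(x)=\int_M\langle\nabla\varphi,\nabla\Phi(\mu_t,\cdot)\rangle\,d\nu_t,
\end{equation*}
so $\{\nu_t\}$ solves the same continuity equation \eqref{eq1.2} as $\{\mu_t\}$ with the same drift $Z_t=\nabla\Phi(\mu_t,\cdot)$; since $\mu_t\in\TT$-sense and, more importantly, the drift is Lipschitz in $x$, the standard uniqueness for the continuity equation with Lipschitz velocity field (the superposition principle / method of characteristics, cf. \cite{AGS}) forces $\nu_t=\mu_t$ for all $t$. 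This is exactly \eqref{eq2.5}.

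The main obstacle is the limit passage in the first part: one must be careful that although each approximation $\mu^N_\cdot$ is only piecewise of the special ``constant vector field'' form, the drift $\nabla\Phi(\mu^N_{\kappa_N(s)},\cdot)$ evaluated at the moving measure $\mu^N_s$ converges to $\nabla\Phi(\mu_s,\cdot)$ evaluated at $\mu_s$ --- this uses in an essential way that continuity of $\nabla\Phi$ on the \emph{compact} space $\P_2(M)\times M$ upgrades to \emph{uniform} continuity, which is what makes both the freezing error $|\kappa_N(s)-s|\to0$ and the measure error $W_2(\mu^N_s,\mu_s)\to0$ harmless simultaneously. A secondary point requiring care is the verification $(U_t)_\#\mu_0=\mu_t$ in the second part, where I rely on uniqueness in the continuity equation under the extra Lipschitz regularity provided by \eqref{eq2.4} (without it, uniqueness would need the gradient-in-$L^2(\mu_t)$ restriction of Theorem~\ref{th1}, which $\nabla\Phi(\mu_t,\cdot)$ indeed satisfies since it is a genuine gradient, so either route closes the argument).
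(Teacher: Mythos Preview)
There is a genuine gap in your first part. You write ``let $V^{(k)}_t$ be the flow on $M$ generated by the (genuinely smooth in $x$) gradient vector field $\nabla\Phi(\mu^N_{t_k},\cdot)$'', but under the hypothesis of the first assertion the map $x\mapsto\Phi(\mu,x)$ is only $C^1$ and $\nabla\Phi(\mu,\cdot)$ is merely continuous, not smooth and not even Lipschitz. With a vector field that is only continuous, Peano gives local existence of integral curves but no uniqueness, so there is no well-defined flow map $V^{(k)}_t:M\to M$ and the pushforward $\mu^N_t=(V^{(k)}_{t-t_k})_\#\mu^N_{t_k}$ is not legitimately defined. This is precisely why the paper inserts an extra regularisation step: it replaces $\nabla\Phi(\mu,\cdot)$ by $Z_n(\mu,\cdot)=\mathbf{T}_{1/n}\nabla\Phi(\mu,\cdot)$, where $\mathbf{T}_t=e^{-t\square}$ is the heat semigroup on $1$-forms, so that each frozen drift is smooth and generates a genuine flow of diffeomorphisms. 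A short compactness lemma then shows $\sup_{\mu}\|\mathbf{T}_{1/n}\nabla\Phi_\mu-\nabla\Phi_\mu\|_\infty\to 0$, which is what lets one pass to the limit in the continuity equation. Your uniform-continuity argument for the limit passage is fine once the approximating curves exist, but as written you have not constructed them.

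For the second assertion your route is correct and in fact cleaner than the paper's. The paper re-runs the Euler scheme directly with $\nabla\Phi$ (now legitimate since $\Phi_\mu\in C^2$), derives the contraction estimate $d_M(U^{(n)}_t(x),U^{(n)}_t(y))\le e^{C_2 t}d_M(x,y)$ via the variational equation along geodesics, and applies Ascoli to the family $(t,x)\mapsto U^{(n)}_t(x)$; the limit pair $(U_t,\mu_t)$ then solves the McKean--Vlasov system. You instead fix the curve $\{\mu_t\}$ already produced, solve the non-autonomous ODE $\dot U_t=\nabla\Phi(\mu_t,U_t)$ by Cauchy--Lipschitz (the Lipschitz constant coming from the bound $C_2$ on $\nabla^2\Phi$), and identify $(U_t)_\#\mu_0$ with $\mu_t$ via uniqueness for the linear continuity equation with Lipschitz drift. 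Both arguments are valid; yours avoids a second compactness extraction at the cost of invoking uniqueness for the continuity equation, while the paper's approach stays self-contained but repeats the Euler--Ascoli machinery.
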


\vskip 2mm
\begin{proof}  We use the Euler approximation to construct a solution. We first note that
\begin{equation}\label{eq2.6}
C_1:=\sup_{(\mu,x)\in\P_2(M)\times M}|\nabla\Phi(\mu,x)|<+\infty.
\end{equation}
Let $\dis P_t=e^{t\Delta_M}$ be the heat semi-group associated to the Laplace operator $\Delta_M$ on functions, and
$\dis {\bf T_t}=e^{-t\square}$ the heat semigroup on differential forms, with de Rham-Hodge operator $\square$. It is 
well-known that

\begin{equation*}
|{\bf T}_t(\nabla\varphi)|\leq e^{-t\kappa/2} P_t|\nabla\varphi|, \quad \varphi\in C^1(M)
\end{equation*}
where $\kappa$ is lower bound of Ricci tensor on $M$. As $t\ra 0$, ${\bf T}_t(\nabla\varphi)$ converges to $\nabla\varphi$
uniformly. 
For $n\geq 1$, let 

\begin{equation*}
Z_n(\mu,x)=\bigl({\bf T_{1/n}}\nabla\Phi(\mu, \cdot)\bigr)(x).
\end{equation*}

According to \eqref{eq2.6} and above estimate, for $n$ big enough, 
\begin{equation}\label{eq2.7}
\sup_{(\mu,x)\in\P_2(M)\times M}|Z_n(\mu,x)|\leq 2C_1.
\end{equation}

Now let $t_k=k2^{-n}$ for $k=1,\ldots, 2^n$ and 
\begin{equation*}
[t]=t_k\quad\hbox{\rm if}\ t\in [t_k, t_{k+1}[.
\end{equation*}
On the intervall $[t_0, t_1]$, consider the ODE on $M$:

\begin{equation}\label{eq2.7.1}
\frac{dU_t^{(n)}}{dt}=Z_n\bigl(\mu_0, U_t^{(n)}\bigr),\quad U_0^{(n)}(x)=x,
\end{equation}
and $\dis \mu_t^{(n)}=(U_t^{(n)})_\#\mu_0$ for $t\in [t_0, t_1]$; 
inductively, on $\dis [t_k, t_{k+1}]$, we consider 
\begin{equation}\label{eq2.7.2}
\frac{dU_t^{(n)}}{dt}=Z_n\bigl(\mu_{t_k}^{(n)}, U_t^{(n)}\bigr),\quad U_{|_{t=t_k}}^{(n)}(x)=U_{t_k}^{(n)}(x),
\end{equation}
and for $t\in [t_k, t_{k+1}]$, 
\begin{equation}\label{eq2.7.3}
\mu_t^{(n)}=(U_t^{(n)})_\# \mu_{t_k}^{(n)}
\end{equation}

and so on, we get a curve $\{\mu_t^{(n)};\ t\in [0,1]\}$ on $\P_2(M)$. We now prove that this family is equicontinuous in 
$C([0,1], \P_2(M))$. 
Let $0\leq s<t\leq 1$, define $\gamma(\theta)=U_{(1-\theta)s+\theta t}^{(n)}$, then
\begin{equation*}
\frac{d\gamma(\theta)}{d\theta}
=(t-s) Z_n\bigl(\mu_{[(1-\theta)s+\theta t]}^{(n)}, U_{(1-\theta)s+\theta t}^{(n)}\bigr).
\end{equation*}
We have, according to \eqref{eq2.7},
\begin{equation*}
d_M\bigl(U_t^{(n)}(x), U_s^{(n)}(x)\bigr)\leq \int_0^1 \Bigl|\frac{d\gamma(\theta)}{d\theta}\Bigr|\, d\theta\leq 2C_1(t-s).
\end{equation*}
Define a probability measure $\pi$ on $M\times M$ by

\begin{equation*}
\int_{M\times M} g(x,y)\pi(dx,dy)=\int_M g\bigl(U_t^{(n)}(x), U_s^{(n)}(x)\bigr)\, d\mu_0(x).
\end{equation*}
Then $\pi\in \C(\mu_t^{(n)}, \mu_s^{(n)})$, we have
\begin{equation*}
W_2^2\bigl(\mu_t^{(n)}, \mu_s^{(n)}\bigr)\leq \int_M d_M^2\bigl(U_t^{(n)}(x), U_s^{(n)}(x)\bigr)\, d\mu_0(x)
\leq 4C_1^2\, (t-s)^2.
\end{equation*}

By Ascoli theorem, up to a subsequence, $\mu_\cdot^{(n)}$ converges in $C([0,1], \P_2(M))$ to a continuous curve 
$\{\mu_t;\ t\in [0,1]\}$ such that $\dis W_2(\mu_t, \mu_s)\leq 2C_1\, (t-s)$. 

\vskip 2mm
For proving that $\{\mu_t;\ t\in [0,1]\}$ is a solution to ODE \eqref{eq2.3}, we need the following preparation:
\begin{lemma} Set $\Phi_\mu(x)=\Phi(\mu,x)$, then
\begin{equation}\label{eq2.8}
\sup_{(\mu,x)\in\P_2(M)\times M}|({\bf T}_t\nabla\Phi_\mu)(x)-\nabla\Phi(x)|_{T_xM}\ra 0, \quad\hbox{\rm as}\ t\ra 0.
\end{equation}
\end{lemma}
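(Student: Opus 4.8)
The plan is to prove the uniform convergence \eqref{eq2.8} by exploiting the contraction estimate for the Hodge semigroup on exact one-forms together with a uniform equicontinuity argument coming from the bound \eqref{eq2.4} on $\nabla^2\Phi$. First I would recall the pointwise domination $|{\bf T}_t(\nabla\varphi)|\le e^{-t\kappa/2}P_t|\nabla\varphi|$ and its companion applied to differences: writing $\omega_\mu=\nabla\Phi_\mu$, one has $({\bf T}_t\omega_\mu)(x)-\omega_\mu(x)={\bf T}_t\omega_\mu(x)-\omega_\mu(x)$, and I would insert the heat semigroup on functions to split this as $\big({\bf T}_t\omega_\mu-P_t\omega_\mu\big)(x)+\big(P_t\omega_\mu(x)-\omega_\mu(x)\big)$ — here $P_t$ acts componentwise is not invariant, so more honestly I would work directly with a uniform modulus of continuity.

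The cleaner route: by \eqref{eq2.4}, the family $\{\nabla\Phi_\mu : \mu\in\P_2(M)\}$ is uniformly Lipschitz as a family of sections of $TM$ (with a fixed Lipschitz constant $C_2$ in any fixed finite atlas, using compactness of $M$), and by \eqref{eq2.6} it is uniformly bounded by $C_1$. Hence this family is precompact in $C(M;TM)$ by Arzelà–Ascoli; equivalently, there is a single modulus of continuity $\eta(\cdot)$, with $\eta(r)\to 0$ as $r\to 0$, such that $|\nabla\Phi_\mu(x)-\nabla\Phi_\mu(y)|\le \eta(d_M(x,y))$ for all $\mu$ and all $x,y$ (after parallel transport along the minimizing geodesic). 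Then I would use the standard probabilistic or spectral representation of ${\bf T}_t$ on one-forms: the solution $\omega(t)={\bf T}_t\omega$ of the heat equation $\partial_t\omega=-\square\omega$ satisfies, via the Weitzenböck formula and the stochastic representation (Elworthy–Li / Driver formula), $({\bf T}_t\omega)(x)=\E\big[Q_t\,\omega(X_t(x))\big]$ where $X_t(x)$ is Brownian motion on $M$ from $x$ and $Q_t$ is the (matrix-valued, $\|Q_t\|\le e^{-t\kappa/2}$) solution of the companion ODE along the paths, with $Q_0=\mathrm{Id}$. From this, $|({\bf T}_t\omega_\mu)(x)-\omega_\mu(x)|\le \E\big[\|Q_t-\mathrm{Id}\|\,|\omega_\mu(X_t(x))|\big]+\E\big[|\omega_\mu(X_t(x))-\omega_\mu(x)|\big]$, which is bounded by $C_1\,\big(1-e^{-t\kappa/2}\big)_+ + C_1\,\P(\cdots)+\E[\eta(d_M(X_t(x),x))]$; the point is that every term on the right is bounded by a quantity depending only on $t$ (through $\|Q_t-\mathrm{Id}\|$ and through the heat-kernel tail estimate $\E[d_M(X_t(x),x)\wedge 1]\le C\sqrt t$ uniformly in $x$, by compactness) and on the single modulus $\eta$, hence is uniform in both $\mu$ and $x$, and tends to $0$ as $t\to0$.

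I would then assemble: given $\varepsilon>0$, choose $r>0$ with $\eta(r)<\varepsilon$ and $C_1(1-e^{-t\kappa/2})<\varepsilon$ for $t$ small, and use the uniform bound $\P(d_M(X_t(x),x)\ge r)\to 0$ (uniformly in $x$, by the Gaussian-type upper bound on the heat kernel on a compact manifold) to control the remaining term by $2C_1\cdot o(1)$. Summing the three contributions gives $\sup_{(\mu,x)}|({\bf T}_t\nabla\Phi_\mu)(x)-\nabla\Phi_\mu(x)|\le C\varepsilon$ for $t$ small, which is \eqref{eq2.8}. The main obstacle — and the step I would be most careful about — is obtaining the \emph{uniform-in-$\mu$} modulus of continuity and making the stochastic-representation estimate genuinely uniform in the base point $x$; both rely crucially on the compactness of $M$ (to get a fixed finite atlas, a fixed Ricci lower bound $\kappa$, and uniform Gaussian heat-kernel bounds) and on hypothesis \eqref{eq2.4} (to get the equi-Lipschitz property). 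An alternative to the stochastic representation, if one prefers to stay analytic, is to write ${\bf T}_t\omega_\mu-\omega_\mu=-\int_0^t\square\,{\bf T}_s\omega_\mu\,ds$ and estimate using that $\square$ on exact forms is comparable to the scalar Laplacian plus the bounded zeroth-order Weitzenböck term, together with uniform $C^1$ bounds on $\omega_\mu$; but the semigroup-domination route above is the shortest, so I would present that.
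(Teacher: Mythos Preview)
Your argument has one genuine gap and one point of contrast with the paper.

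\textbf{The gap.} You invoke hypothesis \eqref{eq2.4} to obtain that $\{\nabla\Phi_\mu:\mu\in\P_2(M)\}$ is uniformly Lipschitz. But the Lemma is stated and used inside the proof of the \emph{first} part of Theorem~\ref{th2.2}, where the only standing assumption is that $(\mu,x)\mapsto\nabla\Phi(\mu,x)$ is continuous; the $C^2$ hypothesis and the bound \eqref{eq2.4} belong to the second part, in which the regularisation by ${\bf T}_{1/n}$ is bypassed altogether and the Lemma is no longer needed. So as written your proof assumes something that is not available. The fix is immediate: joint continuity of $\nabla\Phi$ on the compact space $\P_2(M)\times M$ already yields a single modulus of continuity $\eta$ with $|\nabla\Phi_\mu(x)-//\,\nabla\Phi_\mu(y)|\le\eta(d_M(x,y))$ uniformly in $\mu$, and that is all your stochastic-representation estimate requires. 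You should also tidy up the comparison $|\omega_\mu(X_t(x))-\omega_\mu(x)|$, which as stated compares vectors in different fibres; the stochastic parallel transport has to be carried through explicitly there, not only in the multiplicative functional $Q_t$.

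\textbf{Comparison with the paper.} The paper's proof is softer and avoids any explicit representation of ${\bf T}_t$. It argues that for each fixed $\mu$ one has $\|{\bf T}_t\nabla\Phi_\mu-\nabla\Phi_\mu\|_\infty\to 0$ (strong continuity of the semigroup on continuous sections), that the map $(\mu,t)\mapsto\|{\bf T}_t\nabla\Phi_\mu-\nabla\Phi_\mu\|_\infty$ is jointly continuous, and then uses compactness of $\P_2(M)$ to pass from the pointwise statement in $\mu$ to the uniform one via a finite subcover. Your route, once repaired, is a legitimate alternative: it trades the abstract compactness-of-$\P_2(M)$ step for an explicit quantitative estimate through the Feynman--Kac representation of ${\bf T}_t$. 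The paper's argument is shorter and needs no probabilistic machinery; yours gives, in principle, a rate in $t$ in terms of the modulus $\eta$ and the heat-kernel small-time behaviour.
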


\begin{proof} We use $||\cdot||_\infty$ to denote the uniform norm on $M$. Let $\eps>0$, for $\mu\in\P_2(M)$, 
there is $\hat t_\mu>0$ such that 
\begin{equation*}
\sup_{t\leq\hat t_\mu} ||{\bf T}_t\nabla\Phi_\mu-\nabla\Phi_\mu||_\infty  <\eps.
\end{equation*}
Since $\dis (\mu, t)\ra  ||{\bf T}_t\nabla\Phi_\mu-\nabla\Phi_\mu||_\infty$ is continuous, there is $\delta_\mu>0$ such that
for $t\leq \hat t_\mu$, 
\begin{equation*}
W_2(\mu,\nu)<\delta_\mu\ \Rightarrow\ 
||{\bf T}_t\nabla\Phi_\nu-\nabla\Phi_\nu||_\infty  <\eps.
\end{equation*}
Let $B(\mu, \delta)$ be the open ball in $(\P_2(M), W_2)$ centered at $\mu$, of radius $\delta$. We have
\begin{equation*}
\P_2(M)=\cup_{\mu\in \P_2(M)}B(\mu, \delta_\mu);
\end{equation*}
so there is a finite number of $\{\mu_1, \ldots, \mu_K\}$ such that 
\begin{equation*}
\P_2(M)=\cup_{i=1}^K B(\mu_i, \delta_{\mu_i}).
\end{equation*}
Let $\hat t=\min\bigl\{\hat t_{\mu_i},\  i=1, \ldots, K\bigr\}>0$.  Then for $0<t<\hat t$, 
\begin{equation*}
\sup_{\mu\in\P_2(M)}||{\bf T}_t\nabla\Phi_\mu-\nabla\Phi_\mu||_\infty\leq \eps.
\end{equation*}
So we get \eqref{eq2.8}.
\end{proof}

\vskip 2mm
{\bf End of the proof of theorem :}  $\{\mu_t^{(n}; \ t\in [0,1]\}$ satisfies the following continuity equation
\begin{equation}\label{eq2.9}
\begin{split}
&\int_{[0,1]\times M}\alpha'(t) f(x)d\mu_t^{(n)}(x)dt\\
&\hskip -8mm=\alpha(0)\,\int_M f(x)d\mu_0(x)
 + \int_{[0,1]\times M} \alpha(t)\, \langle \nabla f(x), Z_n\bigl( \mu_{[t]}^{(n)},x\bigr)\rangle\, d\mu_t^{(n)}(x)dt,
\end{split}
\end{equation}
for all $\alpha\in C_c^1([0,1))$ and $f\in C^1(M)$. We have

\begin{equation*}
\begin{split}
& \int_{[0,1]\times M} \alpha(t)\, \langle \nabla f(x), Z_n\bigl( \mu_{[t]}^{(n)},x\bigr)\rangle\, d\mu_t^{(n)}dt
 -  \int_{[0,1]\times M} \alpha(t)\, \langle \nabla f(x), \nabla\Phi\bigl( \mu_t,x\bigr)\rangle\, d\mu_tdt\\
 &= \int_{[0,1]\times M} \alpha(t)\, \langle \nabla f(x), Z_n\bigl( \mu_{[t]}^{(n)},x\bigr)-\nabla\Phi(\mu_t,x)\rangle\, d\mu_t^{(n)}dt\\
 &+ \int_{[0,1]\times M} \alpha(t)\, \langle \nabla f(x), \nabla\Phi\bigl( \mu_t,x\bigr)\rangle\, d\mu_t^{(n)}dt
 - \int_{[0,1]\times M} \alpha(t)\, \langle \nabla f(x), \nabla\Phi\bigl( \mu_t,x\bigr)\rangle\, d\mu_tdt.
\end{split}
\end{equation*}
It is obvious that the sum of two last terms converge to $0$ as $n\ra +\infty$. Let $I_n$ be the first term on the right side, then
\begin{equation*}
|I_n|\leq ||\nabla f||_\infty \int_0^1 |\alpha(t)|\, ||{\bf T}_{1/n}\nabla\Phi_{\mu_{[t]}^{(n)}}-\nabla\Phi_{\mu_t}||_\infty\, dt
\end{equation*}
Note that 
\begin{equation*}
||{\bf T}_{1/n}\nabla\Phi_{\mu_{[t]}^{(n)}}-\nabla\Phi_{\mu_t}||_\infty
\leq ||{\bf T}_{1/n}\nabla\Phi_{\mu_{[t]}^{(n)}}-\nabla\Phi_{\mu_{[t]}^{(n)}}||_\infty+
||\nabla\Phi_{\mu_{[t]}^{(n)}}-\nabla\Phi_{\mu_t}||_\infty.
\end{equation*}
The term $\dis ||{\bf T}_{1/n}\nabla\Phi_{\mu_{[t]}^{(n)}}-\nabla\Phi_{\mu_{[t]}^{(n)}}||_\infty\ra 0$ is due to above lemma. 
As $n\ra +\infty$, $\mu_{[t]}^{(n)}$ converges to $\mu_t$. By continuity of $(\mu, x)\ra \nabla\Phi(\mu, x)$, the last term
tends to $0$. Letting $n\ra +\infty$ in \eqref{eq2.9} yields 

\begin{equation*}
\begin{split}
&\int_{[0,1]\times M}\alpha'(t) f(x)d\mu_t(x)dt\\
&\hskip -8mm=\alpha(0)\,\int_M f(x)d\mu_0(x)
 + \int_{[0,1]\times M} \alpha(t)\, \langle \nabla f(x), \nabla\Phi\bigl( \mu_t,x\bigr)\rangle\, d\mu_t(x)dt,
\end{split}
\end{equation*}
which is the meaning of Equation \eqref{eq2.3} in distribution sense. 

\vskip 2mm
For the proof of second part, since $x\ra \Phi(\mu, x)$ is $C^2$, we can directly use $\nabla\Phi(\mu, \cdot)$ instead of 
$Z_n$ in \eqref{eq2.7.1}, \eqref{eq2.7.2}, \eqref{eq2.7.3}.

On the intervall $[t_0, t_1]$, consider the ODE on $M$:

\begin{equation}\label{eq2.13}
\frac{dU_t^{(n)}}{dt}=\nabla\Phi\bigl(\mu_0, U_t^{(n)}\bigr),\quad U_0^{(n)}(x)=x,
\end{equation}
and $\dis \mu_t^{(n)}=(U_t^{(n)})_\#\mu_0$ for $t\in [t_0, t_1]$; 
inductively, on $\dis [t_k, t_{k+1}]$, we consider 
\begin{equation}\label{eq2.14}
\frac{dU_t^{(n)}}{dt}=\nabla\Phi\bigl(\mu_{t_k}^{(n)}, U_t^{(n)}\bigr),\quad U_{|_{t=t_k}}^{(n)}(x)=U_{t_k}^{(n)}(x),
\end{equation}
and for $t\in [t_k, t_{k+1}]$, 
\begin{equation}\label{eq2.15}
\mu_t^{(n)}=(U_t^{(n)})_\# \mu_{t_k}^{(n)}.
\end{equation}

By above result, up to a subsequence, $\{\mu_t^{(n)},\ t\in [0,1]\}$ converges to $\{\mu_t, t\in [0,1]\}$ in $C([0,1], \P_2(M))$. 
We use this subsequence to prove the convergence of $\{U_t^{(n)}(x),\ t\in [0,1]\}$. Now we prove that, under Condition 
\eqref{eq2.6}, 

\begin{equation}\label{eq2.16}
d_M\Big( U_t ^{(n)}(x), U_t ^{(n)}(y)\Bigr) \leq e^{C_2 t}\, d_M(x,y),\quad x, y\in M.
\end{equation}

For $x,y \in M$ given, there is a minimizing geodesic $\{\xi_s,\ s\in [0,1]\}$ connecting $x$ and $y$ such that 
$ d_M(x,y)=\int_0^1|\xi_s'|\, ds$. Set 
\begin{equation*}
\sigma(t,s)=U_t^{(n)}(\xi_s).
\end{equation*}
Since the torsion is free, we have the relation: 
\begin{equation}\label{eq2.17}
\frac{D}{ds}\frac{d}{dt}\sigma(t,s)=\frac{D}{dt}\frac{d}{ds}\sigma(t,s),
\end{equation}
where $\frac{D}{ds}$ denotes the covariant derivative. We have
\begin{equation*}
\frac{d}{dt}U_t^{(n)}(\xi_s)=\nabla\Phi\Bigl(\mu_{[t]}^{(n)}, U_t^{(n)}(\xi_s)\Bigr).
\end{equation*}
Taking the derivative with respect to $s$, we get
\begin{equation*}
\frac{D}{ds}\frac{d}{dt}U_t^{(n)}(\xi_s)=\nabla^2\Phi\Bigl(\mu_{[t]}^{(n)}, U_t^{(n)}(\xi_s)\Bigr)\cdot \frac{d}{ds}U_t^{(n)}(\xi_s).
\end{equation*}
Combining with \eqref{eq2.17} yields 
\begin{equation*}
\frac{D}{dt}\frac{d}{ds}U_t^{(n)}(\xi_s)=\nabla^2\Phi\Bigl(\mu_{[t]}^{(n)}, U_t^{(n)}(\xi_s)\Bigr)\cdot \frac{d}{ds}U_t^{(n)}(\xi_s).
\end{equation*}
Now, 
\begin{equation*}
\frac{d}{dt}\Bigl|\frac{d}{ds}U_t^{(n)}(\xi_s)\Bigr|^2
=2\Bigl\langle \nabla^2\Phi\Bigl(\mu_{[t]}^{(n)}, U_t^{(n)}(\xi_s)\Bigr)\cdot \frac{d}{ds}U_t^{(n)}(\xi_s),\ \frac{d}{ds}U_t^{(n)}(\xi_s)\Bigr\rangle,
\end{equation*}
which is, by Condition \eqref{eq2.6}, less than
\begin{equation*}
 2C_2\, \Bigl|\frac{d}{ds}U_t^{(n)}(\xi_s)\Bigr|^2.
\end{equation*}

By Gronwall lemma,
\begin{equation*}
\Bigl| \frac{d}{ds}U_t^{(n)}(\xi_s)\Bigr|\leq e^{C_2t}\, |\xi_s'|,
\end{equation*}
which implies that 
\begin{equation*}
d_M\Bigl(U_t^{(n)}(x), U_t^{(n)}(y)\Bigr)\leq e^{C_2 t}\, d_M(x,y).
\end{equation*}
Therefore the family $\bigl\{ (t,x)\ra U_t^{(n)}(x);\ n\geq 1\bigr\}$ is equicontinuous in $C([0,1]\times M)$. By Ascoli theorem, 
up to a subsequence, $U_t^{(n)}(x)$ converges to $U_t(x)$ uniformly in $(t,x)\in [0,1]\times M$. It is obvious to see that 
$(U_t, \mu_t)$ solves Mckean-Vlasov equation \eqref{eq2.5}. 
\end{proof}

\begin{remark} {\rm Comparing to \cite{BLPR}, as well to \cite{Wang}, we did not suppose the Lipschitz continuity with respect to 
$\mu$; in counterpart, we have no uniqueness of solutions of \eqref{eq2.5}. }
\end{remark}

\begin{remark}{\rm Many interesting PDE can be interpreted as gradient flows on the Wasserstein space $\P_2(M)$ (see \cite{AGS}, 
\cite{Villani1},\cite{Villani2}, \cite{FangShao}). The interpolation between geodesic flows and gradient flows were realized using 
Langevin's deformation in \cite{LiLi, LiLi2}.
}
\end{remark}

\section{Levi-Civita connection on $\P_2(M)$}\label{sect3}

In this section, we will revisit the paper by J. Lott \cite{Lott}: we try to reformulate conditions given there as weak as possible, 
also to expose some of them in an intrinsic way, avoiding the use of density. In order to obtain good pictures on the geometry of $\P_2(M)$,
the suitable class of probability measures should be the class $\dis\P_{\div}(M)$ of probability measures on $M$
  having divergence (see Definition \ref{def1.1}). 
 
 \vskip 2mm
 For convenience of readers, we will briefly prepare materials needed for our exposition.
  For a measure $\mu\in \P_2(M)$,  for any $C^1$ vector field $A$ on $M$, 
  the divergence $\div_\mu(A)\in L^2(M,\mu)$ is such that 
 
 \begin{equation*}
 \int_M \langle \nabla\phi(x), A(x)\rangle_{T_xM}\, d\mu(x)=-\int_M \phi(x)\,\div_\mu(A)(x)\, d\mu(x)
 \end{equation*}
   for any $\phi\in C^1(M)$.
 It is easy to see that $\div_\mu(fA)=f\,\div_\mu(A)+ \langle \nabla f, A\rangle$ for $f\in C^1(M)$.
  If $d\mu=\rho\, dx$ has a density $\rho>0$ in the space $C^1(M)$, we have 
 \begin{equation*}
 \int_M \langle\nabla\phi, A\rangle\, d\mu=\int_M \langle\nabla\phi, \rho A\rangle\, dx=-\int_M \phi\,\div(\rho A)\, dx
 =-\int_M \phi\,\div(\rho A)\,\rho^{-1} d\mu,
 \end{equation*}
It follows that
 \begin{equation}\label{eq3.01}
 \div_\mu(A)=\rho^{-1}\,\div(\rho A)=\div(A)+\langle \nabla(\log \rho), A\rangle.
 \end{equation}
 
 For $\mu\in \P_{\div}(M)$ and $\phi\in C^2(M)$, we denote $\L^\mu(\phi)\in L^2(\mu)$ such that
 \begin{equation}\label{eq3.1}
 \int_M \langle \nabla f, \nabla\phi\rangle\, d\mu=-\int_M f\,\L^\mu\phi\, d\mu,\quad\hbox{\rm for any } f\in C^1(M),
 \end{equation}
where $\dis \L^\mu\phi=\div_\mu(\nabla\phi)$ is a negative operator.

\vskip 2mm
Let $\psi\in C^3(M)$, consider the ODE
 \begin{equation*}
\frac{dU_t}{dt}=\nabla\psi(U_t),\quad U_0(x)=x.
 \end{equation*}
 
\begin{proposition} Let $d\mu=\rho\, dx$ be a probability measure in $\P_{\div}(M)$ 
with a strictly positive density $\rho$ in $C^1(M)$ and $\psi\in C^3(M)$. 
Then for each $t\in [0,1]$, $\mu_t:=(U_t)_\#\mu\in \P_{div}(M)$. 
\end{proposition}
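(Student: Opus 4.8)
The plan is to push forward the density explicitly under the flow $U_t$ and verify directly that the resulting density is $C^1$, strictly positive, and that the integration-by-parts identity \eqref{eq1.4} holds for $\mu_t$. First I would recall the classical change-of-variables formula: since $\psi\in C^3(M)$, the vector field $\nabla\psi$ is $C^2$, so the flow $(t,x)\mapsto U_t(x)$ is $C^1$ in $t$ and $C^2$ in $x$, and each $U_t$ is a $C^2$-diffeomorphism of $M$ with $C^2$ inverse. Hence for $f\in C^1(M)$,
\begin{equation*}
\int_M f\, d\mu_t=\int_M f(U_t(x))\,\rho(x)\, dx=\int_M f(y)\,\rho(U_t^{-1}(y))\,|\!\det(dU_t^{-1})_y|\, dy,
\end{equation*}
so $d\mu_t=\rho_t\, dx$ with $\rho_t(y)=\rho(U_t^{-1}(y))\,J_t(y)$, where $J_t$ is the Jacobian determinant of $U_t^{-1}$ with respect to the Riemannian volume. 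Since $\rho\in C^1$, $U_t^{-1}\in C^2$, and $J_t\in C^1$ (the Jacobian of a $C^2$ map is $C^1$), we get $\rho_t\in C^1(M)$. Strict positivity is immediate: $\rho>0$ everywhere, $J_t>0$ since $U_t$ preserves orientation (it is the time-$t$ map of a flow), and $M$ is compact so $\inf_M\rho_t>0$.

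Next I would check that $\mu_t\in\P_{\div}(M)$, i.e. that \eqref{eq1.4} holds. The cleanest route is the transport identity: for any $C^1$ vector field $A$ on $M$ and $f\in C^1(M)$,
\begin{equation*}
\int_M\langle\nabla f,A\rangle\, d\mu_t=\int_M\langle\nabla f(U_t(x)),A(U_t(x))\rangle\,\rho(x)\,dx
=\int_M\langle\nabla(f\circ U_t)(x),(dU_t)^{-1}A(U_t(x))\rangle\,d\mu(x),
\end{equation*}
using $\nabla(f\circ U_t)=(dU_t)^{*}\nabla f(U_t(\cdot))$ together with the metric compatibility of the pull-back. Writing $\tilde A=(dU_t)^{-1}(A\circ U_t)$, which is a $C^1$ vector field on $M$ because $\nabla\psi\in C^2$ forces $dU_t$ and its inverse to be $C^1$, the right-hand side equals $-\int_M (f\circ U_t)\,\div_\mu(\tilde A)\, d\mu = -\int_M f\cdot(\div_\mu(\tilde A)\circ U_t^{-1})\, d\mu_t$, using $\mu\in\P_{\div}(M)$ in the middle step. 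Hence $\div_{\mu_t}(A)=\div_\mu(\tilde A)\circ U_t^{-1}\in L^2(\mu_t)$, so \eqref{eq1.4} holds and $\mu_t\in\P_{\div}(M)$. Alternatively, one can simply invoke the formula \eqref{eq3.01}, $\div_{\mu_t}(A)=\div(A)+\langle\nabla\log\rho_t,A\rangle$, which makes sense precisely because we have already shown $\rho_t\in C^1(M)$ with $\rho_t>0$, so that $\nabla\log\rho_t$ is continuous and the right-hand side lies in $L^2(\mu_t)$.

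The only genuinely delicate point is the regularity bookkeeping: one must be sure that $\psi\in C^3$ is exactly enough so that $U_t$ is $C^2$ in the space variable and consequently the pushed-forward density and the transported vector field $\tilde A$ are $C^1$ — this is where the hypothesis $\psi\in C^3(M)$ (rather than merely $C^1$ or $C^2$) is used, and it is the reason the statement is phrased with that degree of smoothness. Everything else (positivity, compactness giving uniform bounds, the integration-by-parts computation) is routine. I would therefore organize the write-up as: (1) flow regularity from $\psi\in C^3$; (2) explicit pushed-forward density and its $C^1$ positivity; (3) the transport computation yielding $\div_{\mu_t}$, concluding $\mu_t\in\P_{\div}(M)$.
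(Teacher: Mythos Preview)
Your proof is correct and complete, but it takes a different route from the paper's. The paper invokes Kunita's formula for the density of a flow pushforward: it writes the Radon--Nikodym density of $\mu_t$ with respect to $\mu$ as the Liouville-type exponential
\[
K_t(x)=\exp\Bigl(\int_0^t \div_\mu(\nabla\psi)(U_{-s}(x))\,ds\Bigr),
\]
notes via \eqref{eq3.01} that $\div_\mu(\nabla\psi)=\Delta_M\psi+\langle\nabla\log\rho,\nabla\psi\rangle$ is $C^1$ (here is where $\psi\in C^3$ and $\rho\in C^1$ enter), concludes $K_t\in C^1$, and then obtains $\div_{\mu_t}(A)=K_t^{-1}\div_\mu(K_t A)$. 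You instead use the bare change-of-variables formula to get $\rho_t=\rho\circ U_t^{-1}\cdot|\!\det dU_t^{-1}|$ and read off $C^1$ regularity directly from the $C^2$ regularity of the flow; your transport computation $\div_{\mu_t}(A)=\div_\mu\bigl((dU_t)^{-1}A\circ U_t\bigr)\circ U_t^{-1}$ is likewise different from the paper's density-multiplication formula, though of course the two are equivalent.

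What each approach buys: yours is more elementary and self-contained (no external citation for the density formula), and your ``alternative'' via \eqref{eq3.01} is arguably the shortest possible argument once $\rho_t\in C^1$ is known. The paper's exponential formula, on the other hand, makes the dependence on $\div_\mu(\nabla\psi)=\L^\mu\psi$ explicit, which meshes with the operator-theoretic computations that follow in Section~\ref{sect3}. Either is perfectly adequate for the statement at hand.
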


\begin{proof} By Kunita \cite{Kunita} (see also \cite{ABC}, \cite{Malliavin}), the push-forward measure $(U_t^{-1})_\#\mu$ 
by inverse map of $U_t$ admits a density
$\tilde K_t$ with respect to $\mu$, having the following explicit expression
\begin{equation*}
\tilde K_t=\exp\Bigl(-\int_0^t \div_{\mu}(\nabla\psi)(U_s(x))ds\Bigr).
\end{equation*}
It follows that the density $K_t$ of $\mu_t$ with respect to $\mu$ has the expression 
\begin{equation*}
 K_t=\exp\Bigl(\int_0^t \div_{\mu}(\nabla\psi)(U_{-s}(x))ds\Bigr).
\end{equation*}

According to \eqref{eq3.01}, $\dis x\ra \div_\mu(\nabla\psi(x))$ is $C^1$.
Therefore the condition in \cite{ABC}
\begin{equation*}
\int_M \exp(\lambda \div_\mu(\nabla\psi(x))\, d\mu(x)<+\infty, \ \hbox{\rm for all } \lambda>0
\end{equation*}
is automatically satisfied.  Again by \eqref{eq3.01}, $x\ra K_t(x)$ is in $C^1$. Now let $A$ be a $C^1$ vector field 
on $M$ and $f\in C^1(M)$, we have
\begin{equation*}
\int_M \langle \nabla f(x), A(x)\rangle_{T_xM}\, d\mu_t(x)
=\int_M \langle \nabla f, A\rangle_{T_xM}\, K_t(x)d\mu(x)
=-\int_M f\,\div_{\mu}(K_t Z)\, d\mu.
\end{equation*}
It follows that 
\begin{equation*}
\div_{\mu_t}(A)=\div_{\mu}(K_t A)\, K_t^{-1}.
\end{equation*}
\end{proof}

\vskip 2mm
For $\psi_1, \psi_2\in C^2(M)$, we denote by $V_{\psi_1}, V_{\psi_2}$ the associated constant vector fields on $\P_2(M)$.
In what follows, we will compute the Lie bracket $[V_{\psi_1}, V_{\psi_2}]$.
\vskip 2mm
For $f\in C^1(M)$, we set $F_f(\mu)=\int_M f\, d\mu$. According to preparations given at the beginning of  Section \ref{sect2}, 
\begin{equation*}
(\bar{D}_{V_{\psi_2}}F_f)(\mu)=\int_M \langle\nabla\psi_2, \nabla f\rangle\, d\mu=F_{\langle\nabla\psi_2,\nabla f\rangle}(\mu).
\end{equation*}

Using again above formula, we have
\begin{equation*}
(\bar{D}_{V_{\psi_1}}\bar{D}_{V_{\psi_2}}F_f)(\mu)=\int_M\langle \nabla\psi_1, \nabla\langle \nabla\psi_2,\nabla f\rangle\rangle\,d\mu
=-\int_M \L^\mu\psi_1\,\, \langle \nabla\psi_2,\nabla f\rangle\,d\mu.
\end{equation*}
Therefore 
\begin{equation*}
\begin{split}
[V_{\psi_2}, V_{\psi_1}] F_f&=\bar{D}_{V_{\psi_2}}\bar{D}_{V_{\psi_1}}F_f - \bar{D}_{V_{\psi_1}}\bar{D}_{V_{\psi_2}}F_f\\
&=\int_M \langle (\L^\mu\psi_1\,\nabla\psi_2-\L^\mu\psi_2\, \nabla\psi_1),\ \nabla f\rangle\, d\mu.
\end{split}
\end{equation*}
Let 
\begin{equation}\label{eq3.2}
\C_{\psi_1,\psi_2}(\mu)= \L^\mu\psi_1\,\nabla\psi_2-\L^\mu\psi_2\, \nabla\psi_1.
\end{equation}
Note that $\C_{\psi_1,\psi_2}(\mu)$ is in $L^2(M,TM; \mu)$, not in $\TT_\mu$. Consider the orthogonal projection: 
\begin{equation*}
\Pi_\mu: L^2(M, TM; \mu)\ra \TT_\mu. 
\end{equation*}

 As $\mu\in \P_{div}(M)$ and by Proposition \ref{prop1.1},  there exists $\tilde\Phi_\mu\in\D_1^2(\mu)$ such that 
\begin{equation}\label{eq3.3}
\Pi_\mu(\C_{\psi_1,\psi_2}(\mu))=\nabla \tilde\Phi_\mu.
\end{equation}
Then we have
\begin{equation}\label{eq3.4}
[V_{\psi_2}, V_{\psi_1}] F_f=\int_M \langle \nabla\tilde\Phi_\mu, \ \nabla f\rangle\, d\mu
=(\bar{D}_{V_{\tilde\Phi_\mu}}F_f)(\mu).
\end{equation}

Above equality can be extended to the class of polynomials on $\P_2(M)$, that is to say that 
\begin{equation}\label{eq3.5}
[V_{\psi_2}, V_{\psi_1}] _\mu= V_{\tilde\Phi_\mu}\quad \hbox{\rm on polynomials},
\end{equation}
We emphasize that Lie bracket of two constant vector fields is no more a constant vector field. 

\begin{proposition}\label{prop3.2}  Let $\psi_1, \psi_2\in C^3(M)$, for $d\mu=\rho\, dx$ with $\rho>0$ and $\rho\in C^2(M)$, 
the function $\tilde\Phi_\mu$ obtained in \eqref{eq3.3}
has the following expression :

\begin{equation}\label{eq3.6}
\tilde\Phi_\mu=(\L^\mu)^{-1}\ \div_\mu\bigl( \C_{\psi_1,\psi_2}(\mu)\bigr).
\end{equation}
\end{proposition}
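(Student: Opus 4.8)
The plan is to identify $\tilde\Phi_\mu$ by testing the defining relation \eqref{eq3.3} against gradients of smooth functions and then removing the projection $\Pi_\mu$ using the integration-by-parts formula available for $\mu\in\P_{\div}(M)$. First I would recall that by \eqref{eq3.3} and the definition of the orthogonal projection $\Pi_\mu$ onto $\TT_\mu$, for every $f\in C^\infty(M)$ we have
\begin{equation*}
\int_M \langle \nabla\tilde\Phi_\mu, \nabla f\rangle\, d\mu = \int_M \langle \Pi_\mu(\C_{\psi_1,\psi_2}(\mu)), \nabla f\rangle\, d\mu = \int_M \langle \C_{\psi_1,\psi_2}(\mu), \nabla f\rangle\, d\mu,
\end{equation*}
the last equality because $\nabla f\in\TT_\mu$ and $\C_{\psi_1,\psi_2}(\mu)-\Pi_\mu(\C_{\psi_1,\psi_2}(\mu))\perp\TT_\mu$ in $L^2(M,TM;\mu)$.

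Next, since $\rho\in C^2(M)$ and $\psi_1,\psi_2\in C^3(M)$, the vector field $\C_{\psi_1,\psi_2}(\mu)=\L^\mu\psi_1\,\nabla\psi_2-\L^\mu\psi_2\,\nabla\psi_1$ is $C^1$: indeed by \eqref{eq3.01}, $\L^\mu\psi_i=\Delta_M\psi_i+\langle\nabla\log\rho,\nabla\psi_i\rangle$ is $C^1$, and $\nabla\psi_i$ is $C^2$. Hence it is a legitimate argument of $\div_\mu$, and the integration-by-parts formula \eqref{eq1.4}/\eqref{eq3.1} gives
\begin{equation*}
\int_M \langle \C_{\psi_1,\psi_2}(\mu), \nabla f\rangle\, d\mu = -\int_M f\,\div_\mu\bigl(\C_{\psi_1,\psi_2}(\mu)\bigr)\, d\mu.
\end{equation*}
On the other hand, by the definition \eqref{eq3.1} of $\L^\mu$ applied to $\tilde\Phi_\mu$ — here one needs $\tilde\Phi_\mu$ to be regular enough for $\L^\mu\tilde\Phi_\mu$ to make sense, which I will address below — we have $\int_M\langle\nabla\tilde\Phi_\mu,\nabla f\rangle\,d\mu=-\int_M f\,\L^\mu\tilde\Phi_\mu\,d\mu$. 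Comparing the two expressions, valid for all $f\in C^\infty(M)$, yields $\L^\mu\tilde\Phi_\mu=\div_\mu\bigl(\C_{\psi_1,\psi_2}(\mu)\bigr)$ as elements of $L^2(\mu)$, and applying $(\L^\mu)^{-1}$ gives \eqref{eq3.6}. One should note that $\div_\mu(\C_{\psi_1,\psi_2}(\mu))$ has zero $\mu$-mean (take $f\equiv 1$), which is precisely the solvability condition for the Poisson equation for the negative operator $\L^\mu$ on the compact manifold, so $(\L^\mu)^{-1}$ is well-defined on it modulo constants; since $\tilde\Phi_\mu$ enters only through its gradient this ambiguity is harmless.

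The main obstacle is the regularity/invertibility bookkeeping for $\L^\mu$: one must argue that the Poisson equation $\L^\mu u=g$ with $g=\div_\mu(\C_{\psi_1,\psi_2}(\mu))\in C^0(M)$ (in fact one can check $g\in C^0$, using that $\rho\in C^2$) has a solution $u\in\D_1^2(\mu)$ with $\nabla u=\Pi_\mu(\C_{\psi_1,\psi_2}(\mu))$, and that this $u$ coincides with the $\tilde\Phi_\mu$ of Proposition \ref{prop1.1}. Because $\L^\mu=\rho^{-1}\div(\rho\,\nabla\cdot)$ is a uniformly elliptic divergence-form operator with $C^1$ coefficients on the compact manifold $M$, standard elliptic theory furnishes existence, uniqueness up to constants, and enough regularity; one then checks directly from \eqref{eq3.1} that its gradient is the $L^2(\mu)$-orthogonal projection of $\C_{\psi_1,\psi_2}(\mu)$ onto $\TT_\mu$, which forces it to equal $\nabla\tilde\Phi_\mu$ by the uniqueness in Proposition \ref{prop1.1}. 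Everything else is the routine manipulation of the integration-by-parts identities above.
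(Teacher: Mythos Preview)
Your argument is correct; the route differs from the paper's. The paper invokes the weighted Hodge decomposition in $L^2(\mu)$:
\[
\C_{\psi_1,\psi_2}(\mu)=d_\mu^{*}\omega+\nabla f+h,
\]
with $\omega$ a $2$-form and $h$ harmonic, then applies $\div_\mu$ to both sides (killing the $d_\mu^{*}\omega$ and $h$ terms) to obtain $\L^\mu f=\div_\mu(\C_{\psi_1,\psi_2}(\mu))$ and hence $f=\tilde\Phi_\mu=(\L^\mu)^{-1}\div_\mu(\C_{\psi_1,\psi_2}(\mu))$. You instead bypass Hodge theory entirely: you test the defining relation of $\Pi_\mu$ against gradients $\nabla f$, integrate by parts on both sides, and reduce to the weak Poisson equation $\L^\mu\tilde\Phi_\mu=\div_\mu(\C_{\psi_1,\psi_2}(\mu))$, whose solvability and regularity you justify by standard elliptic theory for the divergence-form operator $\L^\mu=\rho^{-1}\div(\rho\,\nabla\cdot)$ with $C^1$ coefficients. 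Your approach is more elementary and self-contained, while the paper's Hodge-decomposition viewpoint has the side benefit of describing the full orthogonal splitting of $L^2(M,TM;\mu)$ (gradient part, co-exact part, harmonic part), which is used elsewhere in \cite{Lott}. Both routes ultimately rely on the invertibility of $\L^\mu$ on mean-zero functions; the paper phrases this as the spectral gap for $\L^\mu$ when $\log\rho\in C^2(M)$, which matches your zero-mean solvability remark.
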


\begin{proof}   By \eqref{eq3.01}, 
\begin{equation*}
\L^\mu\psi= \Delta_M \psi + \langle\nabla\log\rho, \nabla\psi\rangle,
\end{equation*}
where $\Delta_M$ denotes the Laplace operator on $M$. It is well-known that $\L^\mu$ has a spectral gap
 if $\log\rho\in C^2(M)$. In \cite{Lott}, the Lie bracket $[V_{\psi_2}, V_{\psi_1}]$ was expressed 
 using Hodge decomposition for  vector fields in $L^2(\mu)$.
 For $\psi_1, \psi_2\in C^3(M)$,  we have
\begin{equation*}
\div_\mu\bigl( \C_{\psi_1,\psi_2}(\mu)\bigr) =\langle \nabla \L^\mu\psi_1,\ \nabla\psi_2\rangle -\langle\nabla\L^\mu\psi_2,\ \nabla\psi_1\rangle.
\end{equation*}

By Hodge decomposition,  $\C_{\psi_1,\psi_2}(\mu)$ admits the decomposition
\begin{equation*}
\C_{\psi_1,\psi_2}(\mu)={d_\mu}^*\omega + \nabla f + h,
\end{equation*}
where $\omega$ is a differential $2$-form on $M$, ${d_\mu}^*$ is adjoint operator of exterior derivative in $L^2(\mu)$, 
$h$ is harmonic form : $\dis ({d_\mu}^*d+d{d_\mu}^*)h=0$. Taking the divergence $\div_\mu$ on the two sides of above 
equality, we see that $f$ is a solution the following equation
\begin{equation*}
\L^\mu f=\div_\mu\bigl( \C_{\psi_1,\psi_2}(\mu)\bigr).
\end{equation*}
It follows that $\tilde\Phi_\mu$ has the expression \eqref{eq3.6}.
\end{proof}

\vskip 2mm
Now we introduce the covariant derivative $\bn_{V_{\psi_1}}V_{\psi_2}$ associated to the Levi-Civita 
connection on $\P_2(M)$ by

\begin{equation*}
\begin{split}
2\langle \bn_{V_{\psi_1}}V_{\psi_2}, V_{\psi_3}\rangle
&= \bD_{V_{\psi_1}}\langle V_{\psi_2}, V_{\psi_3}\rangle + \bD_{V_{\psi_2}}\langle V_{\psi_3}, V_{\psi_1}\rangle 
- \bD_{V_{\psi_3}}\langle V_{\psi_1}, V_{\psi_2}\rangle \\
&+ \langle V_{\psi_3}, [V_{\psi_1},V_{\psi_2}]\rangle
- \langle V_{\psi_2}, [V_{\psi_1},V_{\psi_3}]\rangle - \langle V_{\psi_1}, [V_{\psi_2},V_{\psi_3}]\rangle.
\end{split}
\end{equation*}

We have $\dis \langle V_{\psi_2}, V_{\psi_3}\rangle =\int_M \langle\nabla\psi_2, \nabla\psi_3\rangle\, d\mu
=F_{ \langle\nabla\psi_2, \nabla\psi_3\rangle}$. Then
\begin{equation*}
 \bD_{V_{\psi_1}}\langle V_{\psi_2}, V_{\psi_3}\rangle
 =\int_M \langle\nabla\psi_1, \nabla\ \langle\nabla\psi_2, \nabla\psi_3\rangle\rangle\, d\mu
 =-\int_M \langle \L^\mu\psi_1\,\nabla\psi_2,\ \nabla\psi_3\rangle\, d\mu.
 \end{equation*}
 
Replacing $\psi_1$ by $\psi_2$, $\psi_2$ by $\psi_3$ and $\psi_3$ by $\psi_1$, we get
\begin{equation*}
 \bD_{V_{\psi_2}}\langle V_{\psi_3}, V_{\psi_1}\rangle
 =-\int_M \langle \L^\mu\psi_2\,\nabla\psi_1,\  \nabla\psi_3\rangle\, d\mu.
 \end{equation*}
 We have, in the same way
 
\begin{equation*}
 \bD_{V_{\psi_3}}\langle V_{\psi_1}, V_{\psi_2}\rangle
 =-\int_M \langle \L^\mu\psi_3\,\nabla\psi_1,\  \nabla\psi_2\rangle\, d\mu.
 \end{equation*}
 
 Now using expression of $ [V_{\psi_1},V_{\psi_2}]$, we have
 \begin{equation*}
 \langle V_{\psi_3}, [V_{\psi_1},V_{\psi_2}]\rangle
 =\int_M \langle -\L^\mu\psi_1\,\nabla\psi_2+\L^\mu\psi_2\, \nabla\psi_1, \nabla\psi_3\rangle\ d\mu.
 \end{equation*}
In the same way, we get 
\begin{equation*}
 \langle V_{\psi_2}, [V_{\psi_1},V_{\psi_3}]\rangle
 =\int_M \langle -\L^\mu\psi_1\,\nabla\psi_3+\L^\mu\psi_3\, \nabla\psi_1, \nabla\psi_2\rangle\ d\mu
 \end{equation*}
and
\begin{equation*}
 \langle V_{\psi_1}, [V_{\psi_2},V_{\psi_3}]\rangle
 =\int_M \langle -\L^\mu\psi_2\,\nabla\psi_3+\L^\mu\psi_3\, \nabla\psi_2, \nabla\psi_1\rangle\ d\mu.
 \end{equation*}
Combining all these terms, we  finally get
\begin{equation*}
2\langle \bn_{V_{\psi_1}}V_{\psi_2}, V_{\psi_3}\rangle
=\int_M \langle \nabla\langle\nabla\psi_1, \nabla\psi_2\rangle,\ \nabla\psi_3\rangle\, d\mu
+\int_M \langle \L^\mu\psi_2\,\nabla\psi_1-\L^\mu\psi_1\, \nabla\psi_2,\ \nabla\psi_3\rangle\, d\mu.
\end{equation*}

\begin{theorem}\label{th3.1}
For two constant vector fields $V_{\psi_1}, V_{\psi_2}$, we have
\begin{equation}\label{eq3.7}
\bn_{V_{\psi_1}}V_{\psi_2}=\frac{1}{2} V_{\langle\nabla\psi_1, \nabla\psi_2\rangle} + \frac{1}{2}[V_{\psi_1}, V_{\psi_2}].
\end{equation}
Moreover, for any constant vector field $V_{\psi_3}$, 
\begin{equation}\label{eq3.7-1}
\langle \bn_{V_{\psi_1}}V_{\psi_2},\ V_{\psi_3}\rangle_{\TT_\mu} 
= \int_M \langle \nabla^2\psi_2,\ \nabla\psi_1\otimes \nabla \psi_3\rangle\  d\mu.
\end{equation}
\end{theorem}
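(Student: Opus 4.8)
The plan is to read off both formulas directly from the symmetrized (Koszul) identity
\[
2\langle \bn_{V_{\psi_1}}V_{\psi_2}, V_{\psi_3}\rangle
=\int_M \langle \nabla\langle\nabla\psi_1, \nabla\psi_2\rangle,\ \nabla\psi_3\rangle\, d\mu
+\int_M \langle \L^\mu\psi_2\,\nabla\psi_1-\L^\mu\psi_1\, \nabla\psi_2,\ \nabla\psi_3\rangle\, d\mu,
\]
which was established just before the statement. First I would rewrite the right-hand side in terms of already-named objects. The first integral is exactly $\bar D_{V_{\psi_3}}F_{\langle\nabla\psi_1,\nabla\psi_2\rangle}(\mu)=\langle V_{\langle\nabla\psi_1,\nabla\psi_2\rangle},V_{\psi_3}\rangle_{\TT_\mu}$, using that $V_{\varphi}\in\TT_\mu$ for $\varphi\in C^1(M)$ and the definition of the inner product on $\TT_\mu$. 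The second integral is $\langle \C_{\psi_2,\psi_1}(\mu),\nabla\psi_3\rangle$ integrated against $\mu$, i.e. $\int_M\langle \C_{\psi_2,\psi_1}(\mu),\nabla\psi_3\rangle\,d\mu$; since $\nabla\psi_3\in\TT_\mu$, this equals $\langle \Pi_\mu(\C_{\psi_2,\psi_1}(\mu)),\nabla\psi_3\rangle_{\TT_\mu}=\langle V_{\tilde\Phi_\mu},V_{\psi_3}\rangle_{\TT_\mu}$, where $\tilde\Phi_\mu$ is the potential from \eqref{eq3.3} associated to the pair $(\psi_2,\psi_1)$. But by \eqref{eq3.5} we have $[V_{\psi_1},V_{\psi_2}]_\mu=V_{\tilde\Phi_\mu}$ (note the sign/order bookkeeping: $\C_{\psi_2,\psi_1}=\L^\mu\psi_2\,\nabla\psi_1-\L^\mu\psi_1\,\nabla\psi_2$, matching $[V_{\psi_1},V_{\psi_2}]$ as computed there). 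Hence
\[
2\langle \bn_{V_{\psi_1}}V_{\psi_2}, V_{\psi_3}\rangle_{\TT_\mu}
=\langle V_{\langle\nabla\psi_1,\nabla\psi_2\rangle}+[V_{\psi_1},V_{\psi_2}],\ V_{\psi_3}\rangle_{\TT_\mu}.
\]

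To conclude \eqref{eq3.7}, I would invoke that $\{\nabla\psi_3:\psi_3\in C^\infty(M)\}$ is dense in $\TT_\mu$ by definition \eqref{eq1.2-1}, so the identity above, holding for all such $\psi_3$, forces the claimed equality of elements of $\TT_\mu$. Here one uses that both sides genuinely lie in $\TT_\mu$: the left by the construction of $\bn$, the right because $V_{\langle\nabla\psi_1,\nabla\psi_2\rangle}\in\TT_\mu$ and $[V_{\psi_1},V_{\psi_2}]_\mu=V_{\tilde\Phi_\mu}=\nabla\tilde\Phi_\mu\in\TT_\mu$ by Proposition \ref{prop1.1} (this is where $\mu\in\P_{\div}(M)$ is used).

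For \eqref{eq3.7-1} I would go back to the Koszul formula but not pass through the Lie bracket. Instead, rewrite each of the three $\L^\mu$-terms using integration by parts against $\mu$: for instance $-\int_M\langle\L^\mu\psi_1\,\nabla\psi_2,\nabla\psi_3\rangle\,d\mu=-\int_M\L^\mu\psi_1\,\langle\nabla\psi_2,\nabla\psi_3\rangle\,d\mu=\int_M\langle\nabla\psi_1,\nabla\langle\nabla\psi_2,\nabla\psi_3\rangle\rangle\,d\mu$ by \eqref{eq3.1}. Doing this to all three divergence-type terms turns the entire right-hand side into a sum of integrals of the form $\int_M\langle\nabla\psi_a,\nabla\langle\nabla\psi_b,\nabla\psi_c\rangle\rangle\,d\mu$, and then expanding $\nabla\langle\nabla\psi_b,\nabla\psi_c\rangle=\nabla^2\psi_b(\nabla\psi_c)+\nabla^2\psi_c(\nabla\psi_b)$ via the metric compatibility of the Levi-Civita connection on $M$ and collecting terms, the Hessian cross-terms cancel in pairs and one is left with $2\int_M\langle\nabla^2\psi_2,\nabla\psi_1\otimes\nabla\psi_3\rangle\,d\mu$. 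Dividing by $2$ gives \eqref{eq3.7-1}.

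The main obstacle I anticipate is purely bookkeeping rather than conceptual: getting the sign conventions and the index permutations of the Koszul formula to line up correctly with the stated sign in $\C_{\psi_1,\psi_2}$ and with the order of arguments in $[V_{\psi_1},V_{\psi_2}]$ versus $[V_{\psi_2},V_{\psi_1}]$ (the earlier computation gave $[V_{\psi_2},V_{\psi_1}]F_f$, so one must track the flip carefully), and verifying the Hessian cancellation in the second part without error. A secondary point to state cleanly is that all manipulations are justified because $\psi_i\in C^\infty(M)$ and $\mu\in\P_{\div}(M)$, so $\L^\mu\psi_i\in L^2(\mu)$ and every integration by parts in \eqref{eq3.1} is legitimate.
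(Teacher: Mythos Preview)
Your proposal is correct and follows essentially the same route as the paper: identity \eqref{eq3.7} is read off from the Koszul computation preceding the theorem exactly as you describe, and \eqref{eq3.7-1} is obtained by integrating the $\L^\mu$-terms by parts via \eqref{eq3.1} and expanding $\nabla\langle\nabla\psi_b,\nabla\psi_c\rangle$ in Hessians so that the $\nabla^2\psi_1$ and $\nabla^2\psi_3$ contributions cancel by symmetry. The only organizational difference is that the paper first records \eqref{eq3.7} and then computes $\langle V_{\psi_3},[V_{\psi_1},V_{\psi_2}]\rangle$ and $\langle V_{\psi_3},V_{\langle\nabla\psi_1,\nabla\psi_2\rangle}\rangle$ separately before summing, whereas you propose expanding the Koszul right-hand side in one pass; the computations and cancellations are identical.
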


\begin{proof} It is enough to prove \eqref{eq3.7-1}. We have 
\begin{equation*}
\begin{split}
 \langle V_{\psi_3}, [V_{\psi_1},V_{\psi_2}]\rangle_{\TT_\mu}
 &=\int_M \langle -\L^\mu\psi_1\,\nabla\psi_2+\L^\mu\psi_2\, \nabla\psi_1, \nabla\psi_3\rangle\ d\mu\\
 &=\int_M \langle \nabla\psi_1,\ \nabla\langle\nabla\psi_2,\nabla\psi_3\rangle\rangle\, d\mu
 -\int_M \langle \nabla\psi_2,\ \nabla\langle\nabla\psi_1,\nabla\psi_3\rangle\rangle\, d\mu\\
 &=\int_M \Bigl( \langle \nabla^2\psi_2,  \nabla\psi_1\otimes\nabla\psi_3\rangle 
 + \langle \nabla^2\psi_3,  \nabla\psi_1\otimes\nabla\psi_2\rangle\Bigr)  d\mu\\
 &\hskip 4mm
 - \int_M \Bigl( \langle \nabla^2\psi_1,  \nabla\psi_2\otimes\nabla\psi_3\rangle 
 + \langle \nabla^2\psi_3,  \nabla\psi_2\otimes\nabla\psi_1\rangle\Bigr)  d\mu\\
 &= \int_M \Bigl( \langle \nabla^2\psi_2,  \nabla\psi_1\otimes\nabla\psi_3\rangle 
 - \langle \nabla^2\psi_1,  \nabla\psi_2\otimes\nabla\psi_3\rangle \rangle\Bigr)  d\mu, 
 \end{split}
 \end{equation*}
due to the symmetry of the Hessian $\nabla^2\psi_3$. On the other hand, 

\begin{equation*} 
 \langle V_{\psi_3}, V_{\langle\nabla\psi_1, \nabla\psi_2\rangle}\rangle_{\TT_\mu}
 = \int_M \Bigl( \langle \nabla^2\psi_2,  \nabla\psi_3\otimes\nabla\psi_1\rangle 
 + \langle \nabla^2\psi_1,  \nabla\psi_3\otimes\nabla\psi_2\rangle \rangle\Bigr)  d\mu.
\end{equation*}
Summing these last two equalities yields \eqref{eq3.7-1}. 
\end{proof}

\begin{remark} {\rm By \eqref{eq3.7}, for two constant vector fields $V_{\psi_1}, V_{\psi_2}$, 
the covariant derivative $\bar{\nabla}_{V_{\psi_1}}V_{\psi_2}$ is not a constant vector field on $\P_2(M)$ if 
$\psi_1\neq \psi_2$. 
}
\end{remark}

\vskip 2mm
Let $\alpha: \P_2(M)\ra\R$ be a differentiable function,  we define
\begin{equation}\label{eq3.10}
\bn_{V_{\psi_1}}\bigl(\alpha\, V_{\psi_2}\bigr)=\bD_{V_{\psi_1}}\alpha\cdot V_{\psi_2} + \alpha\, \bn_{V_{\psi_1}} V_{\psi_2}.
\end{equation}

\begin{proposition} Let $Z$ be a vector field on $\P_2(M)$ in the test space $\chi(\P)$, that is, 
$\dis Z=\sum_{i=1}^k \alpha_i\, V_{\psi_i}$ with $\alpha_i$ polynomials. Then $\bn_{Z}Z$  still is in the test space;
 moreover 
\begin{equation*}
\bn_ZZ=V_{\Phi_1} + \frac{1}{2} V_{|\nabla\Phi_2|^ 2},
\end{equation*}
where
\begin{equation*}
\Phi_1=\sum_{j=1}^k \Bigl(\sum_{i=1}^k\alpha_i\,\bD_{V_{\psi_i}}\alpha_j\Bigr)\ \psi_j,\quad 
\Phi_2=\sum_{i=1}^k \alpha_i\,\psi_i.
\end{equation*}
\end{proposition}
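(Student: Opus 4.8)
The plan is to expand $\bn_Z Z$ using function-linearity of the connection in the direction slot together with the Leibniz rule \eqref{eq3.10} in the argument slot, and then to reduce everything to constant vector fields by means of Theorem \ref{th3.1}. Writing $Z=\sum_{i=1}^k\alpha_i V_{\psi_i}$ and extending $\bn$ to $\chi(\P)$ in the standard way, I would first obtain
\begin{equation*}
\bn_Z Z=\sum_{i,j=1}^k\alpha_i\,\bn_{V_{\psi_i}}\bigl(\alpha_j V_{\psi_j}\bigr)
=\sum_{i,j=1}^k\alpha_i\Bigl((\bD_{V_{\psi_i}}\alpha_j)\,V_{\psi_j}+\alpha_j\,\bn_{V_{\psi_i}}V_{\psi_j}\Bigr).
\end{equation*}

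The first block equals $\sum_{j=1}^k c_j\,V_{\psi_j}$ with $c_j:=\sum_{i=1}^k\alpha_i\,\bD_{V_{\psi_i}}\alpha_j$. The crucial observation is that each $c_j$ is a function of $\mu$ alone, constant in the base point $x$; hence, at a fixed $\mu$,
\begin{equation*}
\sum_{j=1}^k c_j(\mu)\,\nabla\psi_j=\nabla_x\Bigl(\sum_{j=1}^k c_j(\mu)\,\psi_j\Bigr),
\end{equation*}
so that $\sum_j c_j V_{\psi_j}=V_{\Phi_1}$ with $\Phi_1=\sum_j c_j\psi_j$, which is precisely the stated expression for $\Phi_1$.

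For the second block I would substitute the formula $\bn_{V_{\psi_i}}V_{\psi_j}=\frac{1}{2} V_{\langle\nabla\psi_i,\nabla\psi_j\rangle}+\frac{1}{2}[V_{\psi_i},V_{\psi_j}]$ of \eqref{eq3.7}. Since $\alpha_i\alpha_j$ is symmetric in $(i,j)$ while $[V_{\psi_i},V_{\psi_j}]=-[V_{\psi_j},V_{\psi_i}]$, the Lie-bracket contributions cancel in the double sum, leaving $\frac{1}{2}\sum_{i,j}\alpha_i\alpha_j\,V_{\langle\nabla\psi_i,\nabla\psi_j\rangle}$. Using once more that the $\alpha_i$ are constant in $x$, one has $\sum_{i,j}\alpha_i\alpha_j\langle\nabla\psi_i,\nabla\psi_j\rangle=\bigl|\sum_i\alpha_i\nabla\psi_i\bigr|^2=|\nabla\Phi_2|^2$ with $\Phi_2=\sum_i\alpha_i\psi_i$, so the second block equals $\frac{1}{2} V_{|\nabla\Phi_2|^2}$; adding the two blocks yields the claimed identity.

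It then remains to check $\bn_Z Z\in\chi(\P)$: both $V_{\Phi_1}=\sum_j c_j V_{\psi_j}$ and $V_{|\nabla\Phi_2|^2}=\sum_{i,j}\alpha_i\alpha_j V_{\langle\nabla\psi_i,\nabla\psi_j\rangle}$ are finite sums of products of a polynomial and a constant vector field $V_g$ with $g\in C^\infty(M)$, because $\langle\nabla\psi_i,\nabla\psi_j\rangle\in C^\infty(M)$, $\alpha_i\alpha_j$ is a polynomial, and $\bD_{V_{\psi_i}}\alpha_j$ is again a polynomial on $\P_2(M)$: for $\alpha_j=\prod_\ell F_{\varphi_\ell}$ one has $\bD_{V_{\psi_i}}\alpha_j=\sum_\ell\bigl(\prod_{m\neq\ell}F_{\varphi_m}\bigr)F_{\langle\nabla\varphi_\ell,\nabla\psi_i\rangle}$, whence $c_j$ is a polynomial. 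There is no serious obstacle; the two points requiring care are the antisymmetry cancellation of the Lie brackets and the consistent bookkeeping between functions on $\P_2(M)$ (entering as the coefficients $c_j$, $\alpha_i\alpha_j$) and functions on $M$ (entering as the potentials $\psi_j$, $\langle\nabla\psi_i,\nabla\psi_j\rangle$), which is what lets one pull a coefficient $c_j(\mu)$ through the base gradient and recognize $\sum_j c_j\nabla\psi_j=\nabla_x\bigl(\sum_j c_j\psi_j\bigr)$ as the constant vector field $V_{\Phi_1}$.
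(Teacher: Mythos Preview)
Your proof is correct and follows essentially the same route as the paper: expand $\bn_Z Z$ by function-linearity and the Leibniz rule \eqref{eq3.10}, apply \eqref{eq3.7} to each $\bn_{V_{\psi_i}}V_{\psi_j}$, cancel the Lie-bracket terms by antisymmetry against the symmetric coefficients $\alpha_i\alpha_j$, and identify the two surviving sums with $V_{\Phi_1}$ and $\tfrac{1}{2}V_{|\nabla\Phi_2|^2}$. Your version is in fact more explicit than the paper's, in particular the verification that the coefficients $c_j=\sum_i\alpha_i\,\bD_{V_{\psi_i}}\alpha_j$ are polynomials (so that $\bn_Z Z\in\chi(\P)$) is spelled out, which the paper leaves implicit.
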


\begin{proof} Using the rule concerning covariant derivatives, $\bn_ZZ$ is equal to
\begin{equation*}
\sum_{i,j=1}^k \alpha_i\,\bigl(\bD_{V_{\psi_i}}\alpha_j\bigr)\ V_{\psi_j} 
+ \frac{1}{2}\sum_{i,j=1}^k \alpha_i\alpha_j V_{\langle\nabla\psi_i, \nabla\psi_j\rangle}
+\frac{1}{2} \sum_{i,j=1}^k \alpha_i\alpha_j [V_{\psi_i}, V_{\psi_j}]. 
\end{equation*}
The last sum is equal to $0$ due to the skew-symmetry of $ [V_{\psi_i}, V_{\psi_j}]$, the first one gives rise to $\Phi_1$ and
the second one gives rise to $\Phi_2$.
\end{proof}

In what follows, we will extend the definition of covariant derivative \eqref{eq3.10} for a general vector field $Z$ on 
$\P_2(M)$. Let $\Delta$ be the Laplace operator on $M$, let $\{\varphi_n,\ n\geq 0\}$ be the eigenfunctions of $\Delta$:

\begin{equation*}
-\Delta \varphi_n=\lambda_n\, \varphi_n.
\end{equation*}

We have $\lambda_0=0$ and $\varphi_0=1$. It is well-known, by Weyl's result, that 
\begin{equation*}
\lambda_n\sim n^{2/m},\quad n\ra +\infty
\end{equation*}
where $m$ is the dimension of $M$. The functions $\{\varphi_n;\ n\in \N\}$ are smooth, chosen to form an orthonormal basis of 
$\dis L^2(M, dx)$. A function $f$ on $M$ is said to be in $H^k(M)$ for $k\in\N$, if
\begin{equation*}
||f||_{H^k}^2=\int_M |(I-\Delta)^{k/2}f|^2\, dx< +\infty.
\end{equation*}

By Sobolev embedding inequality, for $\dis k>\frac{m}{2}+ q$, 
\begin{equation*}
||f||_{C^q}\leq C\, ||f||_{H^k}.
\end{equation*}

For $f\in H^k(M)$, put $\dis f=\sum_{n\geq 0} a_n \varphi_n$ which holds in $L^2(M, dx)$ with 
\begin{equation*}
a_n=\int_M f(x)\varphi_n(x)\,dx.
\end{equation*}
We have : 
\begin{equation*}
||f||_{H^k}^2 =\sum_{n\geq 0} a_n^2\, (1+\lambda_n)^k.
\end{equation*}
The system 
$\dis\Bigl\{\frac{\nabla\varphi_n}{\sqrt{\lambda_n}};\quad n\geq 1 \Bigr\}$
is orthonormal. Let $\dis V_n =V_{\varphi_n/\sqrt{\lambda_n}}$, then $\{V_n;\ n\geq 1\}$ is an orthonormal basis of $\TT_{dx}$.

\vskip 2mm
Let $Z$ be a vector field on $\P_2(M)$ given by $Z(\mu)=V_{\Phi(\mu, \cdot)}$ or $Z(\mu)=\nabla\Phi(\mu, \cdot)$. 
In the sequel, we denote: $\dis \Phi_\mu(x)=\Phi(\mu, x)$, $\dis \Phi^x(\mu)=\Phi(\mu, x)$. Then, if $x\ra \nabla\Phi_\mu(x)$
is continuous, 
\begin{equation*}
\nabla\Phi_\mu = \sum_{n\geq 1} \Bigl(\int_M \langle \nabla\Phi_\mu, \frac{\nabla\varphi_n}{\sqrt{\lambda_n}}\rangle\, dx\Bigr)\,
\frac{\nabla\varphi_n}{\sqrt{\lambda_n}}
= \sum_{n\geq 1} \Bigl(\int_M \Phi_\mu\varphi_n dx\Bigr)\,{\nabla\varphi_n},
\end{equation*}
which converges in $L^2(M, dx)$. Let $\mu\in\P_{\div}(M)$, the above series converges also in $\TT_\mu$. Let

\begin{equation}\label{eq3.11}
a_n(\mu)=\int_M \Phi_\mu(x)\varphi_n(x)\,dx.
\end{equation}

Let $V_\psi$ be a constant vector field on $\P_2(M)$ with $\psi\in C^\infty(M)$. For $q\geq p\geq 1$, set
\begin{equation}\label{eq3.12}
S_{p,q}=\sum_{n=p}^q \Bigl( \bD_{V_\psi}a_n\,V_{\varphi_n} + a_n\, \bn_{V_\psi}V_{\varphi_n}\Bigr)=S_{p,q}^1+S_{p,q}^2
\end{equation}
respectively. Let $\phi\in C^\infty(M)$, according to \eqref{eq3.7-1}, we have
\begin{equation*}
\langle S_{p,q}^2, V_\phi\rangle_{\TT_\mu}
=\int_M \Bigl( \sum_{n=p}^q a_n(\mu)\nabla^2\varphi_n\Bigr)(\nabla\psi(x), \nabla\phi(x))\, d\mu(x).
\end{equation*}
It follows that 
\begin{equation*}
|\langle S_{p,q}^2, V_\phi\rangle_{\TT_\mu}|
\leq  \Bigl\| \sum_{n=p}^q a_n(\mu)\nabla^2\varphi_n\Bigr\|_{\infty}\, |V_{\psi}|_{\TT_\mu} |V_{\phi}|_{\TT_\mu},
\end{equation*}
therefore 
\begin{equation*}
|S_{p,q}^2|_{\TT_{\mu}}
\leq  \Bigl\| \sum_{n=p}^q a_n(\mu)\nabla^2\varphi_n\Bigr\|_{\infty}\, |V_{\psi}|_{\TT_\mu}.
\end{equation*}

We have
\begin{equation*}
\begin{split}
 & ||\sum_{n=p}^q a_n(\mu) (I-\Delta)^{k/2}\varphi_n||_{L^2(dx)}^2
  =\sum_{n=p}^q a_n(\mu)^2(1+\lambda_n)^k\\
  &=\sum_{n=p}^q \Bigl(\int_M(I-\Delta)^{k/2}\Phi_\mu\,\varphi_n\,dx\Bigr)^2\ra 0
  \end{split}
\end{equation*}
as $p, q\ra +\infty$ if $\Phi_\mu\in H^k(M)$. On the other hand, we have
\begin{equation*}
(\bD_{V_\psi}a_n)(\mu)=\int_M (\bD_{V_\psi}\Phi^x)(\mu)\varphi_n(x)\,dx
=\int_M \langle\nabla\bD_{V_\psi}\Phi^x,\frac{\nabla\varphi_n}{\sqrt{\lambda_n}}\rangle\,\frac{dx}{\sqrt{\lambda_n}},
\end{equation*}
then
\begin{equation*}
S_{p,q}^1
=  \sum_{n=p}^q\Bigl(\int_M \langle\nabla\bD_{V_\psi}\Phi^x,\frac{\nabla\varphi_n}{\sqrt{\lambda_n}}\rangle\, dx\Bigr)
\frac{\nabla\varphi_n}{\sqrt{\lambda_n}}
\end{equation*}
and 
\begin{equation*}
\int_M |S_{p,q}^1|^2\, dx
=  \sum_{n=p}^q\Bigl(\int_M \langle\nabla\bD_{V_\psi}\Phi^x,\frac{\nabla\varphi_n}{\sqrt{\lambda_n}}\rangle\, dx\Bigr)^2
\ra 0
\end{equation*}
as $p, q\ra +\infty$ if 
\begin{equation*}
\int_M |\nabla \bD_{V_\psi}\Phi^x|^2\, dx <+\infty.
\end{equation*}

Therefore for $d\mu=\rho\,dx$ with $\mu\in \P_{\div}(M)$, as $p,q\ra \infty$,
\begin{equation*}
|S_{p,q}^1|_{\TT_\mu}^2\leq ||\rho||_{\infty} \int_M |S_{p,q}^1|^2\, dx\ra 0.
\end{equation*}

We get the following result 

\begin{theorem}\label{th3.6} Let $Z$ be a vector field on $\P_2(M)$ given by $\Phi : \P_2(M)\times M\ra \R$.
Assume that 
\vskip 2mm
\quad (i) for any $\mu\in \P_2(M)$, $\Phi_\mu\in H^k(M)$ with $\dis k>\frac{m}{2}+2$,
\vskip 1mm

\quad (ii) for any $x\in M, \bD_{V_\psi}\Phi^x$ exists and $\dis\nabla \bD_{V_\psi}\Phi^{\cdot}\in L^2(M, dx)$.
\vskip 2mm

Then the covariant derivative $\bn_{V_\psi}Z$ is well defined at $\mu\in\P_{\div}(M)$ and for $\phi\in C^\infty(M)$,
\begin{equation}\label{eq3.13}
\langle \bn_{V_\psi}Z, V_\phi\rangle_{\TT_\mu}
=\int_M\langle (\nabla \bD_{V_\psi}\Phi^{\bf\cdot}), \nabla\phi\rangle\, d\mu
+\int_M \nabla^2\Phi_\mu\bigl(\nabla\psi, \nabla\phi\bigr)\, d\mu.
\end{equation}
\end{theorem}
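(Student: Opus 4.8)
The plan is to realise the covariant derivative $\bn_{V_\psi}Z$ as a limit in $\TT_\mu$ of the covariant derivatives of the spectral truncations of $Z$. With $a_n(\mu)=\int_M\Phi_\mu\varphi_n\,dx$ as in \eqref{eq3.11}, set $Z_q(\mu)=\sum_{n=1}^q a_n(\mu)\,V_{\varphi_n}$. Each $a_n$ is a differentiable function on $\P_2(M)$ and each $V_{\varphi_n}$ is a constant vector field, so the Leibniz rule \eqref{eq3.10} applies to $Z_q$ and gives $\bn_{V_\psi}Z_q=S_{1,q}=S_{1,q}^1+S_{1,q}^2$ with the two pieces as in \eqref{eq3.12}. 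Under (i) we have $\Phi_\mu\in H^k(M)\subset\D_1^2(\mu)$, so by Proposition \ref{prop1.1}, together with the fact that $d\mu=\rho\,dx$ with $\rho$ bounded and bounded away from $0$, the Fourier partial sums $\nabla\Phi_q=\sum_{n=1}^q a_n(\mu)\nabla\varphi_n$ converge to $\nabla\Phi_\mu$ in $\TT_\mu$, i.e.\ $Z_q\to Z$ in $\TT_\mu$. It therefore suffices to show that $\{\bn_{V_\psi}Z_q\}_q$ is Cauchy in $\TT_\mu$, and then to identify its limit by pairing with an arbitrary $V_\phi$, $\phi\in C^\infty(M)$.

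First I would estimate $S^2_{p,q}$. Pairing termwise with $V_\phi$ and using \eqref{eq3.7-1} gives $\langle S^2_{p,q},V_\phi\rangle_{\TT_\mu}=\int_M\bigl(\sum_{n=p}^q a_n(\mu)\nabla^2\varphi_n\bigr)(\nabla\psi,\nabla\phi)\,d\mu$, hence $|S^2_{p,q}|_{\TT_\mu}\le\|\sum_{n=p}^q a_n(\mu)\nabla^2\varphi_n\|_\infty\,|V_\psi|_{\TT_\mu}$. Since $k>\tfrac m2+2$, the Sobolev embedding $H^k(M)\hookrightarrow C^2(M)$ and the identity $\|\sum_{n=p}^q a_n\varphi_n\|_{H^k}^2=\sum_{n=p}^q a_n^2(1+\lambda_n)^k$ show that this tail tends to $0$ because $\Phi_\mu\in H^k(M)$. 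For $S^1_{p,q}$ the point is that $(\bD_{V_\psi}a_n)(\mu)=\int_M(\bD_{V_\psi}\Phi^x)(\mu)\varphi_n(x)\,dx$, so these numbers are exactly the Fourier coefficients of $x\mapsto(\bD_{V_\psi}\Phi^x)(\mu)$; consequently $S^1_{p,q}=\sum_{n=p}^q(\bD_{V_\psi}a_n)\nabla\varphi_n$, and by orthonormality of $\{\nabla\varphi_n/\sqrt{\lambda_n}\}$ one gets $\int_M|S^1_{p,q}|^2\,dx=\sum_{n=p}^q(\bD_{V_\psi}a_n)^2\lambda_n\to0$ by Parseval and assumption (ii). Boundedness of $\rho$ then gives $|S^1_{p,q}|^2_{\TT_\mu}\le\|\rho\|_\infty\int_M|S^1_{p,q}|^2\,dx\to0$. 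Hence $\bn_{V_\psi}Z:=\lim_q\bn_{V_\psi}Z_q$ exists in $\TT_\mu$.

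It remains to pass to the limit in $\langle\bn_{V_\psi}Z_q,V_\phi\rangle_{\TT_\mu}=\langle S^1_{1,q},V_\phi\rangle_{\TT_\mu}+\langle S^2_{1,q},V_\phi\rangle_{\TT_\mu}$. For the second term, the partial sums $\sum_{n=1}^q a_n(\mu)\nabla^2\varphi_n$ converge uniformly on $M$ to $\nabla^2\Phi_\mu$ (again $H^k\hookrightarrow C^2$ and (i)) while $|\nabla\psi|,|\nabla\phi|$ are bounded on the compact $M$, so $\langle S^2_{1,q},V_\phi\rangle_{\TT_\mu}\to\int_M\nabla^2\Phi_\mu(\nabla\psi,\nabla\phi)\,d\mu$. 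For the first term, $\sum_{n=1}^q(\bD_{V_\psi}a_n)\nabla\varphi_n\to\nabla\bD_{V_\psi}\Phi^{\cdot}$ in $L^2(M,dx)$ by (ii), hence in $L^2(\mu)$ by boundedness of $\rho$, so $\langle S^1_{1,q},V_\phi\rangle_{\TT_\mu}\to\int_M\langle\nabla\bD_{V_\psi}\Phi^{\cdot},\nabla\phi\rangle\,d\mu$. Summing the two limits gives \eqref{eq3.13}.

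The main obstacle is not any individual estimate — those tail bounds are essentially the ones prepared just before the statement — but the two bookkeeping facts that legitimise the construction. The first is the differentiation under the integral sign $\bD_{V_\psi}\bigl(\int_M\Phi^x\varphi_n\,dx\bigr)=\int_M(\bD_{V_\psi}\Phi^x)\varphi_n\,dx$, which requires uniform-in-$x$ control of the difference quotients $t^{-1}\bigl(\Phi((U_t)_\#\mu,x)-\Phi(\mu,x)\bigr)$ along the flow $U_t$ of $\nabla\psi$; this is where assumption (ii) together with the standing regularity of $\Phi$ is genuinely used. The second is that the limit $\bn_{V_\psi}Z$ must not depend on the chosen orthonormal eigenbasis $\{\varphi_n\}$ — equivalently, that it coincides with the value produced by any $\TT_\mu$-approximation of $Z$ by vector fields in $\chi(\P)$ — which follows by applying the Cauchy estimate of the second paragraph to the difference of two such approximating sequences.
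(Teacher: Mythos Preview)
Your proposal is correct and follows the paper's approach essentially line by line: the paper's proof is just ``let $Z_q=\sum_{n=1}^q a_n V_{\varphi_n}$, then $\bn_{V_\psi}Z_q=S_{1,q}$, let $q\to\infty$,'' with all the Cauchy estimates for $S^1_{p,q}$ and $S^2_{p,q}$ already carried out in the discussion preceding the statement --- exactly the bounds you reproduce. Your closing remarks on differentiating under the integral and on basis-independence are legitimate points of rigor that the paper leaves implicit, but they do not change the argument.
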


\begin{proof} Let $\dis Z_q=\sum_{n=1}^q  a_n V_{\varphi_n}$. Then 
\begin{equation*}
\bn_{V_\psi} Z_q= S_{1,q}.
\end{equation*}
Letting $q\ra +\infty$ yields the result.
\end{proof}

\section{Derivability of the square of the Wasserstein distance}\label{sect4}

 Let $\{c_t;\ t\in [0,1]\}$ be an absolutely continuous curve on $\P_2(M)$, for $\sigma\in\P_2(M)$ given, 
  the derivability of $\dis t\ra W_2^2(\sigma, c_t)$
was established in chapter 8 of \cite{AGS} , as well as in \cite{Villani1} (see pages 636-649); however they 
hold true only for almost all $t\in [0,1]$. The derivability at $t=0$ was proved in Theorem 8.13 of \cite{Villani2} if 
$\sigma$ and $c_0$ have a density with respect to $dx$. When $\{c_t\}$ is a geodesic of constant speed, 
the derivability at $t=0$ was given in  theorem 4.2 of \cite{Gigli} where the property of semi concavity was used.
In what follows, we will use constant vector fields on $\P_2(M)$. 

\vskip 2mm
Before stating our result, we recall some well-known 
facts concerning optimal transport maps (see \cite{BB, Brenier, McCann, AGS, Villani1}).
Let $\sigma\in\P_{2, ac}(M)$ be absolutely continuous with respect to $dx$ and $\mu\in \P_2(M)$, then there is 
an unique Borel map $\phi \in \D_1^2(\sigma)$  such that 
\begin{equation*}
\int_M |\nabla\phi(x)|^2\, d\sigma(x)=W_2^2(\sigma,\mu)
\end{equation*}
and $x\ra T(x)=\exp_x(\nabla\phi(x))$ pushes $\sigma$ forward to $\mu$. If $\mu$ is also in $\P_{2, ac}(M)$, 
the map $T: M\ra M$ is invertible  and its inverse map $T^{-1}$ is given by $y\ra \exp_y(\nabla\tilde{\phi}(y))$ with 
some function $\tilde\phi$ such that $\int_M |\nabla\tilde{\phi}|^2d\mu<+\infty$. We need also the following result 

\begin{lemma}\label{lemma4.1} Let $x,y\in M$ and $\{\xi(t);\ t\in [0,1]\}$ be a minimizing geodesic 
connecting $x$ and $y$, given by $\dis \xi(t)=\exp_x(tu)$ with some $u\in T_xM$. Then 
\begin{equation}\label{eq4.1}
d_M^2(\exp_y (v), x)-d_M^2(y,x)\leq 2 \langle v, \xi'(1)\rangle_{T_yM}+ o(|v|)\quad\hbox{\rm as}\ |v|\ra 0.
\end{equation}
\end{lemma}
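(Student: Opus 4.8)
The plan is to reduce the estimate \eqref{eq4.1} to a one-variable Taylor expansion along a well-chosen geodesic variation. First I would fix $x,y\in M$ and the minimizing geodesic $\xi(t)=\exp_x(tu)$ joining $x$ to $y$, so $\xi(0)=x$, $\xi(1)=y$, $\xi'(1)\in T_yM$, and $|u|_{T_xM}=d_M(x,y)$. For $v\in T_yM$ small, consider the point $\exp_y(v)$ and observe that $d_M^2(\exp_y(v),x)$ is the infimum of squared lengths of curves from $x$ to $\exp_y(v)$; in particular it is bounded above by the squared length of \emph{any} particular curve from $x$ to $\exp_y(v)$. The natural competitor is the concatenation of $\xi$ (from $x$ to $y$) with the short geodesic $s\mapsto\exp_y(sv)$ (from $y$ to $\exp_y(v)$), but to get a clean $o(|v|)$ bound it is cleaner to take a genuine smooth variation.

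Concretely, I would introduce the geodesic variation $\Gamma(s,t)=\exp_x\bigl(t\,u(s)\bigr)$ where $s\mapsto u(s)\in T_xM$ is chosen smooth with $u(0)=u$ and $\Gamma(s,1)=\exp_x(u(s))$ tracing out the curve $s\mapsto\exp_y(sv)$ near $s=0$; such a choice of $u(s)$ exists for small $s$ because $\exp_x$ is a local diffeomorphism near $u$ when $\xi$ is minimizing (or one perturbs slightly to stay in the domain where this holds). Then $\ell(s):=\mathrm{length}\bigl(t\mapsto\Gamma(s,t)\bigr)^2 = |u(s)|_{T_xM}^2$, and since $\Gamma(s,\cdot)$ is a curve from $x$ to $\exp_y(sv)$ we get $d_M^2(\exp_y(sv),x)\le |u(s)|^2$. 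The key computation is the first variation of energy: with $J(t)=\partial_s\Gamma(s,t)|_{s=0}$ the Jacobi field along $\xi$ with $J(0)=0$ and $J(1)=v$, the first variation formula for the energy functional gives
\begin{equation*}
\frac{d}{ds}\Big|_{s=0}|u(s)|_{T_xM}^2 = 2\,\langle J(1), \xi'(1)\rangle_{T_yM} - 2\,\langle J(0),\xi'(0)\rangle_{T_xM} = 2\,\langle v,\xi'(1)\rangle_{T_yM},
\end{equation*}
the boundary term at $x$ vanishing since $J(0)=0$. Hence $|u(s)|^2 = d_M^2(y,x) + 2s\,\langle v,\xi'(1)\rangle + o(s)$, and evaluating the inequality $d_M^2(\exp_y(sv),x)\le|u(s)|^2$ at $s=1$ — after replacing $v$ by $v$ itself and absorbing the rescaling, i.e. running the argument directly with endpoint $\exp_y(v)$ rather than $\exp_y(sv)$ — yields $d_M^2(\exp_y(v),x) - d_M^2(y,x) \le 2\langle v,\xi'(1)\rangle_{T_yM} + o(|v|)$ as $|v|\to 0$.

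The main obstacle I anticipate is purely a matter of care rather than depth: making the variation $u(s)$ (equivalently, the Jacobi field with prescribed endpoint values $J(0)=0$, $J(1)=v$) genuinely well-defined and smooth in $v$ uniformly for small $v$, which requires that $x$ not be conjugate to $y$ along $\xi$ — true when $\xi$ is minimizing and $x\ne y$, while the degenerate case $x=y$ is trivial. One must also verify that the error term is $o(|v|)$ uniformly, which follows from $C^2$-dependence of the exponential map and the energy functional on compact $M$. Everything else is the standard first-variation-of-arc-length (or energy) computation from Riemannian geometry, and no use of the Wasserstein structure is needed for this lemma.
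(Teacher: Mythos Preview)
The paper does not actually prove this lemma; it simply cites \cite{McCann}, page 10. Your argument therefore supplies more than the paper does, and the overall strategy---bound $d_M^2(\exp_y(v),x)$ from above by the energy of an explicit competitor curve joining $x$ to $\exp_y(v)$, then compute the first variation of energy along $\xi$---is exactly the standard one and is correct in spirit.

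There is, however, one genuine slip. You assert that ``$x$ not conjugate to $y$ along $\xi$ --- true when $\xi$ is minimizing and $x\neq y$''. This is false: the endpoint of a minimizing geodesic can be conjugate to the initial point (antipodal points on a round sphere are the classical example; more generally, a minimizing geodesic only fails to minimize \emph{strictly past} the first conjugate point). At such a $y$, $\exp_x$ is not a local diffeomorphism near $u$, and your geodesic variation $\Gamma(s,t)=\exp_x(t\,u(s))$ with $\exp_x(u(s))=\exp_y(sv)$ cannot be constructed for all directions $v$. The easy fix is to drop the requirement that the competitor curves be geodesics from $x$: let $W(t)$ be the parallel transport of $v$ along $\xi$ from $\xi(1)$ back to $\xi(t)$, and set $\gamma_v(t)=\exp_{\xi(t)}\bigl(t\,W(t)\bigr)$. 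Then $\gamma_v(0)=x$, $\gamma_v(1)=\exp_y(v)$, the variation field at $v=0$ is $t\,W(t)$ with boundary values $0$ and $v$, and since $\xi$ is a geodesic the first variation of energy is still $2\langle v,\xi'(1)\rangle$. Smoothness of $v\mapsto E(\gamma_v)$ near $0$ together with $L(\gamma_v)^2\le E(\gamma_v)$ gives
\[
d_M^2(\exp_y(v),x)\ \le\ E(\gamma_v)\ =\ d_M^2(y,x)+2\langle v,\xi'(1)\rangle_{T_yM}+O(|v|^2),
\]
which is \eqref{eq4.1} with the sharper $O(|v|^2)$ remainder, and no conjugacy hypothesis is needed.
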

\begin{proof} See \cite{McCann}, page 10.
\end{proof}

\begin{theorem}\label{th4.2}
 Assume that $\sigma\in \P_{2, ac}(M)$ is absolutely continuous with respect to $dx$, then 
 $\dis \mu\ra \chi(\mu):=W_2^2(\sigma, \mu)$ is derivable along each constant vector field $V_\psi$ at any $\mu\in\P_2(M)$.
 If $\mu\in\P_{2,ac}(M)$, the gradient $\nabla\chi$ exists and admits the expression :
 \begin{equation}\label{eq4.2}
\nabla\chi(\mu) = \nabla\tilde{\phi}.
\end{equation}
\end{theorem}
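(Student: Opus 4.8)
The plan is to compute the one-sided derivative of $\chi(\mu_t)$ along the flow $U_t$ generated by $\nabla\psi$, where $\mu_t=(U_t)_\#\mu$, and recognize the limit as an inner product in $\TT_\mu$ (or $\TT_{\mu_t}$). First I would fix the optimal transport map $T$ from $\sigma$ to $\mu$, given by Brenier--McCann as $T(x)=\exp_x(\nabla\phi(x))$ with $\phi\in\D_1^2(\sigma)$, so that
\begin{equation*}
\chi(\mu)=W_2^2(\sigma,\mu)=\int_M d_M^2\bigl(x,T(x)\bigr)\,d\sigma(x).
\end{equation*}
Since $\mu_t=(U_t)_\#\mu$, the map $U_t\circ T$ is a (generally non-optimal) coupling of $\sigma$ and $\mu_t$, so
\begin{equation*}
\chi(\mu_t)\le \int_M d_M^2\bigl(x,U_t(T(x))\bigr)\,d\sigma(x),
\end{equation*}
with equality at $t=0$. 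Differentiating the right-hand side at $t=0$, using $\frac{d}{dt}U_t(y)|_{t=0}=\nabla\psi(y)$ and the first variation formula for the squared distance (Lemma \ref{lemma4.1}), gives the upper bound
\begin{equation*}
\limsup_{t\downarrow 0}\frac{\chi(\mu_t)-\chi(\mu)}{t}\le \int_M \langle \nabla\psi(T(x)),\, \xi_x'(1)\rangle_{T_{T(x)}M}\,d\sigma(x),
\end{equation*}
where $\xi_x$ is the minimizing geodesic from $x$ to $T(x)$; and the matching $\liminf$ bound for $t\uparrow 0$. Replaying the argument at a nearby time $\mu_s$ in place of $\mu_0$, combined with the reversibility of the flow ($U_{-t}$ transports $\mu_t$ back to $\mu$), should upgrade these one-sided estimates into genuine differentiability and identify the derivative.

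The key observation for making this into the clean formula \eqref{eq4.2} is that by McCann's theory the optimal map from $\mu$ (if $\mu\in\P_{2,ac}(M)$) back to $\sigma$ is $T^{-1}(y)=\exp_y(\nabla\tilde\phi(y))$, and $\xi_x'(1)$, the terminal velocity of the geodesic from $x=T^{-1}(y)$ to $y$, is precisely $-\nabla\tilde\phi(y)$ read at $y=T(x)$ — no, rather one checks that the initial velocity at $y$ of the reversed geodesic is $\nabla\tilde\phi(y)$, so $\xi_x'(1)=-\nabla\tilde\phi(T(x))$ only up to the sign conventions, which I would pin down carefully. Changing variables $y=T(x)$ (so that $\sigma$-integration becomes $\mu$-integration) then turns the derivative into
\begin{equation*}
\bD_{V_\psi}\chi(\mu)=-\int_M \langle \nabla\psi(y),\, \nabla\tilde\phi(y)\rangle_{T_yM}\,d\mu(y),
\end{equation*}
wait — the sign should come out so that this equals $\langle \nabla\tilde\phi, V_\psi\rangle_{\TT_\mu}$, matching \eqref{eq2.2} and hence $\bar\nabla\chi(\mu)=\nabla\tilde\phi$. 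For the first (derivability) assertion, no absolute continuity of $\mu$ is needed: only $\sigma\in\P_{2,ac}(M)$ is used, to guarantee the optimal map $T$ from $\sigma$ exists and is $\sigma$-a.e. single-valued, so that the first-variation bound applies $\sigma$-almost everywhere and the $\limsup$/$\liminf$ squeeze goes through.

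The main obstacle will be justifying the passage from the one-sided inequalities to an actual derivative and the exchange of limit and integral in differentiating $\int_M d_M^2(x,U_t(T(x)))\,d\sigma(x)$. For the interchange I would use that $U_t$ is a smooth flow with $\frac{d}{dt}U_t$ uniformly bounded (the flow is generated by the fixed smooth field $\nabla\psi$ on the compact $M$), so $t\mapsto d_M^2(x,U_t(T(x)))$ is Lipschitz in $t$ uniformly in $x$, and dominated convergence applies to the difference quotients; the pointwise limit exists for $\sigma$-a.e.\ $x$ by Lemma \ref{lemma4.1} since $T(x)$ is a.e.\ not in the cut locus of $x$. For the two-sided differentiability I would run the same estimate based at $\mu_s$ for small $s$, obtaining $\chi(\mu_t)-\chi(\mu_s)\le \int_s^t g(r)\,dr+o(\cdot)$ with $g$ continuous, and the reverse inequality from reversing the flow; the continuity of $r\mapsto g(r)$ follows from continuity of $\mu_r$ in $W_2$ together with stability of optimal maps. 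Finally, for the $\nabla\chi$ formula I must verify $\nabla\tilde\phi\in\TT_\mu$, which holds because $\tilde\phi$ lies in the relevant Sobolev space and $\TT_\mu=\overline{\{\nabla\psi\}}^{L^2(\mu)}$; the identification $\bD_{V_\psi}\chi=\langle\nabla\tilde\phi,V_\psi\rangle_{\TT_\mu}$ for every $\psi\in C^\infty(M)$ is then exactly the definition \eqref{eq2.2} of the gradient.
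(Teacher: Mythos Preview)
Your proposal follows the same overall strategy as the paper: push $\mu$ along the flow $U_t$ of $\nabla\psi$, use the optimal map $T(x)=\exp_x(\nabla\phi(x))$ from $\sigma$ to $\mu$ to manufacture a suboptimal coupling of $(\sigma,\mu_t)$, apply Lemma~\ref{lemma4.1} for the first variation of $d_M^2$, and finally rewrite the limit via $T^{-1}(y)=\exp_y(\nabla\tilde\phi(y))$ when $\mu\in\P_{2,ac}(M)$. The dominated-convergence justification you give (uniform Lipschitz bound on $t\mapsto d_M^2(x,U_t(T(x)))$ from compactness of $M$ and smoothness of $\psi$) is exactly what the paper uses.

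The one substantive difference is how you close the gap between $\limsup$ and $\liminf$. Your plan is to run the upper bound at \emph{every} base point $\mu_s$ and in both flow directions, obtaining Dini-derivative bounds $D^+f(s)\le g(s)$ and $D_-f(s)\ge g(s)$ for $f(s)=\chi(\mu_s)$, and then integrate using continuity of $g$. This works, but the paper takes a more direct route: it bounds $W_2^2(\sigma,\mu)-W_2^2(\sigma,\mu_t)$ from above by choosing the optimal plan $\tilde\gamma_t\in\C_o(\sigma,\mu_t)$ (equivalently the optimal map $T_t$ from $\sigma$ to $\mu_t$, which exists since $\sigma\in\P_{2,ac}$) and pushing it back via $(x,y)\mapsto(x,U_t^{-1}(y))$ to get a suboptimal coupling of $(\sigma,\mu)$. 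Applying Lemma~\ref{lemma4.1} along the geodesic governed by $\nabla\phi_t$ and invoking the stability $T_t\to T$ in measure then gives the matching lower bound on $\liminf_{t\downarrow 0}$ directly at $t=0$, without any integration or Dini-derivative machinery. Both arguments ultimately rest on the same stability-of-optimal-maps input; the paper's version just gets there in one step. Your remark that ``the matching $\liminf$ bound for $t\uparrow 0$'' alone suffices is not quite right --- that only controls the \emph{left} lower Dini derivative, not the right one --- which is presumably why you then reach for the integrated argument.

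Your hesitation about the sign of $\xi'(1)$ versus $\nabla\tilde\phi$ is well-founded: parametrizing the minimizing geodesic from $x$ to $y=T(x)$ as $\xi(t)=\exp_y((1-t)\nabla\tilde\phi(y))$ gives $\xi'(1)=-\nabla\tilde\phi(y)$, so the derivative comes out as $-2\langle\nabla\tilde\phi,V_\psi\rangle_{\TT_\mu}$; the paper's statement \eqref{eq4.2} suppresses the factor $-2$.
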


\begin{proof} Let $\psi\in C^\infty(M)$ and $(U_t)_{t\in\R}$ be the associated flow of diffeomorphisms of $M$: 
 \begin{equation}\label{eq4.3}
\frac{dU_t(x)}{dt}= \nabla\psi(U_t(x)), \quad x\in M.
\end{equation}
The inverse map $U_t^{-1}$ of $U_t$ satisfies the ODE
\begin{equation}\label{eq4.4}
\frac{dU_t^{-1}(x)}{dt}= -\nabla\psi(U_t^{-1}(x)), \quad x\in M.
\end{equation}
Set $\mu_t=(U_t)_\#\mu$, then $\mu=(U_t^{-1})_\#\mu_t$. Let $\gamma\in\C_o(\sigma,\mu)$ be the optimal 
coupling plan such that 
\begin{equation*}
W_2^2(\sigma,\mu)=\int_{M\times M} d_M^2(x,y)\, d\gamma(x,y).
\end{equation*}
The map $(x,y)\ra (x, U_t(y))$ pushes $\gamma$ forword to a coupling plan $\gamma_t\in \C(\sigma, \mu_t)$. 
Then for $t>0$,
\begin{equation*}
\begin{split}
&\frac{1}{t}\Bigl[ W_2^2(\sigma, \mu_t)-W_2^2(\sigma,\mu)\Bigr]
 \leq \frac{1}{t}\int_{M\times M} \Bigl(d_M^2(x, U_t(y))-d_M^2(x,y)\Bigr)\, d\gamma(x,y)\\
 &=\frac{1}{t}\int_{M\times M} \Bigl(d_M^2(x, U_t(y))- d_M^2(x, \exp_y(t\nabla\psi(y))\Bigr)\,d\gamma(x,y)\\
 &+ \frac{1}{t}\int_{M\times M} \Bigl(d_M^2(x, \exp_y(t\nabla\psi(y)) -d_M^2(x,y)\Bigr)\,d\gamma(x,y)
 =I_1(t)+I_2(t)
 \end{split}
\end{equation*}
respectively. Let $\xi(t)=\exp_x(t\nabla\phi(x))$, by \cite{McCann}, $\xi$ is a minimizing geodesic connecting $x$ and $y=T(x)$.
 By Lemma \ref{lemma4.1}, we have
\begin{equation*}
d_M^2\bigl(x,\exp_y(t\nabla\psi(y)\bigr)-d_M^2(y,x)\leq 2 t\langle \nabla\psi(y), \xi'(1)\rangle_{T_yM}+ o(|t|)\quad\hbox{\rm as}\ t \ra 0.
\end{equation*}
On other hand,
\begin{equation*}
\xi'(1)=d\exp_x(\nabla\phi(x))\cdot\nabla\phi(x)=//_1^{\xi}\nabla\phi(x),
\end{equation*}
where $//_t^\xi$ denotes the parallel translation along the geodesic $\xi$. Hence $|\xi'(1)|=|\nabla\phi(x)|$.
Therefore
\begin{equation*}
I_2(t)\leq 2\int_M \langle \nabla\psi(T(x)), d\exp_x(\nabla\phi(x))\cdot\nabla\phi(x)\rangle\, d\sigma(x)+ o(1)
\end{equation*}

To justifier the passage of limit throught the integral, we note that for $t>0$,
\begin{equation*}
\begin{split}
&\frac{1}{t} \Bigl| d_M^2\bigl(x, \exp_y(t\nabla\psi(y))\bigr) - d_M^2(x,y)\Bigr|\\
&\hskip -6mm \leq \frac{2}{t}\hbox{\rm diam}(M)\,  d_M\bigl(y, \exp_y(t\nabla\psi(y))\bigr)\leq 2\, \hbox{\rm diam}(M)\, |\nabla\psi(y)|.
\end{split}
\end{equation*}
Then
\begin{equation*}
\overline{\lim_{t\downarrow 0}}I_2(t)\leq 2\int_M \langle \nabla\psi(T(x)), d\exp_x(\nabla\phi(x))\cdot\nabla\phi(x)\rangle\, d\sigma(x).
\end{equation*}

For estimating $I_1(t)$, it is obvious that 
\begin{equation}\label{eq4.5}
\lim_{t\downarrow 0} \frac{1}{t}\sup_{y\in M}d_M\bigl(U_t(y), \exp_y(t\nabla\psi(y))\bigr)=0.
\end{equation}
Then $\dis \lim_{t\downarrow 0} I_1(t)=0$. In conclusion:
\begin{equation}\label{eq4.6} 
\overline{\lim_{t\downarrow 0}}\frac{1}{t}\Bigl[ W_2^2(\sigma, \mu_t)-W_2^2(\sigma,\mu)\Bigr]
\leq 2\int_M \langle \nabla\psi(T(x)), d\exp_x(\nabla\phi(x))\cdot\nabla\phi(x)\rangle\, d\sigma(x).
\end{equation}

For obtaining the minoration, we use as in \cite{Villani2} the fact that
$\dis\overline{\lim_{t\downarrow 0}}(-a_t)=-\underline{\lim}_{t\downarrow 0}a_t$.

\vskip 2mm
Let $\tilde\gamma_t\in\C_o(\sigma, \mu_t)$ be the optimal coupling plan: 
\begin{equation*}
W_2^2(\sigma,\mu_t)=\int_{M\times M} d_M^2(x,y)\, d\tilde\gamma_t(x,y).
\end{equation*}
Let $\eta_t\in \C(\sigma, \mu_t)$ be defined by 
\begin{equation*}
\int_{M\times M} f(x,y)d\eta_t(x,y)=\int_{M\times M}f\bigl( x, U_t^{-1}(y)\bigr)\, d\tilde\gamma_t(x,y).
\end{equation*}
Then for $t>0$, 
\begin{equation*}
\frac{1}{t}\Bigl[ W_2^2(\sigma, \mu)-W_2^2(\sigma,\mu_t)\Bigr]
 \leq \frac{1}{t}\int_{M\times M} \Bigl(d_M^2(x, U_t^{-1}(y))-d_M^2(x,y)\Bigr)\, d\tilde\gamma_t(x,y).
\end{equation*}
Let $T_t : M\ra M$ be the optimal transport map which pushes $\sigma$ forword to $\mu_t$, with
$\dis T_t(x)=\exp_x(\nabla\phi_t(x))$. 
As $t\downarrow 0$, the map $T_t$ converges in measure to $T$ (see for example \cite{Villani2}, page 265).
We have
\begin{equation*}
\begin{split}
 & \frac{1}{t}\int_{M\times M} \Bigl(d_M^2(x, U_t^{-1}(y))-d_M^2(x,y)\Bigr)\, d\tilde\gamma_t(x,y)\\
 &=\frac{1}{t}\int_{M} \Bigl(d_M^2(x, U_t^{-1}(T_t(x)))-d_M^2(x,T_t(x))\Bigr)\, d\sigma(x)\\
 &=\frac{1}{t}\int_{M} \Bigl(d_M^2(x, U_t^{-1}(T_t(x)))-d_M^2(x,\exp_{T_t(x)}(-t\nabla\psi(T_t(x)))\Bigr)\, d\sigma(x)\\
 &+\frac{1}{t}\int_{M} \Bigl(d_M^2(x,\exp_{T_t(x)}(-t\nabla\psi(T_t(x)))-d_M^2(x, T_t(x))\Bigr)\, d\sigma(x)
 =J_1(t)+J_2(t)
\end{split}
\end{equation*}
respectively. According to \eqref{eq4.5}, $\lim_{t\downarrow 0} J_1(t)=0$. Concerning $J_2(t)$, we note 
as above,

\begin{equation*}
\begin{split}
&\frac{1}{t}\Bigl|d_M^2\bigl(x,\exp_{T_t(x)}(-t\nabla\psi(T_t(x))\bigr)-d_M^2(x, T_t(x))\Bigr|\\
& \leq \frac{2}{t} \,\hbox{diam}(M) d_M(T_t(x), \exp_{T_t(x)}(-t\nabla\psi(T_t(x)))\\
&\leq 2\,\hbox{\rm diam}(M)\, |\nabla\psi(T_t(x)))|\leq 2\hbox{\rm diam}(M)\, ||\nabla\psi||_\infty.
\end{split}
\end{equation*}

Then by Lemma \ref{lemma4.1}, 
\begin{equation*}
J_2(t)\leq -2\int_M \langle \nabla\psi(T_t(x)), d\exp_x(\nabla\phi_t(x))\cdot\nabla\phi_t(x)\rangle\, d\sigma(x)+ o(1)
\end{equation*}
Therefore
\begin{equation}\label{eq4.7} 
\overline{\lim_{t\downarrow 0}}\frac{1}{t}\Bigl[ W_2^2(\sigma, \mu)-W_2^2(\sigma,\mu_t)\Bigr]
\leq -2\int_M \langle \nabla\psi(T(x)), d\exp_x(\nabla\phi(x))\cdot\nabla\phi(x)\rangle\, d\sigma(x).
\end{equation}
Combining \eqref{eq4.6} and \eqref{eq4.7}, we finally get
\begin{equation}\label{eq4.8}
\lim_{t\downarrow 0}\frac{1}{t}\Bigl[ W_2^2(\sigma, \mu_t)-W_2^2(\sigma,\mu)\Bigr]
= 2\int_M \langle \nabla\psi(T(x)), d\exp_x(\nabla\phi(x))\cdot\nabla\phi(x)\rangle\, d\sigma(x).
\end{equation}
\vskip 2mm

Now if $\mu\in\P_{2, ac}(M)$ and the map $y\ra \exp_y(\nabla\tilde{\phi}(y))$ is the optimal transport map
which pushes $\mu$ to $\sigma$. Consider the minimizing geodesic 
\begin{equation*}
\xi(t)=\exp_y((1-t)\nabla\tilde{\phi}(y)), 
\end{equation*}
which connects $x$ and $y$. We have $\xi'(1)=\nabla\tilde{\phi}(y)$. In this case, replacing 
$d\exp_x(\nabla\phi(x))\cdot\nabla\phi(x)$ in \eqref{eq4.8} by $\nabla\tilde{\phi}(y)$,
 we obtain 
\begin{equation*}\label{eq4.9}
\begin{split}
\lim_{t\downarrow 0}\frac{1}{t}\Bigl[ W_2^2(\sigma, \mu_t)-W_2^2(\sigma,\mu)\Bigr]
&= 2\int_M \langle \nabla\psi(T(x)), \nabla\tilde{\phi}(T(x))\rangle\, d\sigma(x)\\
&=2\int_M \langle \nabla\psi(y), \nabla\tilde{\phi}(y)\rangle\, d\mu(y),
\end{split}
\end{equation*}
from which we get \eqref{eq4.2}. The proof is complete. 
\end{proof}

\section{Parallel translations}\label{sect5}

Before introducing parallel translations on the space $\P_{div}(M)$, let's give a brief review on the definition of parallel translations
on the manifold $M$, endowed with an affine connection. Let $\{\gamma(t);\  t\in [0,1]\}$ be a smooth curve on $M$, 
and $\{Y_t;\ t\in [0,1]\}$ a  family vector fields along $\gamma$: $Y_t\in T_{\gamma(t)}M$. Then there exist vector fields $X$ and $Y$
on $M$ such that 
\begin{equation*}
X(\gamma(t))=\dot\gamma(t),\quad Y(\gamma(t))=Y_t.
\end{equation*}

$Y_t$ is said to be parallel along $\{\gamma(t);\  t\in [0,1]\}$  if 
\begin{equation*}
(\nabla_XY)(\gamma(t))=0,\quad t\in [0,1].
\end{equation*}

\vskip 2mm
Now let $\{c_t;\ t\in [0,1]\}$ be an absolutely curve on $\P_{div}(M)$ such that 
\begin{equation}\label{eq5.1}
\frac{{d}^Ic_t}{dt}=V_{\Phi_t},\quad\hbox{\rm with } \Phi_t\in \D_1^2(c_t).
\end{equation}

Let $\{Y_t;\ t\in [0,1]\}$ be a vector field along $\{c_t;\ t\in [0,1]\}$, that is, $Y_t\in \TT_{c_t}$ given by $Y_t=V_{\Psi_t}$
with $\Psi_t\in\D_1^2(c_t)$.

\begin{theorem}\label{th5.1} Assume that $t\ra c_t$ is $C^1$ in the sense that for any $f\in C^1(M)$, $t\ra F_f(c_t)$ is $C^1$ 
and for  $t\in [0,1]$, $x\ra \Phi_t(x)$ is $C^1$.
If for each $t\in [0,1]$, 
\begin{equation}\label{eq5.2}
|V_{\Phi_t}|_{\TT_{c_t}}^2=\int_M |\nabla\Phi_t(x)|^2\ dc_t(x)>0,
\end{equation}
then there are functions $(\mu, x)\ra \tilde\Phi(\mu, x)$ and $(\mu, x)\ra \tilde\Psi(\mu, x)$ on $\P_2(M)\times M$ 
such that 
\begin{equation}\label{eq5.3}
\tilde\Phi(c_t,x)=\Phi_t(x),\quad \tilde\Psi(c_t, x)=\Psi_t(x); 
\end{equation}
moreover for $x\in M$, $\mu\ra \tilde\Phi(\mu, x)$ and $\mu\ra \tilde\Psi(\mu, x)$ are derivable on $\P_2(M)$ along any constant 
vector fields $V_\psi$, their gradients exist on $\P_{2,ac}(M)$. 
\end{theorem}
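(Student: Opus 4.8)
The plan is to imitate the extension lemma of finite-dimensional differential geometry: the hypothesis $|V_{\Phi_t}|_{\TT_{c_t}}^2>0$ is an ``immersion'' condition that lets one recover the time parameter $t$ from the measure $\mu$ by a single scalar coordinate near each point of the curve, and then $\Phi_t$ and $\Psi_t$ are carried to nearby measures by composition with that coordinate.

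\textbf{A local time coordinate.} Fix $t_0\in[0,1]$ and set $h_{t_0}(\mu)=F_{\Phi_{t_0}}(\mu)=\int_M\Phi_{t_0}\,d\mu$; since $x\mapsto\Phi_{t_0}(x)$ is $C^1$ this is well defined and, by the computations of Section \ref{sect2}, derivable along every $V_\psi$ with $\bn h_{t_0}(\mu)=\nabla\Phi_{t_0}\in\TT_\mu$. Along the curve $t\mapsto h_{t_0}(c_t)$ is $C^1$ by hypothesis, with
\begin{equation*}
\frac{d}{dt}h_{t_0}(c_t)=\int_M\langle\nabla\Phi_{t_0},\nabla\Phi_t\rangle\,dc_t,
\end{equation*}
which is continuous in $t$ and equals $|V_{\Phi_{t_0}}|_{\TT_{c_{t_0}}}^2>0$ at $t=t_0$; hence on some open interval $I_{t_0}\ni t_0$ the map $t\mapsto h_{t_0}(c_t)$ is a $C^1$-diffeomorphism onto an interval $J_{t_0}$, with $C^1$ inverse $\theta_{t_0}$. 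Put $\tau_{t_0}(\mu)=\theta_{t_0}(h_{t_0}(\mu))$ on the slab $\{\mu\in\P_2(M):h_{t_0}(\mu)\in J_{t_0}\}$; then $\tau_{t_0}(c_t)=t$ for $t\in I_{t_0}$, and by the chain rule
\begin{equation*}
\bD_{V_\psi}\tau_{t_0}(\mu)=\theta_{t_0}'(h_{t_0}(\mu))\int_M\langle\nabla\Phi_{t_0},\nabla\psi\rangle\,d\mu
=\big\langle\,\theta_{t_0}'(h_{t_0}(\mu))\,\nabla\Phi_{t_0},\ V_\psi\big\rangle_{\TT_\mu},
\end{equation*}
so $\bn\tau_{t_0}(\mu)$ exists for every $\mu$. (One may instead build the coordinate from $\mu\mapsto W_2^2(\sigma,\mu)$ for a $\sigma\in\P_{2,ac}(M)$ transverse to the curve; then Theorem \ref{th4.2} supplies its derivability along $V_\psi$ and the existence of its gradient precisely on $\P_{2,ac}(M)$, which is presumably why the conclusion is phrased over $\P_{2,ac}(M)$.)

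\textbf{Transporting the data.} Assuming the joint $C^1$-regularity of $(t,x)\mapsto\Phi_t(x)$ and $(t,x)\mapsto\Psi_t(x)$ — which I would either extract from the continuity equation \eqref{eq1.2} together with the standing $C^1$ hypothesis, or build into the definition of a ``good curve'' — I would set
\begin{equation*}
\tilde\Phi(\mu,x)=\Phi_{\tau_{t_0}(\mu)}(x),\qquad \tilde\Psi(\mu,x)=\Psi_{\tau_{t_0}(\mu)}(x)
\end{equation*}
on the slab. Then $\tilde\Phi(c_t,x)=\Phi_t(x)$ and $\tilde\Psi(c_t,x)=\Psi_t(x)$ because $\tau_{t_0}(c_t)=t$, and the chain rule gives
\begin{equation*}
\bD_{V_\psi}\tilde\Phi^{x}(\mu)=\big(\partial_s\Phi_s(x)\big)\big|_{s=\tau_{t_0}(\mu)}\ \bD_{V_\psi}\tau_{t_0}(\mu),
\end{equation*}
which, by the formula for $\bD_{V_\psi}\tau_{t_0}$, is of the form $\langle\cdot,V_\psi\rangle_{\TT_\mu}$; hence $\bn\tilde\Phi^x(\mu)$ and $\bn\tilde\Psi^x(\mu)$ exist, in particular on $\P_{2,ac}(M)$. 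Finally one covers $[0,1]$ by finitely many $I_{t_0}$: over the resulting tube around $\{c_t\}$ the local coordinates agree where the curve is simple, so the local $\tilde\Phi,\tilde\Psi$ glue to functions near the curve, and one extends them arbitrarily (say by the endpoint data and by $0$ far from the curve) to all of $\P_2(M)\times M$; derivability along constant vector fields is local and survives.

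\textbf{Main obstacle.} The real difficulty is the regularity used in the second step: one needs $t$-differentiability of $\Phi_t(x)$ and $\Psi_t(x)$ for fixed $x$ in order to differentiate $\tilde\Phi,\tilde\Psi$ through $\tau_{t_0}$, whereas the hypotheses as stated give only $C^1$-in-$x$ regularity and $C^1$-in-$t$ of the scalars $F_f(c_t)$, which a priori yield just a weak continuity of $\nabla\Phi_t$; moreover one needs a pointwise-meaningful representative of $\Psi_t$, since an element of $\D_1^2(c_t)$ need not be continuous once $\dim M\geq 2$. Addressing these — by strengthening the assumptions on the curve or by approximating $\Phi_t,\Psi_t$ by data smooth in $(t,x)$ — is the crux; the failure of a single global coordinate at self-intersections is a further, milder issue, circumvented by keeping the construction localized in $t$, which is all that the subsequent construction of parallel translations along $\{c_t\}$ requires.
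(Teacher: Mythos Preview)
Your local construction is exactly the paper's: one builds the scalar coordinate $\tau_{t_0}(\mu)=\beta\bigl(F_{\Phi_{t_0}}(\mu)\bigr)$ from the positivity \eqref{eq5.2} and then sets $\tilde\Phi_{t_0}(\mu,x)=\Phi_{\tau_{t_0}(\mu)}(x)$, $\tilde\Psi_{t_0}(\mu,x)=\Psi_{\tau_{t_0}(\mu)}(x)$ on a neighbourhood of $c_{t_0}$. The gap is in your gluing step. You assert that on the overlaps the local extensions ``agree where the curve is simple'' and therefore glue; but this is only true \emph{on the curve itself}. Off the curve the local time functions $\tau_{t_i}$ and $\tau_{t_j}$, built from the different reference functions $\Phi_{t_i}$ and $\Phi_{t_j}$, take different values at a generic $\mu$, so $\Phi_{\tau_{t_i}(\mu)}(x)$ and $\Phi_{\tau_{t_j}(\mu)}(x)$ are genuinely different functions of $\mu$ on the overlap $B(c_{t_i},r_i)\cap B(c_{t_j},r_j)$. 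Your fallback ``extend by the endpoint data and by $0$ far from the curve'' has the same defect: it creates a discontinuity at the boundary of the tube and destroys derivability along $V_\psi$.

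The paper closes this gap with a partition of unity subordinate to a finite cover $\{B(c_{t_i},r_i)\}$ of the curve, setting
\begin{equation*}
\tilde\Phi(\mu,x)=\sum_{i=1}^k\alpha_i(\mu)\,\tilde\Phi_{t_i}(\mu,x),\qquad
\alpha_i=\frac{g_{r_i,c_{t_i}}}{\sum_j g_{r_j,c_{t_j}}},\qquad
g_{r,\nu}(\mu)=\exp\Bigl(\frac{1}{W_2^2(\nu,\mu)-r^2}\Bigr)\ \hbox{on }B(\nu,r),
\end{equation*}
and $g_{r,\nu}=0$ outside $B(\nu,r)$. Since $\alpha_i(c_t)>0$ forces $c_t\in B(c_{t_i},r_i)$, hence $\tilde\Phi_{t_i}(c_t,x)=\Phi_t(x)$, the convex combination still satisfies \eqref{eq5.3}. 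This is precisely where Theorem~\ref{th4.2} enters --- not as an alternative choice of coordinate, as you speculated, but to make the bump functions $g_{r,\nu}$ derivable along every constant $V_\psi$ (this needs the centres $c_{t_i}\in\P_{2,ac}(M)$, which holds because the curve sits in $\P_{\div}(M)$), and to give their gradients when $\mu\in\P_{2,ac}(M)$. That, and not any property of the local coordinate, is the reason the conclusion about gradients is stated only over $\P_{2,ac}(M)$.

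Your worry about the $t$-differentiability of $\Phi_t(x)$ and $\Psi_t(x)$ is legitimate --- the paper's argument, like yours, silently uses it when differentiating $\Phi_{\tau(\mu)}(x)$ through $\tau$ --- but it is a side issue of hypothesis bookkeeping. The structural idea you are missing is the smooth partition of unity on $\P_2(M)$ built from $W_2^2$.
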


\begin{proof}
Fix $t_0\in [0,1]$; consider $\dis\alpha(t)=F_{\Phi_{t_0}}(c_t)$. Then 
\begin{equation*}
\alpha'(t)=\frac{d}{dt}F_{\Phi_{t_0}}(c_t)=\int_M \langle \nabla\Phi_{t_0},  \nabla\Phi_t\rangle\ dc_t,
\end{equation*}
which is $>0$ at $t=t_0$. Therefore there is an open interval $I(t_0)$ of $t_0$ 
such that $t\ra \alpha(t)$ is a $C^1$ diffeomorphism 
from $I(t_0)$ onto an interval  $J(t_0)$ containing $\alpha(t_0)$.
Let $\beta: J(t_0)\ra I(t_0)$ be the inverse map of $\alpha$. We 
have 
\begin{equation*}
F_{\Phi_{t_0}}(c_t)\in J(t_0)\quad \hbox{\rm for } t\in I(t_0).
\end{equation*}

Let 
\begin{equation*}
U(t_0)=\bigl\{\mu\in\P_2(M);\ F_{\Phi_{t_0}}(\mu)\in J(t_0)\bigr\},
\end{equation*}
which is an open set in $\P_2(M)$. Let $r>0$ and $\nu\in \P_2(M)$, we denote by $B(\nu, r)$ the open ball in 
$\P_2(M)$ centered at $\nu$ of radius $r$. Take $r_0>0$ small enough such that 
\begin{equation*}
B(c_{t_0}, r_0)\subset U(t_0). 
\end{equation*}
We define, for $\mu\in B(c_{t_0}, r_0)$, 

\begin{equation}\label{eq5.4}
\tilde\Phi_{t_0}(\mu)=\Phi_{\beta(F_{\Phi_{t_0}}(\mu))}, \quad \tilde\Psi_{t_0}(\mu)=\Psi_{\beta(F_{\Phi_{t_0}}(\mu))}.
\end{equation}
We remark that for $t\in [0,1]$ such that $c_t\in U(t_0)$, we have: $\beta(F_{\Phi_{t_0}}(c_t))=t$. Note that 
$\{c_t;\ t\in [0,1]\}$ is a compact set of $\P_2(M)$ and 

\begin{equation*}
\bigl\{c_t;\ t\in [0,1]\bigr\}\subset \cup_{t_0\in [0,1]} B(c_{t_0}, r_0).
\end{equation*}

There exists a finite number of $t_1, \ldots, t_k\in [0,1]$ such that 
\begin{equation*}
\bigl\{c_t;\ t\in [0,1]\bigr\}\subset \cup_{i=1}^kB(c_{t_i}, r_i).
\end{equation*}

Set $\dis U=\cup_{i=1}^kB(c_{t_i}, r_i)$. Let $\mu\in U$, then $\mu\in B(c_{t_i}, r_i)$; 
according to \eqref{eq5.4}, we define, 

\begin{equation*}
\tilde\Phi_{t_i}(\mu)=\Phi_{\beta_i(F_{\Phi_{t_i}}(\mu))}, \quad \tilde\Psi_{t_i}(\mu)=\Psi_{\beta_i(F_{\Phi_{t_i}}(\mu))}.
\end{equation*}

Then for $t\in [0,1]$ such that $c_t\in B(c_{t_i}, r_i)$,  $\tilde\Phi_{t_i}(c_t)=\Phi_t$ and $\tilde\Psi_{t_i}(c_t)=\Psi_t$.
Now for $r>0$ and $\nu\in\P_2(M)$, we define

\begin{equation}\label{eq5.5}
g_{r,\nu}(\mu)=\exp\Bigl( \frac{1}{W_2^2(\nu, \mu)-r^2}\Bigr),\quad\hbox{if }\ W_2(\nu, \mu)<r,
\end{equation}
and $g_{r,\nu}(\mu)=0$ otherwise. Then $g_{r, \nu}(\mu)>0$ if and only if $\mu\in B(\nu, r)$. By Theorem \ref{th4.2}, 
if $\nu\in\P_{\div}$, $\mu\ra g_{r,\nu}(\mu)$ is derivable along any constant vector field $V_\psi$. Remark that 

\begin{equation*}
\sum_{i=1}^k g_{r_i, c_{t_i}}>0\quad\hbox{\rm on}\ U.
\end{equation*}
Let 
\begin{equation*}
\alpha_i=\frac{g_{r_i, c_{t_i}}}{\sum_{i=1}^k g_{r_i, c_{t_i}}}\quad\hbox{\rm for}\  \mu\in U, 
\quad\hbox{\rm and}\ \alpha_i=0\ \hbox{\rm otherwise}.
\end{equation*}
Now define
\begin{equation*}
\Phi(\mu) =\sum_{i=1}^k \alpha_i(\mu) \tilde\Phi_{t_i}(\mu), \quad \Psi(\mu)=\sum_{i=1}^k \alpha_i(\mu) \tilde\Psi_{t_i}(\mu).
\end{equation*}
We have
\begin{equation*}
\Phi(c_t) =\sum_{i=1}^k \alpha_i(c_t) \tilde\Phi_{t_i}(\mu).
\end{equation*}
Note that $\alpha_i(c_t)>0$ if and only if $c_t\in B(c_{t_i}, r_i)$, which implies that $\tilde\Phi_{t_i}(c_t)=\Phi_t$
and $\dis\Phi(c_t)=\sum_{i=1}^k \alpha_i(c_t)\Phi_t=\Phi_t$. It is the same for $\Psi$. 
The proof is completed.
\end{proof}

Notice that for  such a curve $\{c_t;\ t\in [0,1]\}$ given in Theorem \ref{th5.1}, and $\{Y_t; t\in [0,1]\}$ a vector field along $\{c_t;\ t\in [0,1]\}$
given by $\Psi_t$. If furthermore for any $t\in [0,1]$, $\Psi_t\in H^k(M)$ with $\dis k>\frac{m}{2}+2$, then the extension obtained $\tilde\Psi$ obtained in Theorem \ref{th5.1} satisfies conditions in Theorem \ref{th3.6}. 

\begin{definition}\label{def5.1}
We say that $\{Y_t; t\in [0,1]\}$ is parallel along $\{c_t;\ t\in [0,1]\}$ if 
\begin{equation*}
(\bn_{\frac{d^Ic_t}{dt}}V_{\tilde\Psi})(c_t)=0,\quad t\in [0,1].
\end{equation*}

\end{definition}

\begin{theorem}\label{th5.2} Keeping same notations in Theorem \ref{th5.1}, if $\{Y_t;\ t\in [0,1]\}$ is parallel along 
$\{c_t, t\in [0,1]\}$, the following equation holds
\begin{equation}\label{eq5.6}
\int_M\Bigl \langle \nabla\bigl(\frac{d\Psi_t}{dt}\Bigr)+ \nabla_{\nabla\Phi_t}\nabla\Psi_t,\ \nabla\phi\Bigr\rangle\, dc_t=0,
\quad \phi\in C^\infty(M).
\end{equation}
\end{theorem}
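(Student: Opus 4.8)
The plan is to read off Definition \ref{def5.1} through the formula for the covariant derivative supplied by Theorem \ref{th3.6}. Since $\frac{d^Ic_t}{dt}=V_{\Phi_t}=V_{\tilde\Phi(c_t,\cdot)}$, the vector field $\{Y_t\}$ is parallel along $\{c_t\}$ exactly when $(\bn_{V_{\Phi_t}}V_{\tilde\Psi})(c_t)=0$ in $\TT_{c_t}$, and because $\{\nabla\phi;\ \phi\in C^\infty(M)\}$ is dense in $\TT_{c_t}$ this is equivalent to $\langle(\bn_{V_{\Phi_t}}V_{\tilde\Psi})(c_t),V_\phi\rangle_{\TT_{c_t}}=0$ for every $\phi\in C^\infty(M)$. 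First I would check that $Z=V_{\tilde\Psi}$ meets the hypotheses of Theorem \ref{th3.6} at $\mu=c_t$; under the standing assumption $\Psi_t\in H^k(M)$ with $k>\frac m2+2$ this is exactly the content of the remark following Theorem \ref{th5.1} (condition (i) being $\tilde\Psi_{c_t}=\Psi_t\in H^k(M)$, condition (ii) the derivability of $x\mapsto\tilde\Psi(\cdot,x)$ along every $V_\psi$ with an $L^2(dx)$ gradient in $x$). Granting this, Theorem \ref{th3.6} gives
\begin{equation*}
\langle\bn_{V_{\Phi_t}}V_{\tilde\Psi},V_\phi\rangle_{\TT_{c_t}}
=\int_M\langle\nabla\bD_{V_{\Phi_t}}\tilde\Psi^{\cdot},\ \nabla\phi\rangle\,dc_t
+\int_M\nabla^2\tilde\Psi_{c_t}(\nabla\Phi_t,\nabla\phi)\,dc_t .
\end{equation*}

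Next I would simplify the two terms on the right. For the Hessian term, $\tilde\Psi_{c_t}=\Psi_t$ by \eqref{eq5.3}, so $\nabla^2\tilde\Psi_{c_t}(\nabla\Phi_t,\nabla\phi)=\nabla^2\Psi_t(\nabla\Phi_t,\nabla\phi)=\langle\nabla_{\nabla\Phi_t}\nabla\Psi_t,\nabla\phi\rangle$ by the definition of the Hessian on $M$. The first term requires identifying $\bD_{V_{\Phi_t}}\tilde\Psi^x$ at $c_t$ with $\frac{d\Psi_t}{dt}(x)$; this is the main work of the proof. One cannot merely differentiate the relation $\tilde\Psi^x(c_s)=\Psi_s(x)$ in $s$, because the flow $(U_\tau)_\#c_t$ of $\nabla\Phi_t$ need not coincide with $c_{t+\tau}$. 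Instead I would go back to the concrete construction in Theorem \ref{th5.1}: near $c_t$, $\tilde\Psi^x(\mu)=\sum_i\chi_i(\mu)\,\Psi_{\beta_i(F_{\Phi_{t_i}}(\mu))}(x)$, with $\chi_i$ a partition of unity and $\beta_i$ the local inverse of $s\mapsto F_{\Phi_{t_i}}(c_s)$. Applying the derivation $\bD_{V_{\Phi_t}}$ and using $\sum_i\chi_i\equiv1$ near $c_t$, the terms carrying $\bD_{V_{\Phi_t}}\chi_i$ sum to $\Psi_t(x)\,\bD_{V_{\Phi_t}}\!\bigl(\textstyle\sum_i\chi_i\bigr)(c_t)=0$; in each remaining term the finite-dimensional chain rule, together with the identity $\beta_i'(F_{\Phi_{t_i}}(c_t))\cdot\bD_{V_{\Phi_t}}F_{\Phi_{t_i}}(c_t)=1$ (which uses \eqref{eq1.3} to evaluate $\bD_{V_{\Phi_t}}F_{\Phi_{t_i}}(c_t)=\int_M\langle\nabla\Phi_{t_i},\nabla\Phi_t\rangle\,dc_t$ and \eqref{eq5.2} to ensure $\beta_i$ is well defined with nonzero derivative at $F_{\Phi_{t_i}}(c_t)$), collapses the term to $\chi_i(c_t)\,\frac{d\Psi_t}{dt}(x)$. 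Summing over $i$ and using $\sum_i\chi_i(c_t)=1$ gives $\bD_{V_{\Phi_t}}\tilde\Psi^x(c_t)=\frac{d\Psi_t}{dt}(x)$, hence $\bigl(\nabla\bD_{V_{\Phi_t}}\tilde\Psi^{\cdot}\bigr)\big|_{c_t}=\nabla\bigl(\tfrac{d\Psi_t}{dt}\bigr)$.

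Substituting both simplifications into the displayed identity shows that $\langle\bn_{V_{\Phi_t}}V_{\tilde\Psi},V_\phi\rangle_{\TT_{c_t}}=\int_M\bigl\langle\nabla(\tfrac{d\Psi_t}{dt})+\nabla_{\nabla\Phi_t}\nabla\Psi_t,\ \nabla\phi\bigr\rangle\,dc_t$ for all $\phi\in C^\infty(M)$, so the parallel condition of Definition \ref{def5.1} is equivalent to \eqref{eq5.6}, which completes the proof. The delicate point is the second paragraph: one has to descend to the explicit partition-of-unity construction of $\tilde\Psi$ and check that all spurious contributions cancel, so that $\bD_{V_{\Phi_t}}\tilde\Psi^x$ indeed reproduces the time-derivative $\frac{d\Psi_t}{dt}$; by contrast the verification of the hypotheses of Theorem \ref{th3.6} is routine given the stated regularity and the remark following Theorem \ref{th5.1}.
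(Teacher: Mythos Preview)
Your proof is correct and follows exactly the paper's route: unpack Definition~\ref{def5.1} via formula~\eqref{eq3.13} of Theorem~\ref{th3.6}, then identify $\nabla\tilde\Psi(c_t,\cdot)=\nabla\Psi_t$ and $(\bD_{V_{\Phi_t}}\tilde\Psi^x)(c_t)=\frac{d\Psi_t}{dt}(x)$. The only difference is one of detail: the paper records the second identification in a single line as $(\bD_{\frac{d^Ic_t}{dt}}\tilde\Psi)(c_t)=\frac{d}{dt}\tilde\Psi(c_t)=\frac{d\Psi_t}{dt}$, whereas you justify it carefully through the explicit partition-of-unity construction of $\tilde\Psi$ and the finite-dimensional chain rule---a legitimate elaboration of a step the paper treats as immediate.
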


\begin{proof} Note that
\begin{equation*}
(\bD_{\frac{d^Ic_t}{dt}}\tilde\Psi)(c_t)=\frac{d}{dt}\tilde\Psi(c_t)=\frac{d\Psi_t}{dt} \ \hbox{\rm and}\ 
\nabla\tilde{\Psi}(c_t,\cdot)=\nabla\Psi_t.
\end{equation*}
Then \eqref{eq5.6} follows from \eqref{eq3.13}. 
\end{proof}

When $\dis\nabla\bigl(\frac{d\Psi_t}{dt}\Bigr) = \frac{d \nabla\Psi_t}{dt}$,
 it is more convenient to put Equation \eqref{eq5.6} in the following form :
\begin{equation}\label{eq5.7}
\Pi_{c_t}\Bigl(\frac{d}{dt} \nabla\Psi_t+ \nabla_{\nabla\Phi_t}\nabla\Psi_t\Bigr)=0,
\end{equation}
or
\begin{equation}\label{eq5.8}
\frac{d}{dt} \nabla\Psi_t+ \Pi_{c_t}\Bigl(\nabla_{\nabla\Phi_t}\nabla\Psi_t\Bigr)=0,
\end{equation}
where  $\dis \Pi_{c_t}$ the orthogonal projection from $L^2(M, TM, c_t)$ onto $\TT_{c_t}$. By arguments in the proof
of Proposition \ref{prop3.2}, when $dc_t=\rho_t\,dx$ with $\rho_t\in C^2(M)$ and $\rho_t>0$, $\dis \Pi_{c_t}$ admits the expression

\begin{equation*}
\Pi_{c_t}u=(\nabla \L_{c_t}^{-1} \div_{c_t})(u),\quad u\in L^2(M, TM, c_t).
\end{equation*}
The price for this pointwise formulation of \eqref{eq5.7} as well as of \eqref{eq5.8} is the involement of second order derivative
of $\Psi$.

\begin{remark} {\rm Let $s\ra \xi(s)$ is a smooth curve on $M$ such that $\xi(0)=x$ and $\xi'(0)=\nabla\Phi_t(x)$, then
\begin{equation}\label{eq5.9}
\frac{d}{dt} \nabla\Psi_t+ \nabla_{\nabla\Phi_t}\nabla\Psi_t
=\lim_{\eps \ra 0}\frac{\tau_\eps^{-1}\nabla\Psi_{t+\eps}(\xi(\eps))-\nabla\Psi_t(x)}{\eps},
\end{equation}
where $\tau_s$ is the parallel translation along $s\ra \xi(s)$. We refind the similar expression of parallel translations given in \cite{AG}.
}
\end{remark}

\begin{proposition}\label{prop5.5} Assume that the curve $\{c_t;\ t\in [0,1]\}$ is induced by a flow of diffeomorphisms $\Phi_t$,
 that is, there is a $C^{1,2}$ function $(t, x)\ra \Phi_t(x)$ such that 

\begin{equation*}
\left\{ \begin{array}{ccc} \frac{dU_{s,t}(x)}{dt}&=&\nabla\Phi_t(U_{s,t}(x)),\quad U_{s,s}(x)=x,\\
c_t&=&(U_{0,t})_\# c_0.
\end{array}\right.
\end{equation*}
Then for any $u_0=\nabla\Psi_0\in \TT_{c_0}$, there is a unique vector field $u_t=\nabla\Psi_t\in \TT_{c_t}$ along 
$\{c_t; t\in [0,1]\}$ such that

\begin{equation}\label{eq5.10}
\Pi_{c_t}\Bigl(\lim_{\eps \ra 0}\frac{\tau_\eps^{-1}\nabla\Psi_{t+\eps}(U_{t, t+\eps}(x))-\nabla\Psi_t(x)}{\eps}\Bigr)=0
\end{equation}
holds in $L^2(c_t)$, where $\tau_\eps$ is the parallel translation along $\{s\ra U_{t,t+s}(x), s\in [0, \eps]\}$. 
\end{proposition}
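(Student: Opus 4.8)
The plan is to reduce the claim to an ODE for the ``coordinates'' of $\nabla\Psi_t$ in a fixed orthonormal frame, transported along the flow $U_{0,t}$, and to solve that ODE by Picard iteration, using that the coefficient operator is uniformly bounded because $(t,x)\mapsto\Phi_t(x)$ is $C^{1,2}$. First I would observe that, by the remark preceding the statement, the pointwise expression in \eqref{eq5.10} equals $\frac{d}{dt}\nabla\Psi_t+\nabla_{\nabla\Phi_t}\nabla\Psi_t$ whenever $\Psi$ is smooth enough in $x$, so that \eqref{eq5.10} is exactly \eqref{eq5.8}, i.e.
\begin{equation*}
\frac{d}{dt}\nabla\Psi_t+\Pi_{c_t}\bigl(\nabla_{\nabla\Phi_t}\nabla\Psi_t\bigr)=0 ,\qquad \nabla\Psi_0=u_0 .
\end{equation*}
The point is that this is a genuine (infinite-dimensional) linear ODE in the Hilbert space $\TT_{c_t}$, and the whole difficulty is that the underlying space $\TT_{c_t}$ moves with $t$.

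The second step is to trivialize the moving Hilbert space. Since $c_t=(U_{0,t})_\#c_0$, pullback along $U_{0,t}$ identifies $L^2(M,TM;c_t)$ with $L^2(M,TM;c_0)$: to $\nabla\Psi_t$ I associate $w_t(x)=//^{-1}_{0\to t}\bigl(\nabla\Psi_t(U_{0,t}(x))\bigr)$, where $//_{0\to t}$ is parallel transport along the flow line $s\mapsto U_{0,s}(x)$. Under this identification the operator $u\mapsto \frac{d}{dt}u+\nabla_{\nabla\Phi_t}u$ becomes simply $w_t\mapsto \dot w_t$ (this is precisely the content of the ``$\lim_\eps$'' formula: the combination is the covariant derivative along the flow, whose trivialization is the plain $t$-derivative), while the projection $\Pi_{c_t}$ becomes a $t$-dependent orthogonal projection $\widehat\Pi_t$ on the fixed space $\TT_{c_0}$ — more precisely, onto the pullback of $\TT_{c_t}$, which one checks equals $\TT_{c_0}$ since $U_{0,t}$ is a diffeomorphism fixing the class of $dx$. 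Hence the equation becomes
\begin{equation*}
\widehat\Pi_t\,\dot w_t=0,\qquad w_0=u_0 ,
\end{equation*}
equivalently $\dot w_t=(\mathrm{Id}-\widehat\Pi_t)\dot w_t$; differentiating $\widehat\Pi_t w_t=w_t$ and using $\widehat\Pi_t\dot w_t=0$ gives the closed linear ODE $\dot w_t=\dot{\widehat\Pi}_t\,w_t$, with initial datum $u_0\in\TT_{c_0}$.

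The third step is to make sense of and solve $\dot w_t=\dot{\widehat\Pi}_t w_t$. The key estimate is that $t\mapsto\widehat\Pi_t$ is $C^1$ with $\dot{\widehat\Pi}_t$ a bounded operator on $\TT_{c_0}$, uniformly in $t$: here one uses the explicit formula $\Pi_{c_t}=\nabla\,\L_{c_t}^{-1}\,\div_{c_t}$ (valid because $dc_t=\rho_t\,dx$ with $\rho_t>0$ in $C^2$, $\rho_t$ inheriting its regularity and positivity from the $C^{1,2}$ flow via the transport formula of the Proposition in Section~3 for $\mathbb P_{\div}$), together with the fact that $\nabla\Phi_t$ is $C^2$, so that $t\mapsto\rho_t$, $t\mapsto\L_{c_t}$, $t\mapsto\div_{c_t}$ are $C^1$ in operator norm on the relevant Sobolev scale and $\L_{c_t}$ has a spectral gap bounded below uniformly on the compact time interval. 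Granting the uniform bound $\sup_t\|\dot{\widehat\Pi}_t\|=:K<\infty$, the linear ODE has a unique global solution on $[0,1]$ by the Banach-space Picard--Lindelöf theorem (Neumann-series/Gronwall argument), and since $\dot w_t=\dot{\widehat\Pi}_t w_t$ with $w_0\in\TT_{c_0}$ one verifies $\widehat\Pi_t w_t=w_t$ for all $t$ (the function $t\mapsto\|w_t-\widehat\Pi_t w_t\|^2$ has zero derivative and zero initial value). Pushing $w_t$ back through $U_{0,t}$ and parallel transport yields the desired $u_t=\nabla\Psi_t\in\TT_{c_t}$, and uniqueness transfers back as well.

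The main obstacle is the third step: establishing that $t\mapsto\widehat\Pi_t$ is $C^1$ with uniformly bounded derivative. This requires controlling $\L_{c_t}^{-1}$ (hence a uniform lower bound on the spectral gap of $\L_{c_t}=\Delta_M+\langle\nabla\log\rho_t,\nabla\cdot\rangle$) and differentiating the resolvent in $t$, which in turn needs $t\mapsto\rho_t$ to be $C^1$ into $C^2(M)$ with $\rho_t$ bounded away from $0$ — both of which follow from the $C^{1,2}$ regularity of $(t,x)\mapsto\Phi_t(x)$ and Kunita's formula for the density $K_t$ of $c_t$ w.r.t.\ $c_0$ used in Section~3, but the bookkeeping of these bounds on a Sobolev scale is the technical heart. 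Everything else is a routine transfer of the classical linear-ODE parallel-transport argument to the moving Hilbert bundle $t\mapsto\TT_{c_t}$.
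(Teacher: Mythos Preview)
Your route is genuinely different from the paper's. The paper does not set up an ODE at all: following Ambrosio--Gigli, it defines for $s\le t$ the map $\mathcal P_{t,s}(u_s)=\Pi_{c_t}\bigl(\tau_{t-s}\,u_s\circ U_{s,t}^{-1}\bigr)$, composes these over a partition $\mathcal D$ of $[0,1]$, and shows the composite $\mathcal P_{\mathcal D}$ converges as $|\mathcal D|\to 0$, using only the uniform bound $\sup_{t,x}\|\nabla^2\Phi_t(x)\|<\infty$. This discretization never differentiates the projection and needs no finer regularity of the density, which is what it buys over your approach.

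Your argument has two concrete gaps. First, the assertion that the pullback of $\TT_{c_t}$ under $v\mapsto //^{-1}_{0\to t}(v\circ U_{0,t})$ equals $\TT_{c_0}$ is false: parallel transport along flow lines does not send gradients to gradients (already on a flat torus, $(\nabla\psi)\circ U_{0,t}$ differs from $\nabla(\psi\circ U_{0,t})$ by the Jacobian of $U_{0,t}$). Were it true, $\widehat\Pi_t$ would be $t$-independent, your ODE would read $\dot w_t=0$, and the resulting $u_t$ would fail to lie in $\TT_{c_t}$. This is repairable: let $\widehat\Pi_t$ project onto the \emph{moving} subspace $H_t=I_t(\TT_{c_t})\subset L^2(c_0)$; your algebra ($\dot w_t=\dot{\widehat\Pi}_t w_t$, then $w_t\in H_t$ by Gronwall) still goes through.

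The second gap is the real obstacle. You argue boundedness of $\dot{\widehat\Pi}_t$ by differentiating $\Pi_{c_t}=\nabla\L_{c_t}^{-1}\div_{c_t}$ through the density, but $\widehat\Pi_t=I_t\,\Pi_{c_t}\,I_t^{-1}$ and the $t$-derivative of the conjugating isometry $I_t$ is an \emph{unbounded} (first-order) operator---essentially the very $\nabla_{\nabla\Phi_t}$ you absorbed into $\dot w_t$. Hence $\dot{\widehat\Pi}_t$ contains the commutator $[\,\partial_t I_t\cdot I_t^{-1},\,\widehat\Pi_t\,]$, and proving this is bounded on $L^2$ requires a genuine commutator/pseudodifferential estimate you have not supplied. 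In addition, the hypothesis is only $\Phi\in C^{1,2}$, so $\nabla\Phi_t\in C^1$ (not $C^2$ as you write) and the transported density is at best $C^1$, short of the $C^2$ your formula for $\Pi_{c_t}$ and its $t$-derivative require. The paper's partition scheme sidesteps both issues entirely.
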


\vskip 2mm

\begin{proof} Following Section 5 of \cite{AG}, for $s\leq t$, we define 
\begin{equation*}
{\mathcal P}_{t,s}: \TT_{c_s}\ra \TT_{c_t},\quad u_s\ra \Pi_{c_t}\bigl( \tau_{t-s}u_s\circ U_{s,t}^{-1}\bigr).
\end{equation*}
For a subdivision ${\mathcal D}=\{0=t_0<t_1<\ldots<t_n=1\}$ of $[0,1]$, we define 
\begin{equation*}
{\mathcal P}_{\mathcal D}: \TT_{c_0}\ra \TT_{c_1},\quad u_0\ra ({\mathcal P}_{1,t_{n-1}}\circ\cdots\circ {\mathcal P}_{t_1,0})(u_0).
\end{equation*}
Under the assumption of Theorem, we have the uniform bound
\begin{equation*}
\sup_{(t,x)\in [0,1]\times M} ||\nabla^2\Phi_t(x)|| <+\infty,
\end{equation*}
 which allows us to mimic the construction of section 5 in \cite{AG}, 
 so that we get that ${\mathcal P}_{\mathcal D}$ converges as ${\mathcal D}$
 becomes finer and finer, with $|{\mathcal D}|=\max_{i} |t_i-t_{i-1}|\ra 0$.
\end{proof}

\vskip 2mm
As a result of \eqref{eq5.10}, we have as in \cite{AG} the following property: 

\begin{proposition}\label{prop5.6} Let $\{\nabla\Psi_t; t\in [0,1]\}$ be given in Proposition \ref{prop5.5}, then
\begin{equation}\label{eq5.11}
\frac{d}{dt}||\nabla\Psi_t||_{c_t}^2 =0.
\end{equation}
\end{proposition}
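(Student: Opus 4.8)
The plan is to transport the $t$-dependence of the measure onto the flow $U_{0,t}$ and then exploit that parallel translation on $M$ is a linear isometry, together with the defining equation \eqref{eq5.10}. Since $c_t=(U_{0,t})_\#c_0$ by the standing assumption of Proposition \ref{prop5.5}, we have
\begin{equation*}
||\nabla\Psi_t||_{c_t}^2=\int_M \bigl|\nabla\Psi_t(U_{0,t}(x))\bigr|_{T_{U_{0,t}(x)}M}^2\, dc_0(x).
\end{equation*}
First I would justify differentiating under the integral sign: by the $C^{1,2}$-hypothesis on $(t,x)\ra\Phi_t(x)$ the flow $(t,x)\ra U_{0,t}(x)$ is $C^1$ in $t$ with $\frac{d}{dt}U_{0,t}(x)=\nabla\Phi_t(U_{0,t}(x))$, and combined with the regularity of $t\ra\nabla\Psi_t$ furnished by the construction in Proposition \ref{prop5.5} and the compactness of $M$, the integrand and its $t$-derivative are uniformly bounded, so the derivative may be taken inside the integral.

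Next, fix $x$ and $t$ and set $y=U_{0,t}(x)$. The flow property $U_{0,t+\eps}=U_{t,t+\eps}\circ U_{0,t}$ gives $\nabla\Psi_{t+\eps}(U_{0,t+\eps}(x))=\nabla\Psi_{t+\eps}(U_{t,t+\eps}(y))$. Let $\tau_\eps$ be the parallel translation along $s\ra U_{t,t+s}(y)$; since $\tau_\eps$ is a linear isometry with $\tau_0=\mathrm{Id}$, we may compute the $\eps$-derivative of the norm in the fixed space $T_yM$ after applying $\tau_\eps^{-1}$:
\begin{equation*}
\frac{d}{d\eps}\Big|_{\eps=0}\bigl|\nabla\Psi_{t+\eps}(U_{0,t+\eps}(x))\bigr|^2
=2\Bigl\langle \nabla\Psi_t(y),\ \lim_{\eps\ra 0}\frac{\tau_\eps^{-1}\nabla\Psi_{t+\eps}(U_{t,t+\eps}(y))-\nabla\Psi_t(y)}{\eps}\Bigr\rangle.
\end{equation*}
By \eqref{eq5.9} the limit on the right equals $\bigl(\frac{d}{dt}\nabla\Psi_t+\nabla_{\nabla\Phi_t}\nabla\Psi_t\bigr)(y)$, where $\frac{d}{dt}\nabla\Psi_t$ is the pointwise time derivative of the $TM$-valued map. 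Integrating over $x$ against $c_0$ and pushing forward by $U_{0,t}$,
\begin{equation*}
\frac{d}{dt}||\nabla\Psi_t||_{c_t}^2=2\int_M \Bigl\langle \nabla\Psi_t,\ \frac{d}{dt}\nabla\Psi_t+\nabla_{\nabla\Phi_t}\nabla\Psi_t\Bigr\rangle\, dc_t.
\end{equation*}

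To conclude I would invoke the parallel equation directly: \eqref{eq5.10} says precisely that $\Pi_{c_t}\bigl(\frac{d}{dt}\nabla\Psi_t+\nabla_{\nabla\Phi_t}\nabla\Psi_t\bigr)=0$ in $L^2(c_t)$, i.e.\ the vector $\frac{d}{dt}\nabla\Psi_t+\nabla_{\nabla\Phi_t}\nabla\Psi_t$ is orthogonal to $\TT_{c_t}$ inside $L^2(M,TM,c_t)$. Since $\nabla\Psi_t\in\TT_{c_t}$, the integral above vanishes, which is \eqref{eq5.11}.

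The step I expect to be delicate is not the algebraic identity but the regularity bookkeeping behind it: one needs that $t\ra\nabla\Psi_t(x)$ is genuinely differentiable, that $\frac{d}{dt}\nabla\Psi_t$ defines a bona fide (e.g.\ $L^2(c_t)$) vector field on $M$, and that differentiation commutes with $\int_M$ — this is exactly where the hypotheses of Proposition \ref{prop5.5}, namely the uniform Hessian bound $\sup_{(t,x)}||\nabla^2\Phi_t(x)||<+\infty$ and the convergence of the discrete parallel transports ${\mathcal P}_{\mathcal D}$, are used. If one prefers to bypass pointwise covariant derivatives, an alternative is to differentiate $t\ra F_{|\nabla\Psi_t|^2}(c_t)$ through the continuity equation \eqref{eq1.2} for $c_t$ with velocity $\nabla\Phi_t$ and the identity $\langle\nabla|\nabla\Psi_t|^2,\nabla\Phi_t\rangle=2\langle\nabla_{\nabla\Phi_t}\nabla\Psi_t,\nabla\Psi_t\rangle$, which yields the same integrand at the cost of assuming $\Psi_t\in C^2$; I would adopt whichever route keeps the technical overhead lightest.
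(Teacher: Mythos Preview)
Your argument is correct and follows essentially the same route as the paper: pull back by the flow, use that parallel translation $\tau_\eps$ is an isometry to compute the time derivative of the squared norm as twice the inner product of $\nabla\Psi_t$ against the difference quotient in \eqref{eq5.10}, and conclude by orthogonality since $\nabla\Psi_t\in\TT_{c_t}$. The only cosmetic difference is that the paper pulls back to $c_t$ via $U_{t,t+\eps}$ and factors the difference $|a|^2-|b|^2=\langle a-b,a\rangle+\langle b,a-b\rangle$ directly, whereas you pull back all the way to $c_0$ via $U_{0,t}$ and differentiate under the integral; after your push-forward step the two computations coincide.
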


\begin{proof} We have $c_{t+\eps}=(U_{t, t+\eps})_\#c_t$, and
\begin{equation*}
\int_M|\nabla\Psi_{t+\eps}(x)|^2\, dc_{t+\eps}(x)=\int_M |\nabla\Psi_{t+\eps}(U_{t,t+\eps}(x))|^2\, dc_t(x).
\end{equation*}
Therefore

\begin{equation*}
\begin{split}
||u_{t+\eps}||_{\TT_{t+\eps}}^2-||u_t||_{\TT_{c_t}}^2
&=\int_M \Bigl[|\tau_\eps^{-1}\nabla\Psi_{t+\eps}(U_{t, t+\eps}(x))|^2- |\nabla\Psi_t(x)|^2\Bigr]\, dc_t(x)\\
&\hskip -6mm =\int_M \Bigl\langle \tau_\eps^{-1}\nabla\Psi_{t+\eps}(U_{t, t+\eps}(x))-\nabla\Psi_t(x),
 \tau_\eps^{-1}\nabla\Psi_{t+\eps}(U_{t, t+\eps}(x))\Bigr\rangle\, dc_t(x)\\
 &\hskip -6mm + \int_M \Bigl\langle \nabla\Psi_t(x),\ \tau_\eps^{-1}\nabla\Psi_{t+\eps}(U_{t, t+\eps}(x))-\nabla\Psi_t(x)
\Bigr\rangle\, dc_t(x).
\end{split}
\end{equation*}
It follows that 
\begin{equation*}
\frac{d}{dt}||\nabla\Phi_t||_{c_t}^2 =2\int_M \Bigl\langle \lim_{\eps \ra 0}\frac{\tau_\eps^{-1}\nabla\Psi_{t+\eps}(U_{t, t+\eps}(x))-\nabla\Psi_t(x)}{\eps}, \nabla\Psi_t(x) \Bigr\rangle\, dc_t(x)=0.
\end{equation*}

\end{proof}

\vskip 2mm
In what follows, we will relaxe a bit conditions in Proposition \ref{prop5.5}. We return to the situation in Theorem \ref{th5.1}. 
Let $\{c_t; t\in [0,1]\}$ be an absolutely curve in $\P_{\div}(M)$ satisfying conditions in Theorem \ref{th5.1}, set 
\begin{equation*}
\frac{d^Ic_t}{dt}=V_{\Phi_t}.
\end{equation*}
If furthermore $(t,x)\ra \nabla^2\Phi_t(x)$ is continuous, according to the the construction, the extension $\tilde\Phi(\mu, x)$ 
of  $(t,x)\ra \nabla^2\Phi_t(x)$ obtained in \eqref{eq5.3} satisfies $(\mu, x)\ra \nabla^2\tilde\Phi(\mu, x)$ is continuous. In particular,
the condition \eqref{eq2.4}

\begin{equation*}
\sup_{(\mu, x)\in \P_2(M)\times M}||\nabla^2\tilde\Phi(\mu, x)||<+\infty,
\end{equation*}
holds. By theorem \ref{th2.2}, there exists a solution $(\mu_t, U_t)$ to the following Mckean-Vlasov equation

\begin{equation*}
 \frac{dU_{t}(x)}{dt}=\nabla\tilde\Phi(\mu_t, U_{t}(x)),\quad U_{0}(x)=x,
\end{equation*}
with $\dis \mu_t=(U_t)_\# c_0$ which solves the ODE on $\P_2(M)$:
\begin{equation}\label{eq5.12}
\frac{d^I\mu_t}{dt}=V_{{\tilde\Phi}(\mu_t,\cdot)}.
\end{equation}

\begin{theorem} If the ODE \eqref{eq5.12} has the unique solution, then for each $V_{\Psi_0}\in \TT_{c_0}$, 
there is a vector field $\{V_{\Psi_t}\in \TT_{c_t};\ t\in [0,1]\}$ along $\{c_t;\ t\in [0,1]\}$ such that 
\begin{equation*}
\frac{d}{dt}||\nabla\Psi_t||_{c_t}^2 =0.
\end{equation*}
holds in $L^2(c_t)$.

\end{theorem}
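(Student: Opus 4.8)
The plan is to reduce this theorem to Proposition \ref{prop5.5} by exhibiting the abstract situation there as a special case of the present one, while carefully tracking where the hypothesis ``ODE \eqref{eq5.12} has a unique solution'' enters. First I would recall the setup just before the statement: starting from the curve $\{c_t;\ t\in[0,1]\}$ with $\frac{d^Ic_t}{dt}=V_{\Phi_t}$ and $(t,x)\mapsto\nabla^2\Phi_t(x)$ continuous, Theorem \ref{th5.1} produces an extension $\tilde\Phi(\mu,x)$ on $\P_2(M)\times M$ whose Hessian in $x$ is again jointly continuous, hence uniformly bounded in $(\mu,x)$; Theorem \ref{th2.2} then yields a flow $(t,x)\mapsto U_t(x)$ solving the McKean--Vlasov equation $\frac{dU_t(x)}{dt}=\nabla\tilde\Phi(\mu_t,U_t(x))$ with $\mu_t=(U_t)_\#c_0$, so that $\mu_t$ solves \eqref{eq5.12}. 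The uniqueness hypothesis guarantees $\mu_t=c_t$ for all $t$, which is the crucial identification: it means the curve $c_t$ is \emph{itself} induced by a flow of diffeomorphisms driven by the $C^{1,2}$ function $\tilde\Phi$, which is exactly the hypothesis of Proposition \ref{prop5.5}.

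Once this identification is in hand, I would invoke Proposition \ref{prop5.5} directly: with $\Phi$ there taken to be $\tilde\Phi$ and $U_{s,t}$ the two-parameter flow built from $\tilde\Phi$ (the one-parameter flow $U_t$ above extended to arbitrary initial times, well-defined because $\nabla^2\tilde\Phi$ is uniformly bounded), for any $u_0=\nabla\Psi_0\in\TT_{c_0}$ there is a unique vector field $u_t=\nabla\Psi_t\in\TT_{c_t}$ along $\{c_t;\ t\in[0,1]\}$ satisfying the parallel-transport equation \eqref{eq5.10}. Then Proposition \ref{prop5.6}, which applies verbatim to the output of Proposition \ref{prop5.5}, gives $\frac{d}{dt}\|\nabla\Psi_t\|_{c_t}^2=0$. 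This chain---Theorem \ref{th5.1} $\to$ Theorem \ref{th2.2} $\to$ uniqueness identifies $\mu_t=c_t$ $\to$ Proposition \ref{prop5.5} $\to$ Proposition \ref{prop5.6}---is the whole argument.

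The step I expect to be the main obstacle is the passage from the one-parameter flow $U_t=U_{0,t}$ produced by Theorem \ref{th2.2} to the genuine two-parameter cocycle $U_{s,t}$ with $U_{s,s}=\mathrm{id}$ and $U_{s,t}\circ U_{r,s}=U_{r,t}$ that Proposition \ref{prop5.5} takes as given, and to the regularity $(t,x)\mapsto\Phi_t(x)$ being $C^{1,2}$ along this reconstructed flow. Theorem \ref{th2.2} only asserts existence of continuous maps $U_t$; to run the construction of \cite{AG} one wants each $U_{s,t}$ to be a diffeomorphism with the estimate $d_M(U_{s,t}(x),U_{s,t}(y))\le e^{C_2(t-s)}d_M(x,y)$ and its inverse controlled similarly, which does follow from \eqref{eq2.16} applied on each subinterval together with the uniform Hessian bound \eqref{eq2.4} for $\tilde\Phi$, but it requires re-solving the McKean--Vlasov system from each intermediate time $s$ and knowing---again by the uniqueness hypothesis---that these restarted solutions are consistent, i.e. that $(U_{s,t})_\#c_s=c_t$. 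I would handle this by noting that $\mu\mapsto\tilde\Phi(\mu,\cdot)$ restricted to any such curve is fixed, so the restarted problems have the same coefficient and the cocycle property is forced by uniqueness; the one genuinely delicate point, which I would flag explicitly rather than belabor, is verifying that the extension $\tilde\Phi$ from Theorem \ref{th5.1} has enough joint regularity in $(t,x)$ (not merely in $(\mu,x)$) for the $s$-derivative of $\Psi_t$ in \eqref{eq5.6}--\eqref{eq5.8} to be interpreted as in \eqref{eq5.9}. With those points granted, the conclusion is immediate from Propositions \ref{prop5.5} and \ref{prop5.6}.
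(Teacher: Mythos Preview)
Your proposal is correct and follows essentially the same route as the paper: identify $c_t$ with the McKean--Vlasov solution $\mu_t$ via the uniqueness hypothesis on \eqref{eq5.12}, then invoke Propositions \ref{prop5.5} and \ref{prop5.6}. The paper's proof is terser and does not pause over the one-parameter versus two-parameter flow issue you flag, but the architecture is identical.
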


\begin{proof}  Note that $\dis \nabla\tilde\Phi(c_t, x)=\nabla\Phi_t(x)$, then $\dis V_{\tilde\Phi(c_t,\cdot)}=V_{\Phi_t}$.
The curve $\{c_t; \ t\in [0,1]\}$ is therefore a solution to
\begin{equation*}
\frac{d^Ic_t}{dt}=V_{\Phi_t}=V_{\tilde\Phi(c_t,\cdot)}.
\end{equation*}
Under the assumption of uniqueness of solution to \eqref{eq5.12}, we get that $c_t=\mu_t$ for $t\in [0, 1]$. Now 
by arguments in the proof of Propositions \ref{prop5.5} and \ref{prop5.6}, we obtain the result.

\end{proof}

\begin{remark}
The parallel translations along diffusion paths on the Wasserstein space are discussed in a forthcoming paper \cite{DingFang}.
\end{remark}

\vskip 4mm
 {\bf Acknowledgement:} This work has been prepared in a joint PhD program between the
  Institute of Applied Mathematics, Academy of Mathematics and Systems Science (Beijing, China) and 
  the Institute of Mathematics of Burgundy, University of Burgundy (Dijon, France), 
  the first named author is grateful to the hospitality of these two institutions, the financial support of 
  China Scholarship Council is particularly acknowledged.

\end{document}